\newcommand{\truepar}{\bigskip \noindent}
\newcommand{\VV}{\mathcal{V}}
\newcommand{\PP}{\mathcal{H}}
\newcommand{\PL}{\mathbb{P}(\mathcal{P})}
\newcommand{\LL}{\mathbb{P}(\mathcal{L})}
\newcommand{\LLL}{\mathcal{L}}
\newcommand{\QQ}{\mathcal{Q}}
\newcommand{\s}{\mathfrak{s}}
\newcommand{\fa}{\mathfrak{a}}
\newcommand{\lspan}[1]{\langle{#1}\rangle}
\newcommand{\lspann}[1]{\text{span}\{ {#1}\} }
\newtheoremstyle{dotless}{6pt}{18pt}{}{}{\bfseries}{.}{\newline}{}
\theoremstyle{dotless}
\newtheorem{thm}{Theorem}
\newtheorem{defi}[thm]{Definition}
\newtheorem{lem}[thm]{Lemma}
\newtheorem{ex}[thm]{Example}
\newtheorem{rem}[thm]{Remark}
\newtheorem{prop}[thm]{Proposition}
\newtheorem{cor}[thm]{Corollary}
\newtheorem{fact}[thm]{Fact}
\newcommand{\changefont}[3]{\fontfamily{#1} \fontseries{#2} \fontshape{#3} \selectfont}
\definecolor{UniBlue}{RGB}{83,121,170}
\title{Discrete cyclic systems and circle congruences}
\author{Udo Hertrich-Jeromin and Gudrun Szewieczek}
\begin{document}
\maketitle
\color{black}
\noindent
\bibliographystyle{plain} 
\begin{center}
\begin{minipage}{13cm}\small
\textbf{Abstract.} We discuss integrable discretizations of
3-dimensional cyclic systems, that is, orthogonal coordinate
systems with one family of circular coordinate lines. In
particular, the underlying circle congruences are
investigated in detail, and characterized by the existence of
a certain flat connection. Within the developed framework,
discrete cyclic systems with a family of discrete flat fronts
in hyperbolic space and discrete cyclic systems, where all
coordinate surfaces are discrete Dupin cyclides,
are investigated.
\end{minipage}
\vspace*{0.5cm}\\\begin{minipage}{13cm}\small
\textbf{MSC 2020.}
 53A70 (Primary) $\cdot$ 53A31 $\cdot$ 53A35
\end{minipage}
\vspace*{0.5cm}\\\begin{minipage}{13cm}\small
\textbf{Keywords.}
 discrete differential geometry; Lie sphere geometry;
 M\"obius geometry; orthogonal coordinate system;
 cyclic system; cyclic circle congruence;
 normal line congruence; Dupin cyclide; discrete flat front
\end{minipage}
\end{center}

\section{Introduction}
\noindent In \cite{rib_cyclic} Ribaucour investigated circle
congruences that admit a 1-parameter family of orthogonal
surfaces and called those congruences \emph{cyclic} (also
called \emph{normal}).
The main motivations to
study these special circle congruences seem to be twofold:
firstly, a surface family orthogonal to a cyclic circle
congruence gives rise to a special orthogonal coordinate
system (\emph{cyclic system}), where the orthogonal
trajectories of one family are circular. As a consequence,
two families of coordinate surfaces then consist of channel
surfaces. Examples are provided by orthogonal systems with
a family of surfaces that are parallel in a fixed space
form (cf \cite{cyclic_guichard_eth, salkowski}) and
special cyclidic coordinate systems (also called totally
cyclic), where all coordinate surfaces are Dupin cyclides
\cite{darboux_ortho, salkowski, totallycyclic}. Cyclic
systems widely used in physics include rotational systems
\cite{MR947546}, as for example, spherical and toroidal
coordinates.

Secondly, cyclic circle congruences can be employed to
construct (families of) surfaces of various special
types by imposing further (geometric) conditions on the
cyclic circle congruence. Amongst them are, for example,
pseudospherical surfaces related by Bianchi transformations
\cite{darboux_ortho, MR0115134} or parallel families
of flat fronts in hyperbolic space \cite{MR2652493,
MR2196639}.
The former example generalizes to a remarkable class
of cyclic circle congruences given by curved flats in
the space of circles
(so-called flat spherical or hypercyclic systems) that come
with an orthogonal family of Guichard surfaces \cite{MR239516}.
Higher dimensional analogues lead to 3-dimensional conformally
flat hypersurfaces \cite{MR1421285} and, more generally,
M\"obius flat hypersurfaces \cite{csg_45, curvedflats}.

\truepar
An integrable discretization of orthogonal coordinate
systems was given in 
\cite{org_principles, orth_nets_clifford, ddg_book, orth_cds}
where those were introduced as higher-dimensional circular
(principal) nets.
The main goal of the present work is to explore discrete
counterparts of cyclic circle congruences and their associated
cyclic coordinate systems
(see Def \ref{def_cyclic_system}
 and \ref{def_cyclic_congruence}),
based on this definition.
In this way, we anticipate
to pave the way for further studies in this context,
inspired by the rich smooth theory as sketched above.
For example, based on the observations in Subsection
 \ref{subsect_flat_fronts} we shall investigate relations
 of various approaches to flat fronts in hyperbolic
 space and, in particular, prove the existence of a
 Weierstrass-type representation for discrete flat fronts
 in \cite{disc_flat_front}.
Moreover, since the established theory naturally generalizes
 to higher-dimensional systems, it shall lead to discrete
 notions for 3-dimensional conformally flat hypersurfaces
 and M\"obius flat hypersurfaces as orthogonal surfaces of
 discrete cyclic systems stemming from discrete flat fronts.
 
\truepar
Another main goal of this paper is to further examine,
and promote, the use of discrete connections that are given
in the simplest way possible:
by the ``reflections'' of the underlying ambient geometry.
In our case,
these will be Lie and M-Lie inversions
(see Def \ref{def_M-Lie})
that can be used to generate cyclic circle congruences and
their orthogonal nets in an efficient way.
In a sometimes more implicit way, such connections have
been used in the theory of discrete orthogonal systems
for a long time,
see~\cite{org_principles, orth_nets_clifford,
 ddg_book, bcjpr, lin_weingarten, disc_cmc}.
Here, we aim to present a more explicit treatment and,
in particular, to explicitely investigate and employ
the properties of connections built from Lie and M-Lie
inversions in the context of cyclic systems.
In this way, we hope to contribute to a methodologically
systematic and transparent approach to the field.

\truepar
The paper is structured as follows. After an introductory
Section \ref{section_prelim} on basic concepts and
facts on circles and spheres that will be essential
for what follows, the main notions of the text are presented in
Section \ref{section_cyclic_systems}.
Here we approach the discretization of smooth cyclic systems
from two different angles:
firstly, we
consider discrete triply orthogonal systems that contain
two coordinate surface families of discrete channel
surfaces \cite{discrete_channel} and, therefore, have a
family of circular orthogonal trajectories. In Subsection
\ref{subsect_cyclic_congruence}, we then take a different point
of view: we start from a discrete 2-dimensional family
of circles and  investigate under what conditions there
exists a family
of orthogonal discrete Legendre maps that gives rise to a
discrete cyclic system.

We then show that the proposed discretizations reflect
properties that are well-known from the smooth theory,
such as a
constant cross-ratio of four orthogonal surfaces of a
cyclic circle congruence, and
the existence of discrete Ribaucour transformations between any
two orthogonal surfaces of a discrete cyclic system
(see Cor \ref{cor_const_cross} and \ref{cor_adjacent_rib}).

\truepar
In Section \ref{sect_from_rib} we pursue Ribaucour's original
approach and consider discrete cyclic circle congruences
associated to a Ribaucour pair of discrete Legendre maps.
In particular, we demonstrate that the circles that
intersect the spheres of a Ribaucour sphere congruence
in the point spheres of the two
envelopes orthogonally constitute a discrete cyclic circle
congruence (cf Thm \ref{thm_associated_rib_cyclic}).

As an application of the developed theory, we then
investigate discrete cyclic circle congruences constructed
from special discrete Ribaucour pairs, where at least one
of the initial nets is totally umbilic.
In this context we will discuss
parallel families of discrete flat fronts in hyperbolic space
as described in \cite{MR2946924, disc_sing}, and discrete
cyclic systems where all coordinate surfaces are discrete Dupin
cyclides (see $\S$\ref{subsect_flat_fronts} and
$\S$\ref{subsect_dupin_cyclides}).

\truepar
The concepts used and discussed throughout this work,
such as circles, orthogonal coordinate systems,
are well defined in M\"obius geometry but not in Lie sphere
geometry.
Nevertheless, we will
work in a Lie sphere geometric setup, where we fix a
M\"obius geometry as subgeometry. This will enable us to
use the elegant Lie sphere geometric descriptions for Dupin
cyclides and discrete channel surfaces
\cite{discrete_channel}
that will play a key role in our investigations.
Furthermore, enhancing methods developed
in~\cite{rib_families}, this setup enables us
to characterize discrete cyclic circle congruences by the
existence of M-Lie inversions that interchange adjacent
circles and induce a flat connection for the congruence
(cf Thm \ref{thm_flat_connection}). These maps then
provide an efficient way to construct discrete Legendre maps
orthogonal to the circle congruence and the other coordinate
surfaces of the associated discrete cyclic systems.

\truepar{\sl Acknowledgements\/}.
We would like to thank our colleagues and friends
 Joseph Cho,
 Wayne Rossman,
and
 Mason Pember
for enjoyable and helpful discussions around the subject area.
We also gratefully acknowledge financial support
of the work from
 the Austrian Science Fund FWF
through the
 Joint Project I3809 ``Geometric shape generation''.

\section{Preliminaries}\label{section_prelim}
\noindent Throughout this paper we shall work in a M\"obius
geometry, considered as a subgeometry of Lie sphere
geometry. In this section we will sketch basic concepts of
this setup and will formulate various facts on circles that
will become useful later in the text.
For more details or proofs the interested reader is referred
to the exhaustive literature in this area;
see, for example, the surveys \cite{blaschke} and
\cite{book_cecil}.

\truepar
We shall exploit the hexaspherical coordinate model of
Lie sphere geometry as introduced by Lie~\cite{L1872}
and consider the
6-dimensional vector space $\mathbb{R}^{4,2}$ endowed with a
metric of signature $(4,2)$. The \emph{projective light cone}
will be denoted by
\begin{equation*}
 \LL :
  = \{ \lspann{\mathfrak{s}} \subset \mathbb{R}^{4,2} \ | \
  \lspan{\mathfrak{s}, \mathfrak{s}}=0 \}
  \subset \mathbb{P}(\mathbb{R}^{4,2}) 
\end{equation*}
and represents the set of oriented 2-spheres.
Two spheres $s_1$ and $s_2$
are in oriented contact if and only if any two corresponding
vectors $\mathfrak{s}_1$ and $\mathfrak{s}_2$ in the light cone
are orthogonal.
Hence the set of contact elements,
that is, of pencils of $2$-spheres in oriented contact,
is represented by the set of lines in the Lie quadric $\LL$ or,
equivalently,
the subset of null $2$-planes in $\mathbb{R}^{4,2}$ of the
Grassmannian.

\truepar
\textbf{Convention.}\,Throughout this work, homogeneous
coordinates of elements in the projective space
$\mathbb{P}(\mathbb{R}^{4,2})$ will be denoted by the
corresponding black letter; if statements hold for arbitrary
homogeneous coordinates we will use this convention without 
explicitly mentioning it.

\truepar
To obtain a M\"obius geometry of oriented spheres as
a subgeometry in this setup, we fix a point sphere complex
$\mathfrak{p} \in \mathbb{R}^{4,2}$, $\lspan{\mathfrak{p},
\mathfrak{p}}=-1$, and recover points, that is, spheres with
radius zero, as elements in
\begin{equation*}
 \mathbb{P}(\mathcal{P}) :
  = \mathbb{P}({\mathcal L}\cap\{\mathfrak{p}\}^\perp).
\end{equation*}
The group of M\"obius transformations is then provided by
all Lie sphere transformations that preserve the point
sphere complex $\mathfrak{p}$. In particular, those preserve
the (unoriented) angle $\varphi$ between two spheres $u,v
\in \LL$, given by
\begin{equation}\label{equ_intersection_angle}
 \cos \varphi
  = 1 - \frac{
   \lspan{\mathfrak{u},\mathfrak{v}}
   \lspan{\mathfrak{p},\mathfrak{p}}
   }{
   \lspan{\mathfrak{u},\mathfrak{p}}
   \lspan{\mathfrak{v},\mathfrak{p}}
   }.
\end{equation}
Furthermore, by choosing a vector $\mathfrak{q}\in
\mathbb{R}^{4,2}\setminus \{ 0 \}$, $\lspan{\mathfrak{p},
\mathfrak{q}}=0$, we distinguish
a \emph{quadric of constant curvature}
\begin{equation*}
 \QQ :
  = \{ \mathfrak{n} \in \mathbb{R}^{4,2} \ | \
  \lspan{\mathfrak{n}, \mathfrak{n}}=0, \lspan{\mathfrak{n},
   \mathfrak{q}}=-1, \lspan{\mathfrak{n}, \mathfrak{p}}=0 \},
\end{equation*}
with constant sectional curvature
$-\lspan{\mathfrak{q}, \mathfrak{q}}$,
and obtain its complex of \emph{hyperplanes}
\begin{equation*}
 \PP :
  = \{ \mathfrak{n} \in \mathbb{R}^{4,2} \ | \
   \lspan{\mathfrak{n}, \mathfrak{n}}=0, \lspan{\mathfrak{n},
   \mathfrak{q}}=0, \lspan{\mathfrak{n}, \mathfrak{p}}=-1 \}.
\end{equation*}

\bigskip 

\noindent In Lie sphere geometry,
any element $a \in \mathbb{P}(\mathbb{R}^{4,2})$ defines
a \emph{linear sphere complex}
$\mathbb{P}(\mathcal{L}\cap\{a\}^\perp)$,
that is, a 3-dimensional family of 2-spheres.
We distinguish three types of linear sphere complexes:
if $\lspan{ \mathfrak{a}, \mathfrak{a}} = 0$, the complex
is called \emph{parabolic} and consists of all spheres
that are in oriented contact with the sphere represented
by~$a$. If $\lspan{ \mathfrak{a}, \mathfrak{a}} < 0$, we
say that the complex is \emph{hyperbolic} and for $\lspan{
\mathfrak{a}, \mathfrak{a}} > 0$ we obtain an \emph{elliptic}
linear sphere complex.

In M\"obius geometry, that is, for a fixed point sphere
complex $\mathfrak{p}$, the latter has a beautiful geometric
characterization: the linear sphere complex then contains
all spheres that intersect the two spheres (that coincide
up to orientation)
\begin{equation}\label{equ_sphere_compl}
 \mathfrak{s}_a^\pm \in \lspann{\mathfrak{a}, \mathfrak{p}}
\end{equation}
at the constant angle
\begin{equation*}
 \cos^2 \varphi = \frac{K}{K-1}, \ \ \text{where } \ 
 K = \frac{\lspan{\mathfrak{a}, \mathfrak{p}}^2}
  {\lspan{\mathfrak{a}, \mathfrak{a}}
   \lspan{\mathfrak{p}, \mathfrak{p}}}.
\end{equation*}
In particular, the spheres in a linear sphere complex
intersect the spheres $s_a^\pm$ orthogonally if and only if
$\lspan{\mathfrak{a}, \mathfrak{p}}=0$.

Conversely, suppose that $s \in \LL$ is a sphere,
$\mathfrak{s}\not\perp\mathfrak{p}$;
then the elliptic linear sphere complex that contains
all spheres intersecting $s$ orthogonally is given by
\begin{equation}\label{equ_orth_complex}
 \mathfrak{a} :
  = \mathfrak{s}+\lspan{\mathfrak{s},\mathfrak{p}}\mathfrak{p}.
\end{equation}
Hence, a sphere $t \in \LL$ intersects the sphere $s$
orthogonally if and only if $\lspan{\mathfrak{t},
\mathfrak{a}}=0$.

\truepar
Any elliptic and hyperbolic linear sphere complex may be
used to define a reflection:
let $a \in \mathbb{P}(\mathbb{R}^{4,2})$, $\lspan{
\mathfrak{a}, \mathfrak{a}} \neq 0$, then the \emph{Lie
inversion with respect to the linear sphere complex
$\mathbb{P}(\mathcal{L}\cap\{a\}^\perp)$} is given by
\begin{equation*}
 \sigma_a:\mathbb{R}^{4,2} \rightarrow \mathbb{R}^{4,2}, \ \ 
 \mathfrak{r} \mapsto \sigma_a(\mathfrak{r}) :
  = \mathfrak{r}-\frac{2\lspan{\mathfrak{r}, \mathfrak{a}}}
   {\lspan{\mathfrak{a}, \mathfrak{a}}}\mathfrak{a}.
\end{equation*}
Any Lie inversion is an involution that maps spheres
to spheres and preserves oriented contact between
spheres. Moreover, we emphasize that a Lie inversion $\sigma_a$
preserves all elements that lie in the corresponding linear
sphere complex
$\mathbb{P}(\mathcal{L}\cap\{a\}^\perp)$. 
In particular, Lie inversions that preserve the point
sphere complex will play a crucial role:
\begin{defi}\label{def_M-Lie}
A Lie inversion $\sigma_a$ that preserves the point
sphere complex, $\lspan{\mathfrak{p}, \mathfrak{a}}=0$,
will be called an \emph{M-Lie inversion}.
\end{defi}

\truepar
Clearly, any M-Lie inversion is a M\"obius
transformation and generalizes the concept of M\"obius
inversions: if $a$ determines an elliptic linear sphere
complex, the M-Lie inversion becomes a M\"obius inversion,
that is, it provides a reflection in the spheres $s^\pm_a$
as given in (\ref{equ_sphere_compl}). However, if the
corresponding linear sphere complex is hyperbolic,
it can be thought of as an antipodal map.

Note that the M-Lie inversion $\sigma_\mathfrak{p}$ with
respect to the point sphere complex $\mathfrak{p}$ reverses
the orientation of all spheres.

\truepar
Let $s_1, s_2 \in \LL$ be two spheres that are not in
oriented contact and $\mathfrak{s}_1, \mathfrak{s}_2 \in \LLL$
fixed homogeneous coordinates, then the Lie inversion with
respect to $\mathfrak{a}:=\mathfrak{s}_1-\mathfrak{s}_2$,
interchanges the spheres $s_1$ and $s_2$. However, note that
different choices for the homogeneous coordinates provide
a 1-parameter family of Lie inversions.
When $s_1$ and $s_2$ are not point spheres,
$\mathfrak{s}_1,\mathfrak{s}_2\not\perp\mathfrak{p}$,
then there is a unique M-Lie inversion, determined by
\begin{equation*}
 \mathfrak{a} :
 = \lspan{\mathfrak{s}_2, \mathfrak{p}}\mathfrak{s}_1
 - \lspan{\mathfrak{s}_1, \mathfrak{p}}\mathfrak{s}_2,
\end{equation*}
that maps $s_1$ to $s_2$ and preserves the point sphere
complex~$\mathfrak{p}$.

\truepar
For later reference, we also recall \cite{blaschke,
rib_families} that the formula for the cross-ratio of
four spheres which are pairwise related by a Lie inversion
simplifies: let $s_1, s_2 \in \LL$, $\lspan{\s_1, \s_2}\neq
0$, be two spheres that are not contained in the linear
sphere complex
$\mathbb{P}(\mathcal{L}\cap\{a\}^\perp)$,
then
\begin{align}\label{equ_formula_cross_ratio}
 \text{cr}(s_2, \sigma_a(s_2), \sigma_a(s_1), s_1)
  = 2\frac{\lspan{\s_1,\fa}\lspan{\s_2,\fa}}
   {\phantom{2}\lspan{\fa,\fa}\lspan{\s_1,\s_2}}.
\end{align}
Using appropriate M-Lie inversions, this formula can also
be used to compute the cross-ratio between four concircular
point spheres.

\subsection{Circles in this framework}
Throughout this text we will consider unoriented circles,
thus objects that belong to M\"obius geometry. Hence, in
the employed Lie geometric framework,
we again fix a point sphere complex $\mathfrak{p}$ to
distinguish a M\"obius subgeometry.
Then circles arise as special Dupin cyclides,
where one family of curvature spheres are point spheres.

A \emph{circle} $\Gamma$ is provided by an orthogonal
splitting of $\mathbb{R}^{4,2}$,
\begin{equation*}
 \Gamma = (\gamma, \gamma^\perp) \in G_{(2,1)}^{\mathcal{P}}
 \times G_{(2,1)}, 
\end{equation*}
where $G_{(2,1)}^{\mathcal{P}}$ denotes the set of all
$(2,1)$-planes that are orthogonal to the point sphere
complex~$\mathfrak{p}$. Therefore, all spheres contained
in $\gamma \in G_{(2,1)}^{\mathcal{P}}$ are point spheres,
namely, the points of the circle.
The spheres in $\gamma^\perp$
are in oriented contact with all circle points, hence,
provide a M\"obius geometric pencil of spheres.

If we additionally fix a vector $\mathfrak{q} \in
\mathbb{R}^{4,2}\setminus \{ 0 \}$, $\lspan{\mathfrak{q},
\mathfrak{p}}=0$, to distinguish a space form $\QQ$, we
obtain \emph{lines} in this space form as special circles
satisfying $\mathfrak{q} \in \gamma$.

\begin{fact}\label{fact_intersection_spheres}
If two spheres $s_1, s_2 \in \LL$ intersect
(in the M\"obius geometry given by $\mathfrak{p}$),
then their circle of intersection is given by
\begin{equation*}
 \Gamma = (\gamma, \gamma^\perp)
  \in G_{(2,1)}^{\mathcal{P}} \times G_{(2,1)},
   \ \text{where }
  \gamma^\perp:=\lspann{\mathfrak{s}_1, \mathfrak{s}_2,
   \mathfrak{p}}. 
\end{equation*}
\end{fact}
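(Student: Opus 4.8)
The plan is to show that the orthogonal splitting $(\gamma,\gamma^\perp)$ with $\gamma^\perp := \langle\mathfrak{s}_1,\mathfrak{s}_2,\mathfrak{p}\rangle$ is a well-defined circle in the sense of the definition given above, and that its points are precisely the common points of $s_1$ and $s_2$. First I would pin down the candidate point set: a point sphere $\langle\mathfrak{x}\rangle$, characterised by $\langle\mathfrak{x},\mathfrak{x}\rangle=0$ and $\langle\mathfrak{x},\mathfrak{p}\rangle=0$, lies on $s_i$ exactly when $\langle\mathfrak{x},\mathfrak{s}_i\rangle=0$. Hence the common points of $s_1$ and $s_2$ are exactly the null lines contained in $\{\mathfrak{s}_1,\mathfrak{s}_2,\mathfrak{p}\}^\perp=\gamma$, which is precisely the point set prescribed by $\Gamma$ once we know that $(\gamma,\gamma^\perp)$ is admissible. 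So the real content is the signature bookkeeping: $\gamma^\perp\in G_{(2,1)}$ and $\gamma\in G_{(2,1)}^{\mathcal{P}}$.

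The key step is a signature computation for $\gamma^\perp=\langle\mathfrak{s}_1,\mathfrak{s}_2,\mathfrak{p}\rangle$. Writing the Gram matrix of $(\mathfrak{s}_1,\mathfrak{s}_2,\mathfrak{p})$ and using $\langle\mathfrak{s}_1,\mathfrak{s}_1\rangle=\langle\mathfrak{s}_2,\mathfrak{s}_2\rangle=0$ together with $\langle\mathfrak{p},\mathfrak{p}\rangle=-1$, one finds its determinant equals $\langle\mathfrak{s}_1,\mathfrak{s}_2\rangle\big(\langle\mathfrak{s}_1,\mathfrak{s}_2\rangle+2\langle\mathfrak{s}_1,\mathfrak{p}\rangle\langle\mathfrak{s}_2,\mathfrak{p}\rangle\big)$. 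Substituting the intersection angle $\varphi$ of $s_1$ and $s_2$ from (\ref{equ_intersection_angle}), this simplifies to $-\langle\mathfrak{s}_1,\mathfrak{p}\rangle^2\langle\mathfrak{s}_2,\mathfrak{p}\rangle^2\sin^2\varphi$. Since $s_1,s_2$ are genuine (non-point) spheres we have $\langle\mathfrak{s}_i,\mathfrak{p}\rangle\neq0$, and since they intersect transversally $\varphi\in(0,\pi)$, so $\sin\varphi\neq0$. Thus the determinant is strictly negative, whence $\gamma^\perp$ is a non-degenerate $3$-plane and, in particular, $\mathfrak{s}_1,\mathfrak{s}_2,\mathfrak{p}$ are linearly independent.

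A negative Gram determinant leaves only the signatures $(2,1)$ and $(0,3)$; the latter is excluded because $\mathfrak{s}_1$ is a nonzero null vector of $\gamma^\perp$, which cannot lie in a definite subspace. Hence $\gamma^\perp\in G_{(2,1)}$, and as the ambient form has signature $(4,2)$, its orthogonal complement $\gamma$ is again a $(2,1)$-plane; because $\mathfrak{p}\in\gamma^\perp$ we have $\gamma\subset\{\mathfrak{p}\}^\perp$, i.e.\ $\gamma\in G_{(2,1)}^{\mathcal{P}}$. Therefore $(\gamma,\gamma^\perp)$ is a circle, and by the first step its point set coincides with the common points of $s_1$ and $s_2$. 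I expect the main obstacle to be the determinant computation and, in particular, correctly translating the hypothesis that $s_1$ and $s_2$ intersect into the analytic non-degeneracy condition $\sin\varphi\neq0$ (equivalently $\langle\mathfrak{s}_1,\mathfrak{s}_2\rangle\notin\{0,-2\langle\mathfrak{s}_1,\mathfrak{p}\rangle\langle\mathfrak{s}_2,\mathfrak{p}\rangle\}$); once the sign of the determinant is fixed, the null-vector argument settles the signature cleanly.
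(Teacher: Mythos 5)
Your proof is correct. Note that the paper itself offers no proof of this Fact: it is stated as background material, with such proofs deferred to the standard literature cited in the preliminaries, so your argument supplies a verification the paper omits rather than paralleling an in-paper one. All steps check out. The common point spheres of $s_1$ and $s_2$ are exactly the null lines in $\gamma=\lspann{\mathfrak{s}_1,\mathfrak{s}_2,\mathfrak{p}}^\perp$, which by the paper's definition of a circle is precisely the point set of $\Gamma$; the Gram determinant of $(\mathfrak{s}_1,\mathfrak{s}_2,\mathfrak{p})$ is indeed $\lspan{\mathfrak{s}_1,\mathfrak{s}_2}\bigl(\lspan{\mathfrak{s}_1,\mathfrak{s}_2}+2\lspan{\mathfrak{s}_1,\mathfrak{p}}\lspan{\mathfrak{s}_2,\mathfrak{p}}\bigr)$, which via (\ref{equ_intersection_angle}) and $\lspan{\mathfrak{p},\mathfrak{p}}=-1$ equals $-\lspan{\mathfrak{s}_1,\mathfrak{p}}^2\lspan{\mathfrak{s}_2,\mathfrak{p}}^2\sin^2\varphi$; negative determinant rules out degeneracy and leaves signatures $(2,1)$ and $(0,3)$, the latter excluded by the null vector $\mathfrak{s}_1$; and passing to the orthogonal complement in the $(4,2)$-metric, together with $\mathfrak{p}\in\gamma^\perp$, gives $\gamma\in G_{(2,1)}^{\mathcal{P}}$ as required. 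You were also right to single out the one delicate point: ``intersect'' must be read as transversal intersection of non-point spheres ($\lspan{\mathfrak{s}_i,\mathfrak{p}}\neq 0$ and $\sin\varphi\neq 0$, equivalently $\lspan{\mathfrak{s}_1,\mathfrak{s}_2}\notin\{0,-2\lspan{\mathfrak{s}_1,\mathfrak{p}}\lspan{\mathfrak{s}_2,\mathfrak{p}}\}$); both restrictions are implicit in the hypothesis that the intersection is a circle, since a point sphere or a tangent pair of spheres meets the other in at most a single point.
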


\noindent A crucial concept in the study of cyclic systems
will be spheres and circles that intersect orthogonally. In
what follows, we summarize several useful constructions in
this realm and formulate them in the Lie sphere geometric
framework.

Firstly, note that a circle $\Gamma=(\gamma, \gamma^\perp)$
intersects a sphere $s \in \LL$ orthogonally if and only if
$s$ intersects all spheres in $\gamma^\perp$ orthogonally. In
fact, a weaker condition is sufficient to ensure orthogonality:
\begin{fact} \label{fact_two_orth}
Let $s_1, s_2 \in \LL$ be two spheres that intersect in the
circle $\Gamma = (\gamma, \gamma^\perp)$. Then, $\Gamma$
intersects a sphere $t \in \LL$ orthogonally if and only if
the sphere $t$ intersects $s_1$ and $s_2$ orthogonally.
\end{fact}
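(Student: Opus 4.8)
The plan is to reduce the statement entirely to the bilinear orthogonality condition for pairs of spheres and then exploit its linearity. First I would record, from the angle formula (\ref{equ_intersection_angle}) with $\lspan{\mathfrak{p},\mathfrak{p}}=-1$, that two (non-point) spheres $t$ and $v$ meet orthogonally precisely when
\[
 \lspan{\mathfrak{t},\mathfrak{v}} = -\lspan{\mathfrak{t},\mathfrak{p}}\lspan{\mathfrak{v},\mathfrak{p}};
\]
equivalently, with the representative $\mathfrak{a}_v := \mathfrak{v}+\lspan{\mathfrak{v},\mathfrak{p}}\mathfrak{p}$ of the orthogonal complex of $v$ from (\ref{equ_orth_complex}), the condition reads $\lspan{\mathfrak{t},\mathfrak{a}_v}=0$. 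I would then invoke the definition of circle--sphere orthogonality, namely that $\Gamma=(\gamma,\gamma^\perp)$ meets $t$ orthogonally iff $t$ meets every sphere of $\gamma^\perp$ orthogonally, together with Fact \ref{fact_intersection_spheres}, which yields $\gamma^\perp = \lspann{\mathfrak{s}_1,\mathfrak{s}_2,\mathfrak{p}}$. It is also worth noting at the outset that the spheres of $\gamma^\perp$ are genuine (non-point) spheres, so that the intersection-angle characterisation applies throughout.

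The forward implication is then immediate: since $\mathfrak{s}_1,\mathfrak{s}_2\in\gamma^\perp$ are themselves spheres of $\gamma^\perp$, orthogonality of $\Gamma$ and $t$ forces $t$ to meet both $s_1$ and $s_2$ orthogonally.

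For the converse I would take an arbitrary sphere $v\in\gamma^\perp$, write $\mathfrak{v}=\alpha\mathfrak{s}_1+\beta\mathfrak{s}_2+\delta\mathfrak{p}$, and expand $\lspan{\mathfrak{t},\mathfrak{v}}$ by bilinearity. Substituting the hypotheses $\lspan{\mathfrak{t},\mathfrak{s}_i}=-\lspan{\mathfrak{t},\mathfrak{p}}\lspan{\mathfrak{s}_i,\mathfrak{p}}$ and using $\lspan{\mathfrak{p},\mathfrak{p}}=-1$ should collapse the expansion to $-\lspan{\mathfrak{t},\mathfrak{p}}(\alpha\lspan{\mathfrak{s}_1,\mathfrak{p}}+\beta\lspan{\mathfrak{s}_2,\mathfrak{p}}-\delta)$, which is exactly $-\lspan{\mathfrak{t},\mathfrak{p}}\lspan{\mathfrak{v},\mathfrak{p}}$, i.e. the orthogonality condition for $t$ and $v$. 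Since $v\in\gamma^\perp$ was arbitrary, $t$ meets every sphere of $\gamma^\perp$ orthogonally, which is precisely $\Gamma\perp t$. The conceptual reason this works is that $\mathfrak{v}\mapsto\mathfrak{a}_v=\mathfrak{v}+\lspan{\mathfrak{v},\mathfrak{p}}\mathfrak{p}$ is the linear orthogonal projection onto $\{\mathfrak{p}\}^\perp$, annihilating the $\mathfrak{p}$-component of $\mathfrak{v}$; orthogonality to $t$ is thus a single linear condition in $\mathfrak{v}$, so verifying it on a spanning set of $\gamma^\perp$ suffices.

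I do not expect a genuine obstacle, since the content is just that circle--sphere orthogonality is governed by the linear functional $\lspan{\mathfrak{t},\cdot}$ restricted to the projected span. The one point requiring care is the bookkeeping of the $\mathfrak{p}$-terms: one must check that the contribution of the $\delta\mathfrak{p}$ summand on the left matches the $\delta$-contribution of $\lspan{\mathfrak{v},\mathfrak{p}}$ on the right, which is exactly where $\lspan{\mathfrak{p},\mathfrak{p}}=-1$ is used.
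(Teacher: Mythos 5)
Your proposal is correct and takes essentially the same route as the paper's proof: both reduce circle--sphere orthogonality to the linear conditions $\lspan{\mathfrak{t},\mathfrak{s}_i+\lspan{\mathfrak{s}_i,\mathfrak{p}}\mathfrak{p}}=0$ and extend them by bilinearity to every sphere of $\gamma^\perp=\lspann{\mathfrak{s}_1,\mathfrak{s}_2,\mathfrak{p}}$. The only difference is that you carry out explicitly the expansion the paper dismisses as ``a straightforward computation,'' and you correctly flag the side point that $\gamma^\perp$ contains no point spheres, so the angle characterisation applies throughout.
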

\begin{proof}
Suppose that the sphere $t \in \LL$ intersects $s_1$ and $s_2$
orthogonally, that is,
\begin{equation*}
 \lspan{\mathfrak{t}, \mathfrak{s}_i + \lspan{\mathfrak{s}_i,
  \mathfrak{p}}\mathfrak{p}}=0 \ \ \text{for } \ i=\{ 1,2 \}.
\end{equation*} 
Then, a straightforward computation shows that $t$ intersects
any sphere in
\begin{equation*}
 \gamma^\perp
  = \lspann{\mathfrak{s}_1, \mathfrak{s}_2, \mathfrak{p}}
\end{equation*}
orthogonally, which proves the claim.
\end{proof}

\noindent Spheres that are orthogonal to a fixed circle
satisfy the following properties, which are also illustrated
in Figure \ref{fig:scheme_cyclic}: 
\begin{fact}\label{fact_orth_cont_els}
The spheres that orthogonally intersect a circle $\Gamma$
in a fixed point $m\subset\gamma$
of the circle lie in two
contact elements $m \in f_m, \tilde{f}_m \in \mathcal{Z}$
that coincide up to orientation, that is, $\tilde{f}_m =
\sigma_{\mathfrak{p}}(f_m)$.

Moreover, for any two points $m,n\subset\gamma$
of the circle, the associated contact elements $f_m,
\tilde{f}_m, f_n$ and $\tilde{f}_n$ of orthogonal spheres
pairwise share a common sphere, that is,
\begin{equation*}
 f_m \cap f_n \neq \{ 0 \} \ \ \ \text{ or } \ \ 
 f_m \cap \sigma_{\mathfrak{p}}(f_n) \neq \{ 0 \}.
\end{equation*}
\end{fact}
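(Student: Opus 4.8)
The plan is to translate ``$t$ intersects $\Gamma$ orthogonally in the point $m$'' into two linear conditions on a homogeneous coordinate $\mathfrak{t}$ and then read off the contact elements directly. First I would invoke the characterization of orthogonality via the orthogonal sphere complex (\ref{equ_orth_complex}): a sphere $t$ is orthogonal to every sphere of $\gamma^\perp$ precisely when its complex $\mathfrak{a}_t:=\mathfrak{t}+\lspan{\mathfrak{t},\mathfrak{p}}\mathfrak{p}$ is orthogonal to the whole plane $\gamma^\perp$ (here one uses that $\LLL\cap\gamma^\perp$ spans $\gamma^\perp$), i.e.\ $\mathfrak{a}_t\in(\gamma^\perp)^\perp=\gamma$. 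Adding the incidence condition $\lspan{\mathfrak{t},\mathfrak{m}}=0$ that $t$ pass through the circle point $m$, and using $\mathfrak{m},\gamma\perp\mathfrak{p}$, this reduces the problem to
\begin{equation*}
 \mathfrak{a}_t\in\gamma\cap\{\mathfrak{m}\}^\perp,
 \qquad
 \lspan{\mathfrak{t},\mathfrak{t}}=0.
\end{equation*}

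For the first assertion I would analyze $V:=\gamma\cap\{\mathfrak{m}\}^\perp$. Since $\gamma$ has signature $(2,1)$ and $\mathfrak{m}\in\gamma$ is null, $\mathfrak{m}$ lies in the radical of $V$, so $V=\lspann{\mathfrak{m},\mathfrak{v}}$ with $\lspan{\mathfrak{m},\mathfrak{v}}=0$ and $c:=\lspan{\mathfrak{v},\mathfrak{v}}>0$ (this is the one sign I must pin down: the tangent line of the conic $\mathbb{P}(\LLL\cap\gamma)$ meets it only at $m$ and thus runs through the spacelike region of $\mathbb{P}(\gamma)$). Writing $\mathfrak{a}_t=\alpha\mathfrak{m}+\beta\mathfrak{v}$, the null condition forces $\lspan{\mathfrak{t},\mathfrak{p}}^2=\beta^2c$, so $\mathfrak{t}=\mathfrak{a}_t-\lspan{\mathfrak{t},\mathfrak{p}}\mathfrak{p}$ lies in one of the two planes
\begin{equation*}
 f_m:=\lspann{\mathfrak{m},\mathfrak{v}-\sqrt{c}\,\mathfrak{p}},
 \qquad
 \tilde f_m:=\lspann{\mathfrak{m},\mathfrak{v}+\sqrt{c}\,\mathfrak{p}}.
\end{equation*}
A short computation then checks that both are totally isotropic (hence contact elements in $\ZZ$) containing the point sphere $m$, and that $\sigma_{\mathfrak{p}}(\mathfrak{v}-\sqrt{c}\,\mathfrak{p})=\mathfrak{v}+\sqrt{c}\,\mathfrak{p}$ together with $\sigma_{\mathfrak{p}}(\mathfrak{m})=\mathfrak{m}$ gives $\tilde f_m=\sigma_{\mathfrak{p}}(f_m)$.

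For the second assertion the idea is to produce a single sphere orthogonal to $\Gamma$ at both $m$ and $n$. For distinct circle points the chord plane $\lspann{\mathfrak{m},\mathfrak{n}}\subset\gamma$ has signature $(1,1)$ (since $\lspan{\mathfrak{m},\mathfrak{n}}\neq0$), so its orthogonal complement inside $\gamma$ is spanned by a spacelike $\mathfrak{a}$ with $\lspan{\mathfrak{a},\mathfrak{m}}=\lspan{\mathfrak{a},\mathfrak{n}}=0$ and $\lspan{\mathfrak{a},\mathfrak{a}}>0$. The two orientations $\mathfrak{t}_\pm:=\mathfrak{a}\mp\sqrt{\lspan{\mathfrak{a},\mathfrak{a}}}\,\mathfrak{p}$ are, by the criterion above, orthogonal to $\Gamma$ both at $m$ and at $n$; exactly one of them has the $\mathfrak{p}$-sign matching the normalization in $f_m$ and hence lies in $f_m$. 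Being orthogonal at $n$ as well, that same sphere lies in $f_n$ or in $\tilde f_n=\sigma_{\mathfrak{p}}(f_n)$, which is precisely the claimed disjunction $f_m\cap f_n\neq\{0\}$ or $f_m\cap\sigma_{\mathfrak{p}}(f_n)\neq\{0\}$; applying $\sigma_{\mathfrak{p}}$ then propagates the conclusion to all remaining pairs among $f_m,\tilde f_m,f_n,\tilde f_n$.

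The main obstacle is not the linear algebra but the causal bookkeeping: one must correctly determine the signatures of $\gamma\cap\{\mathfrak{m}\}^\perp$ and of the chord's polar line, that is, establish $c>0$ and $\lspan{\mathfrak{a},\mathfrak{a}}>0$. Both rest on the signature $(2,1)$ of $\gamma$ and on the fact that a real circle's conic separates $\mathbb{P}(\gamma)$ into a spacelike and a timelike region. Getting these signs right is exactly what guarantees that $\mathfrak{t}_\pm$ are honest spheres and that $f_m$ and $\tilde f_m$ are genuinely distinct contact elements rather than collapsing onto the point sphere $m$.
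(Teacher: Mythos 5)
Your proof is correct, and it takes a recognizably different --- essentially dual --- route to the same signature count that the paper performs. The paper realizes $\Gamma$ as the orthogonal intersection of two spheres $s_1,s_2$, normalizes $\lspan{\mathfrak{s}_i,\mathfrak{p}}=-\lspan{\mathfrak{s}_1,\mathfrak{s}_2}=1$, and invokes Fact \ref{fact_two_orth} to identify the spheres orthogonal to $\Gamma$ at $m$ with the null vectors of $\mathcal{O}_m=\lspann{\mathfrak{s}_1+\mathfrak{p},\ \mathfrak{s}_2+\mathfrak{p},\ \mathfrak{m}}^\perp$; since $\mathcal{O}_m^\perp$ has signature $(++0)$, the null cone of $\mathcal{O}_m$ splits into two null $2$-planes swapped by $\sigma_{\mathfrak{p}}$, and the second claim is read off from the analogous span with $\mathfrak{n}$ adjoined, whose perp is a $(1,1)$-plane carrying exactly two null lines. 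You never choose $s_1,s_2$: instead you use (\ref{equ_orth_complex}) to transport the problem into the $(2,1)$-plane $\gamma$ itself, where orthogonality at $m$ reads $\mathfrak{a}_t\in\gamma\cap\{\mathfrak{m}\}^\perp$, and then solve the null condition explicitly. These are the same subspaces in disguise --- indeed $\mathcal{O}_m=(\gamma\cap\{\mathfrak{m}\}^\perp)\oplus\lspann{\mathfrak{p}}$ --- so the underlying linear algebra is equivalent; what your version buys is independence from Fact \ref{fact_two_orth} and from the auxiliary spheres, plus explicit spanning vectors $\mathfrak{m},\ \mathfrak{v}\mp\sqrt{c}\,\mathfrak{p}$ for $f_m,\tilde{f}_m$ and an explicit common sphere $\mathfrak{a}\mp\sqrt{\lspan{\mathfrak{a},\mathfrak{a}}}\,\mathfrak{p}$ obtained from the polar point of the chord $\lspann{\mathfrak{m},\mathfrak{n}}$ inside $\gamma$, in the spirit of Fact \ref{fact_orth_circle}; the paper's version buys brevity by recycling established facts. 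The two sign claims you flag as the crux are both correct and cleanly justified by orthogonal splittings of the $(2,1)$-plane $\gamma$: the tangent line $\gamma\cap\{\mathfrak{m}\}^\perp$ is degenerate with radical $\lspann{\mathfrak{m}}$, forcing the complementary direction $\mathfrak{v}$ to be spacelike, and the chord $\lspann{\mathfrak{m},\mathfrak{n}}$ has signature $(1,1)$ because two distinct null lines in a $(2,1)$-plane cannot be mutually orthogonal, forcing its polar direction $\mathfrak{a}$ to be spacelike as well.
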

\begin{proof}
Let $\Gamma= (\gamma, \gamma^\perp)$ be a circle,
given as the orthogonal intersection of two spheres
$s_1,s_2\in\LL$,
and let $\mathfrak{m}\in\gamma$ represent a point
of this circle.
Without loss of generality, we choose
homogeneous coordinates $\mathfrak{s}_i \in s_i$ such that
$\lspan{\mathfrak{s}_i, \mathfrak{p}}=-\lspan{\mathfrak{s}_1,
\mathfrak{s}_2}=1$ for $i =\{ 1, 2\}$.

Then, as a consequence of Fact \ref{fact_two_orth}, all
spheres that intersect the circle $\Gamma$ in the point $m$
orthogonally lie in the subspace
\begin{equation}\label{equ_two_contact_orth}
 \mathcal{O}_m := \lspann{\mathfrak{s}_1 + \mathfrak{p}, \  
 \mathfrak{s}_2 + \mathfrak{p}, \ \mathfrak{m}}^\perp.
\end{equation}
Since the subspace $\mathcal{O}_m^\perp$ has signature
$(++0)$, we conclude that for any point of the circle, the
sought-after orthogonally intersecting spheres lie in two
contact elements that coincide up to orientation.

Furthermore, since spheres that contain the points represented
by $\mathfrak{m}, \mathfrak{n} \in \gamma$ and that
are orthogonal to $\Gamma$ lie in the subspace
$\lspann{\mathfrak{s}_1 + \mathfrak{p}, \
\mathfrak{s}_2 + \mathfrak{p}, \ \mathfrak{m}, \ 
\mathfrak{n}}^\perp$, the second claim follows.
\end{proof}

\begin{fact}\label{fact_circles_on_sphere}
Let $\Gamma_1$ and $\Gamma_2$ be two circles and denote by
$\mathfrak{s}_i, \mathfrak{t}_i \in \gamma_i^\perp \cap \LLL$,
$i=\{1,2\}$, two spheres that determine the corresponding
circles $\Gamma_i$. Then, the two circles lie on a common
sphere if and only if the subspace
\begin{equation*}
 \mathcal{S} := \lspann{\mathfrak{s}_1, \mathfrak{s}_2,
 \mathfrak{t}_1, \mathfrak{t}_2, \mathfrak{p}}
 \subset \mathbb{R}^{4,2}
\end{equation*}
is at most 4-dimensional.
\end{fact}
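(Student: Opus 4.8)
The plan is to translate the geometric condition ``$\Gamma_1$ and $\Gamma_2$ lie on a common sphere'' into a purely linear-algebraic statement about the planes $\gamma_1^\perp$ and $\gamma_2^\perp$, and then read off $\dim\mathcal{S}$ from the Grassmann dimension formula. First I would note that a sphere $u\in\LL$ contains the circle $\Gamma_i$ precisely when every point sphere of $\Gamma_i$ is in oriented contact with $u$, that is, when $\mathfrak{u}\in\gamma_i^\perp$. By Fact \ref{fact_intersection_spheres} the two determining spheres yield $\gamma_i^\perp=\lspann{\mathfrak{s}_i,\mathfrak{t}_i,\mathfrak{p}}$, so that $\mathcal{S}=\gamma_1^\perp+\gamma_2^\perp$. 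Hence the two circles share a common sphere if and only if the intersection $\gamma_1^\perp\cap\gamma_2^\perp$ contains a null vector.

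Next I would carry out the dimension count. Since $\dim\gamma_1^\perp=\dim\gamma_2^\perp=3$, the Grassmann formula gives
\[
 \dim(\gamma_1^\perp\cap\gamma_2^\perp)=\dim\gamma_1^\perp+\dim\gamma_2^\perp-\dim\mathcal{S}=6-\dim\mathcal{S}.
\]
As $\mathfrak{p}$ lies in both planes, the intersection is at least one-dimensional, whence $\dim\mathcal{S}\le 5$ in all cases. The assertion to be proved thus amounts to showing that a common sphere exists exactly when this intersection is at least two-dimensional, i.e. when $\dim\mathcal{S}\le 4$.

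For the two directions I would distinguish two cases. If $\dim\mathcal{S}=5$, then $\gamma_1^\perp\cap\gamma_2^\perp=\lspann{\mathfrak{p}}$, which contains no null vector since $\lspan{\mathfrak{p},\mathfrak{p}}=-1$; hence no sphere contains both circles. If $\dim\mathcal{S}\le 4$, the intersection is a subspace of dimension at least two that contains $\mathfrak{p}$. The crux here, which I expect to be the main obstacle, is to verify that such a subspace necessarily contains a genuine sphere rather than only the non-null direction $\mathfrak{p}$. I would pick $\mathfrak{v}\neq 0$ in $\gamma_1^\perp\cap\gamma_2^\perp$ with $\mathfrak{v}\perp\mathfrak{p}$; since the orthogonal complement $\gamma_1^\perp\cap\{\mathfrak{p}\}^\perp$ of the timelike vector $\mathfrak{p}$ inside the $(2,1)$-plane $\gamma_1^\perp$ is positive definite, we get $\lspan{\mathfrak{v},\mathfrak{v}}>0$. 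Therefore $\lspann{\mathfrak{p},\mathfrak{v}}$ is a Lorentzian plane lying inside the intersection and so contains null directions; as none of these is proportional to $\mathfrak{p}$, any such null vector represents an honest $2$-sphere passing through both circles. This would establish the desired equivalence.
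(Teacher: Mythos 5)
Your proposal is correct, and while both arguments share the same reduction (a common sphere is precisely a nonzero null vector in $\gamma_1^\perp\cap\gamma_2^\perp$, and $\mathcal{S}=\gamma_1^\perp+\gamma_2^\perp$), the way you establish the key ``if'' direction is genuinely different from the paper's. The paper normalizes an orthogonally intersecting pair $\mathfrak{s}_1,\mathfrak{t}_1$, writes out the linear dependency $\lambda_1\mathfrak{s}_1+\lambda_2\mathfrak{t}_1+\lambda_3\mathfrak{s}_2+\lambda_4\mathfrak{t}_2+\mathfrak{p}=0$, and solves a quadratic for the coefficient $c$ that makes $\lambda_1\mathfrak{s}_1+\lambda_2\mathfrak{t}_1+c\,\mathfrak{p}$ null, thereby producing explicit homogeneous coordinates $c^\pm=(\lambda_1+\lambda_2)\pm\sqrt{\lambda_1^2+\lambda_2^2}$ for the two oppositely oriented common spheres. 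You instead argue structurally: by the Grassmann formula $\dim(\gamma_1^\perp\cap\gamma_2^\perp)=6-\dim\mathcal{S}\geq 2$, the intersection contains a vector $\mathfrak{v}\perp\mathfrak{p}$, which is spacelike because the orthocomplement of the timelike $\mathfrak{p}$ inside the $(2,1)$-plane $\gamma_1^\perp$ is definite; hence $\lspann{\mathfrak{p},\mathfrak{v}}$ is a Lorentzian plane inside the intersection and its two null lines are the common spheres. Each route has its merits: yours is coordinate-free, needs no normalization or case preparation, and makes the signature reason for the statement transparent --- it also shows for free that the common spheres are never point spheres (a null vector $a\mathfrak{p}+b\mathfrak{v}$ has $\lspan{a\mathfrak{p}+b\mathfrak{v},\mathfrak{p}}=-a\neq 0$) and that for $\dim\mathcal{S}=4$ there are exactly two of them, coinciding up to orientation; the paper's computation, by contrast, is constructive and hands you the common spheres explicitly in terms of the given determining spheres, which is convenient for the later explicit constructions (e.g.\ of the M-Lie inversions in Lemma \ref{lem_rib_trafo_lie_inversion} and Theorem \ref{thm_three_cyclic}). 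The one step you leave implicit --- the existence of $\mathfrak{v}\neq 0$ in the intersection orthogonal to $\mathfrak{p}$ --- follows immediately since the kernel of the functional $\lspan{\,\cdot\,,\mathfrak{p}}$ has codimension one in the at least two-dimensional intersection, so there is no gap.
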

\begin{proof}
Suppose that the two circles lie on the sphere $k \in
\LL$. Then, by Fact \ref{fact_intersection_spheres}, we
conclude that $\mathcal{S}$ is at most 4-dimensional.

To show the converse, we choose, without loss of generality,
two spheres $\mathfrak{s}_1, \mathfrak{t}_1 \in \gamma_1^\perp
\cap \LLL$ that intersect orthogonally and homogeneous
coordinates such that $-\lspan{\mathfrak{s}_1,
\mathfrak{t}_1}=\lspan{\mathfrak{s}_1,
\mathfrak{p}}=\lspan{\mathfrak{t}_1, \mathfrak{p}}=1$.
If $\mathcal{S}$ is at most 4-dimensional, then there exist
constants $\lambda_i \in \mathbb{R}$ such that
\begin{equation*}\label{equ_dependent}
 \lambda_1 \mathfrak{s}_1 + \lambda_2 \mathfrak{t}_1 +
 \lambda_3 \mathfrak{s}_2 + \lambda_4 \mathfrak{t}_2 +
 \mathfrak{p}=0.
\end{equation*}
Thus, by setting $c^\pm:= (\lambda_1+\lambda_2) \pm
\sqrt{\lambda_1^2+\lambda_2^2}$ and $d^\pm:=1-c^\pm$, we
obtain two spheres (with opposite orientation)
\begin{equation*}
 \lambda_1 \mathfrak{s}_1 + \lambda_2 \mathfrak{t}_1
 + c^\pm \mathfrak{p}
 = -\lambda_3 \mathfrak{s}_2 - \lambda_4 \mathfrak{t}_2
 -  d^\pm\mathfrak{p}
\end{equation*}
that lie in $\gamma_1^\perp \cap \gamma_2^\perp$ and,
therefore, contain the two circles $\Gamma_1$ and
$\Gamma_2$. This proves the claim.
\end{proof}  

\begin{fact}\label{fact_orth_circle}
Given a sphere $r \in \LL$ and two point spheres $p_1,
p_2 \in \PL$ lying on it, the circle $\Gamma$ that
intersects the sphere $r$ orthogonally and passes through
the points $p_1$ and $p_2$ is described by the $(2,1)$-plane
\begin{equation}\label{equ_orth_circle}
 \gamma :
  = \lspann{\mathfrak{p}_1, \ \mathfrak{p}_2,
  \ \mathfrak{r}+\lspan{\mathfrak{r},
  \mathfrak{p}}\mathfrak{p}} \in G^{\mathcal{P}}_{(2,1)}.
\end{equation}
\end{fact}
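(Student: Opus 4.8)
The plan is to translate the two geometric requirements on $\Gamma = (\gamma, \gamma^\perp)$ into linear-algebraic conditions on the $(2,1)$-plane $\gamma$, and then observe that these conditions pin down $\gamma$ uniquely. First I would record that, since the point spheres of $\Gamma$ are exactly the elements of $\gamma$, requiring $\Gamma$ to pass through $p_1$ and $p_2$ is equivalent to $\mathfrak{p}_1, \mathfrak{p}_2 \in \gamma$. For the orthogonality requirement I would invoke the characterization preceding Fact \ref{fact_two_orth}: the circle $\Gamma$ meets $r$ orthogonally if and only if every sphere in $\gamma^\perp$ meets $r$ orthogonally. By (\ref{equ_orth_complex}) (using that $r$ is a genuine sphere, so $\lspan{\mathfrak{r},\mathfrak{p}} \neq 0$) a sphere $t$ is orthogonal to $r$ exactly when $\lspan{\mathfrak{t}, \mathfrak{a}} = 0$ for $\mathfrak{a} := \mathfrak{r} + \lspan{\mathfrak{r},\mathfrak{p}}\mathfrak{p}$; hence the orthogonality of $\Gamma$ to $r$ amounts to $\gamma^\perp \subseteq \{\mathfrak{a}\}^\perp$, that is, $\mathfrak{a} \in (\gamma^\perp)^\perp = \gamma$.

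Thus any admissible $\gamma$ must contain the three vectors $\mathfrak{p}_1, \mathfrak{p}_2$, and $\mathfrak{a}$. The next step is to check that these three vectors are linearly independent and already span a plane of the required type. I would compute the relevant inner products: $\lspan{\mathfrak{p}_i, \mathfrak{p}_i} = 0$ since the $p_i$ are point spheres, $\lspan{\mathfrak{a}, \mathfrak{p}} = 0$ and $\lspan{\mathfrak{p}_i, \mathfrak{a}} = \lspan{\mathfrak{p}_i, \mathfrak{r}} = 0$ (the latter because $p_i$ lies on $r$ and $\mathfrak{p}_i \perp \mathfrak{p}$), while $\lspan{\mathfrak{a}, \mathfrak{a}} = \lspan{\mathfrak{r},\mathfrak{p}}^2 > 0$ and $\lspan{\mathfrak{p}_1, \mathfrak{p}_2} \neq 0$ because $p_1 \neq p_2$ are distinct points. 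The Gram matrix in the basis $(\mathfrak{p}_1, \mathfrak{p}_2, \mathfrak{a})$ is therefore block diagonal, with a hyperbolic $2\times 2$ block of signature $(1,1)$ coming from $\mathfrak{p}_1, \mathfrak{p}_2$ and a positive entry $\lspan{\mathfrak{a}, \mathfrak{a}}$; its determinant is nonzero, so the three vectors are independent and their span has signature $(2,1)$ and lies in $\{\mathfrak{p}\}^\perp$.

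Finally I would conclude: the plane $\lspann{\mathfrak{p}_1, \mathfrak{p}_2, \mathfrak{a}}$ is a bona fide element of $G^{\mathcal{P}}_{(2,1)}$, and since $\gamma$ is $3$-dimensional and must contain these three independent vectors, $\gamma$ necessarily equals their span; this simultaneously proves existence, uniqueness, and the stated formula. I do not expect a serious obstacle here, as the argument is essentially bookkeeping, but the point requiring the most care is the signature computation, which relies on the nondegeneracy conditions $\lspan{\mathfrak{p}_1,\mathfrak{p}_2} \neq 0$ (distinct points) and $\lspan{\mathfrak{r},\mathfrak{p}} \neq 0$ ($r$ a proper sphere); these are exactly what guarantee that the span is a genuine $(2,1)$-plane rather than a degenerate subspace.
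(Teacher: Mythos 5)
Your proposal is correct, and it is worth noting how it relates to the paper's own argument, which is shorter and goes in the dual direction. The paper simply asserts that $\gamma$ is a $(2,1)$-plane orthogonal to $\mathfrak{p}$ and then \emph{verifies} that the circle it defines works: for each sphere $\mathfrak{s}\in\gamma^\perp\cap\LLL$ it forms the complex $\mathfrak{s}+\lspan{\mathfrak{s},\mathfrak{p}}\mathfrak{p}$ of spheres orthogonal to $s$ and checks $\lspan{\mathfrak{r},\mathfrak{s}+\lspan{\mathfrak{s},\mathfrak{p}}\mathfrak{p}}=0$ from $\mathfrak{s}\perp\mathfrak{r}+\lspan{\mathfrak{r},\mathfrak{p}}\mathfrak{p}$; this is pure existence. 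You instead encode both requirements as membership conditions on $\gamma$ itself --- $\mathfrak{p}_1,\mathfrak{p}_2\in\gamma$ and, via $(\gamma^\perp)^\perp=\gamma$, the single vector $\mathfrak{a}=\mathfrak{r}+\lspan{\mathfrak{r},\mathfrak{p}}\mathfrak{p}\in\gamma$ --- and then show by the Gram matrix that these three vectors span a genuine element of $G^{\mathcal{P}}_{(2,1)}$. The two computations are the same algebraic identity (symmetry of the orthogonality pairing) read from opposite sides, but your route buys two things the paper leaves implicit: uniqueness of the circle (which the phrase ``the circle $\Gamma$ that intersects\dots'' presupposes but the paper's proof never establishes), and an explicit proof of the signature claim, including where the hypotheses $\lspan{\mathfrak{p}_1,\mathfrak{p}_2}\neq 0$ and $\lspan{\mathfrak{r},\mathfrak{p}}\neq 0$ enter. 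One small step you elide: passing from ``every sphere in $\gamma^\perp\cap\LLL$ is orthogonal to $\mathfrak{a}$'' to ``$\gamma^\perp\subseteq\{\mathfrak{a}\}^\perp$'' uses that a $(2,1)$-plane is spanned by its null vectors; this is standard and only needed for the uniqueness half, but it deserves a sentence.
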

\begin{proof}
Firstly, note that $\gamma$ is a $(2,1)$-plane orthogonal
to the point sphere complex $\mathfrak{p}$. Thus, $\gamma$
indeed describes a circle by $\Gamma := (\gamma, \gamma^\perp)
\in G_{(2,1)}^{\mathcal{P}} \times G_{(2,1)}$. Furthermore,
suppose that $\mathfrak{s} \in \gamma^\perp \cap \LLL$
and consider the corresponding linear sphere complex with
orthogonal intersection angle, that is,
\begin{equation*}
 \mathfrak{a} :
  = \mathfrak{s} + \lspan{\mathfrak{s},
  \mathfrak{p}}\mathfrak{p}.
\end{equation*} 
Then, because of $\lspan{\mathfrak{s}, \mathfrak{r}
 + \lspan{\mathfrak{r}, \mathfrak{p}}\mathfrak{p}}=0$,
we conclude that
\begin{equation*}
 \lspan{\mathfrak{r}, \mathfrak{a}}=0.
\end{equation*}
Since this holds for any sphere in $\gamma^\perp \cap \LLL$,
the constructed circle $\Gamma$ intersects the sphere $r$
orthogonally in the points $p_1$ and $p_2$.
\end{proof}

\noindent As a consequence we obtain:
\begin{fact} 
Given a sphere $s \in \LL$ and two circles $\Gamma_1$ and
$\Gamma_2$ that intersect the sphere orthogonally in the
points $p_1^1, p_2^1$ and $p_1^2, p_2^2$.
Then the circles lie on a common sphere if and only if
the four points are concircular, that is,
any homogeneous coordinate vectors $\mathfrak{p}_i^j$
are linearly dependent.
\end{fact}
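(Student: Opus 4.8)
The plan is to read off the defining $(2,1)$-planes of both circles from Fact~\ref{fact_orth_circle} and to feed them into the common-sphere criterion of Fact~\ref{fact_circles_on_sphere}. Writing $\mathfrak{a} := \mathfrak{s} + \lspan{\mathfrak{s},\mathfrak{p}}\mathfrak{p}$ for the elliptic complex of spheres orthogonal to $s$, Fact~\ref{fact_orth_circle} yields
\begin{equation*}
 \gamma_1 = \lspann{\mathfrak{p}_1^1,\, \mathfrak{p}_2^1,\, \mathfrak{a}}
 \quad\text{and}\quad
 \gamma_2 = \lspann{\mathfrak{p}_1^2,\, \mathfrak{p}_2^2,\, \mathfrak{a}},
\end{equation*}
so both circles share the common direction $\mathfrak{a}\in\gamma_1\cap\gamma_2$. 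Everything will then hinge on the dimension of $V := \lspann{\mathfrak{p}_1^1,\mathfrak{p}_2^1,\mathfrak{p}_1^2,\mathfrak{p}_2^2}$, the span whose degeneration is exactly the concircularity of the four points.

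First I would note that each $\mathfrak{p}_i^j$ is a point sphere lying on $s$, so $\lspan{\mathfrak{p}_i^j,\mathfrak{p}} = \lspan{\mathfrak{p}_i^j,\mathfrak{s}} = 0$ and hence $\lspan{\mathfrak{p}_i^j,\mathfrak{a}} = 0$; thus $V\subset\mathfrak{a}^\perp$. As $s$ is a proper sphere, $\mathfrak{s}\not\perp\mathfrak{p}$, so $\lspan{\mathfrak{a},\mathfrak{a}} = \lspan{\mathfrak{s},\mathfrak{p}}^2 > 0$; in particular $\mathfrak{a}$ is space-like and $\mathfrak{a}\notin V$. Consequently $\gamma_1+\gamma_2 = V\oplus\lspann{\mathfrak{a}}$, and
\begin{equation*}
 \dim(\gamma_1\cap\gamma_2)
  = \dim\gamma_1 + \dim\gamma_2 - \dim(\gamma_1+\gamma_2)
  = 6 - (\dim V + 1)
  = 5 - \dim V.
\end{equation*}

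To bring in Fact~\ref{fact_circles_on_sphere}, I would choose for each $i$ two orthogonally intersecting spheres $\mathfrak{s}_i,\mathfrak{t}_i\in\gamma_i^\perp\cap\LLL$; this guarantees $\mathfrak{p}\notin\lspann{\mathfrak{s}_i,\mathfrak{t}_i}$ and hence $\lspann{\mathfrak{s}_i,\mathfrak{t}_i,\mathfrak{p}} = \gamma_i^\perp$ (recall $\mathfrak{p}\in\gamma_i^\perp$). With these representatives the subspace of Fact~\ref{fact_circles_on_sphere} becomes $\mathcal{S} = \gamma_1^\perp+\gamma_2^\perp = (\gamma_1\cap\gamma_2)^\perp$, so that $\dim\mathcal{S} = 6 - \dim(\gamma_1\cap\gamma_2) = 1 + \dim V$. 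Thus $\dim\mathcal{S}\le 4$ if and only if $\dim V\le 3$, i.e. if and only if the four point spheres are linearly dependent. Since Fact~\ref{fact_circles_on_sphere} tells us that the two circles lie on a common sphere precisely when $\dim\mathcal{S}\le 4$, this closes the chain of equivalences and proves the claim.

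The step I expect to need the most care is the identification $\mathcal{S} = (\gamma_1\cap\gamma_2)^\perp$: one must ensure that the chosen spheres $\mathfrak{s}_i,\mathfrak{t}_i$ together with $\mathfrak{p}$ span the full pencil $\gamma_i^\perp$, for otherwise the dimension count $\dim\mathcal{S} = 1+\dim V$ would fail. Choosing orthogonally intersecting representatives (as already done in the proof of Fact~\ref{fact_circles_on_sphere}) removes this difficulty, and the remaining content is the clean linear-algebra identity resting on $\mathfrak{a}$ being space-like and orthogonal to all four points.
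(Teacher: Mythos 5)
Your proposal is correct and takes essentially the same approach as the paper: both read off $\gamma_i$ from Fact \ref{fact_orth_circle} and reduce the common-sphere condition to a dimension count on $\gamma_1+\gamma_2$ (equivalently on $\mathcal{S}=\gamma_1^\perp+\gamma_2^\perp$, via Fact \ref{fact_circles_on_sphere}), which equals $1+\dim V$ because $\mathfrak{a}$ is space-like and orthogonal to all four point spheres. Your write-up simply makes explicit the steps that the paper's two-sentence proof leaves implicit, namely the identification $\mathcal{S}=(\gamma_1\cap\gamma_2)^\perp$ and the resulting equivalence $\dim\mathcal{S}\le 4\Leftrightarrow\dim V\le 3$.
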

\begin{proof} 
The sought-after sphere has to lie in $\gamma_1^\perp
\oplus \gamma_2^\perp$, where $\gamma_i$ is given by
(\ref{equ_orth_circle}). Thus, the subspace $\gamma_1 \oplus
\gamma_2$ has to be a 4-dimensional space.
\end{proof}

\subsection{Ribaucour transformations between two circles}
Recall \cite{rib_coords, discrete_channel} that any two
cospherical circles are related by a Ribaucour transformation,
that is, they envelop a common circle congruence. For two
unparametrized circles on a sphere, there are at most two
Ribaucour transformations that induce different Ribaucour
correspondences between the point spheres of the two circles.

Those point-to-point mappings can be described by two M-Lie
inversions:
\begin{lem}\label{lem_rib_trafo_lie_inversion}
Let $\Gamma_1$ and $\Gamma_2$ be two circles that are not
tangent to each other and lie on the sphere $k \in \LL$. Then
the two Ribaucour correspondences between $\Gamma_1$ and
$\Gamma_2$ are induced by the two M-Lie inversions $\sigma_a$
and $\sigma_{\tilde{a}}$ determined by the linear sphere
complexes
\begin{equation*}
 \mathfrak{a} :
  = \lspan{\mathfrak{s}_2, \mathfrak{p}} \mathfrak{s}_1
  - \lspan{\mathfrak{s}_1, \mathfrak{p}} \mathfrak{s}_2
   \ \ \ \text{and } \ \ 
 \tilde{\mathfrak{a}} :
  = \lspan{\sigma_\mathfrak{p}(\mathfrak{s}_2), \mathfrak{p}}
   \mathfrak{s}_1 - \lspan{\mathfrak{s}_1,
   \mathfrak{p}}\sigma_\mathfrak{p}(\mathfrak{s}_2),
\end{equation*}
where $\mathfrak{s}_i \in \gamma_i^\perp$, $i= \{ 1,2 \}$,
are two spheres that intersect $k$ orthogonally.
\end{lem}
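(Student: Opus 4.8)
The plan is to verify directly that $\sigma_a$ and $\sigma_{\tilde a}$ are M\"obius transformations carrying $\Gamma_1$ onto $\Gamma_2$ while fixing $k$, and then to identify the induced point maps with the two Ribaucour correspondences by producing the common enveloped circle congruence. First I would normalize: since both circles lie on $k$, I take $\mathfrak{s}_i \in \gamma_i^\perp \cap \LLL$ to be the spheres through $\Gamma_i$ that meet $k$ orthogonally, so that $\Gamma_i = s_i \cap k$. By the uniqueness statement recorded just before Definition \ref{def_M-Lie}, $\sigma_a$ is the unique M-Lie inversion sending $s_1$ to $s_2$. The orthogonality $s_i\perp k$, written through the elliptic complex (\ref{equ_orth_complex}), reads $\lspan{\mathfrak{k},\mathfrak{s}_i} = -\lspan{\mathfrak{s}_i,\mathfrak{p}}\lspan{\mathfrak{k},\mathfrak{p}}$; substituting this into $\mathfrak{a} = \lspan{\mathfrak{s}_2,\mathfrak{p}}\mathfrak{s}_1 - \lspan{\mathfrak{s}_1,\mathfrak{p}}\mathfrak{s}_2$ yields $\lspan{\mathfrak{k},\mathfrak{a}}=0$, so $k$ lies in the complex of $\sigma_a$ and is preserved. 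As $\sigma_a$ is M\"obius and respects orthogonal intersection, $\sigma_a(\Gamma_1) = \sigma_a(s_1\cap k) = s_2\cap k = \Gamma_2$. The same computation, now using $\lspan{\sigma_\mathfrak{p}(\mathfrak{s}_2),\mathfrak{p}} = -\lspan{\mathfrak{s}_2,\mathfrak{p}}$, shows $\sigma_{\tilde a}$ fixes $k$ and maps $s_1$ to $\sigma_\mathfrak{p}(s_2)$, which as an unoriented sphere is $s_2$; hence $\sigma_{\tilde a}(\Gamma_1)=\Gamma_2$ as well.

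Next I would show that $p \mapsto \sigma_a(p)$ is a Ribaucour correspondence by exhibiting the enveloped circle congruence. For $p\subset\gamma_1$ set $\tilde p := \sigma_a(p)\subset\gamma_2$, and let $E_p$ and $E_{\tilde p}$ be the tangent contact elements of $\Gamma_1$ at $p$ and of $\Gamma_2$ at $\tilde p$. Since $\sigma_a$ is an involution carrying $\Gamma_1$ to $\Gamma_2$ and $p$ to $\tilde p$, it interchanges $E_p$ and $E_{\tilde p}$; consequently the bitangent circle $C_p$, tangent to $\Gamma_1$ at $p$ and to $\Gamma_2$ at $\tilde p$, is fixed by $\sigma_a$. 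Letting $p$ run over $\Gamma_1$, the family $\{C_p\}$ is a circle congruence enveloped by both $\Gamma_1$ and $\Gamma_2$, which is exactly the Ribaucour condition. The two admissible relative orientations of the two circles, encoded by $s_2$ versus $\sigma_\mathfrak{p}(s_2)$, select the two families of bitangent circles and thereby the two Ribaucour correspondences induced by $\sigma_a$ and $\sigma_{\tilde a}$.

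Finally I would confirm that the hypotheses make both inversions genuine and distinct. Because the $\mathfrak{s}_i$ are light-like, $\lspan{\mathfrak{a},\mathfrak{a}} = -2\,\lspan{\mathfrak{s}_1,\mathfrak{p}}\lspan{\mathfrak{s}_2,\mathfrak{p}}\lspan{\mathfrak{s}_1,\mathfrak{s}_2}$, with the analogous expression for $\tilde{\mathfrak{a}}$ obtained by replacing $\mathfrak{s}_2$ with $\sigma_\mathfrak{p}(\mathfrak{s}_2)$; non-tangency of $\Gamma_1$ and $\Gamma_2$ forces $\lspan{\mathfrak{s}_1,\mathfrak{s}_2}\neq 0\neq\lspan{\mathfrak{s}_1,\sigma_\mathfrak{p}(\mathfrak{s}_2)}$, so both complexes are non-degenerate and the M-Lie inversions exist. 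As $\mathfrak{a}$ and $\tilde{\mathfrak{a}}$ are not proportional, $\sigma_a\neq\sigma_{\tilde a}$ and they induce different correspondences; since there are at most two Ribaucour correspondences, these are precisely the two. I expect the genuine obstacle to be the middle step: rigorously matching each reflection-induced point map with the Ribaucour correspondence, i.e.\ constructing the enveloped congruence of bitangent circles and, above all, handling the orientation bookkeeping that certifies $s_2$ and $\sigma_\mathfrak{p}(s_2)$ give the two \emph{distinct} families; the computations establishing invariance of $k$ and non-degeneracy are routine by comparison.
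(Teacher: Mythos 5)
Your proof is correct and follows essentially the same route as the paper's: verify that $\sigma_a$ and $\sigma_{\tilde a}$ are M-Lie inversions preserving $k$ and carrying $s_1$ to $s_2$ resp.\ $\sigma_\mathfrak{p}(s_2)$ (hence $\Gamma_1$ to $\Gamma_2$), and then certify the Ribaucour property via the enveloped family of circles tangent to both circles at corresponding points, arising as fixed circles of the inversion. The paper makes your middle step explicit by producing, for each induced pair $(p_1,p_2)$, a sphere $t$ in oriented contact with $s_1$ and $s_2$ containing these points, noting it is preserved by the inversion, and taking the circle determined by the $(2,1)$-plane $\lspann{\mathfrak{t},\mathfrak{k},\mathfrak{p}}$, while your concluding non-degeneracy and distinctness check is what the paper handles in the remark following the lemma.
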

\noindent We remark that, for two touching circles
$\Gamma_1$ and $\Gamma_2$ on a common sphere $k \in
\LL$, there exists only one Ribaucour correspondence:
in this case, either $\lspan{\mathfrak{s}_1,
\mathfrak{s}_2}=0$ or $\lspan{\mathfrak{s}_1,
\sigma_\mathfrak{p}(\mathfrak{s}_2)}=0$ and, therefore,
one of the M-Lie inversions described in Lemma
\ref{lem_rib_trafo_lie_inversion} degenerates. Thus,
between two touching circles we obtain a unique Ribaucour
correspondence.
\begin{proof}
Without loss of generality, we assume that 
\begin{equation*}
 \lspan{\mathfrak{s}_1, \mathfrak{p}}=\lspan{\mathfrak{s}_2,
 \mathfrak{p}}=\lspan{\mathfrak{k}, \mathfrak{p}}=1.
\end{equation*}
Then, since $s_1$ and $s_2$ intersect the sphere $k$
orthogonally, from (\ref{equ_intersection_angle}) we
conclude that
\begin{equation*}
 \lspan{\mathfrak{s}_1, \mathfrak{k}}=\lspan{\mathfrak{s}_2,
 \mathfrak{k}}=-1.
\end{equation*}
Hence, the Lie inversions $\sigma_a$ and $\sigma_{\tilde{a}}$
preserve the sphere $k$, as well as the point sphere
complex~$\mathfrak{p}$. Furthermore, they map $s_1$ to
$s_2$ and $\lspann{\sigma_\mathfrak{p}(\mathfrak{s}_2)}$,
respectively. Therefore, they are M-Lie inversions and
interchange point spheres of the circles $\Gamma_1$ and
$\Gamma_2$ with each other.

For any induced pair of point spheres $p_1$ and $p_2$,
there exists a sphere $t \in \LL$ that contains these points
and is in oriented contact with $s_1$ and $s_2$. Then, $t$
is preserved by those M-Lie inversions and intersects $k$
orthogonally.

Moreover, the $(2,1)$-plane $\lspann{\mathfrak{t},
\mathfrak{k}, \mathfrak{p}}$ determines a circle that is
tangent to the circles $\Gamma_1$ and $\Gamma_2$ at the
points $p_1$ and $p_2$. This proves the claim.
\end{proof}

\noindent The generic ambiguity of the Ribaucour
correspondence between two cospherical circles is eliminated
by the choice of one admissible point sphere pair:
\begin{cor}\label{cor_choice_ribtrafo}
Let $\Gamma_1$ and $\Gamma_2$ be two cospherical circles
and let $\mathfrak{p}_i \in \gamma_i \cap \LLL$, $i=\{1,2\}$,
represent a pair of points that is contained in a circle tangent
to $\Gamma_1$ and $\Gamma_2$. Then, there exists a unique
Ribaucour transformation between the circles that extends
the correspondence between the points $p_1$ and  $p_2$.
\end{cor}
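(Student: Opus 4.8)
The plan is to reduce Corollary \ref{cor_choice_ribtrafo} to Lemma \ref{lem_rib_trafo_lie_inversion} by showing that the ambiguity between the two Ribaucour correspondences is resolved by a single admissible point pair. First I would recall that, by Lemma \ref{lem_rib_trafo_lie_inversion}, there are exactly two M-Lie inversions $\sigma_a$ and $\sigma_{\tilde{a}}$ inducing the two Ribaucour correspondences between the cospherical circles $\Gamma_1$ and $\Gamma_2$, arising from the spheres $s_1, s_2 \in \gamma_i^\perp$ orthogonal to the common sphere $k$ (together with $\sigma_\mathfrak{p}(\mathfrak{s}_2)$). Each of these inversions sends point spheres of $\Gamma_1$ to point spheres of $\Gamma_2$ according to its respective correspondence.

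Next I would use the tangency hypothesis geometrically. The point pair $p_1, p_2$ is assumed to lie on a circle tangent to both $\Gamma_1$ and $\Gamma_2$; by the final paragraph of the proof of Lemma \ref{lem_rib_trafo_lie_inversion}, such a tangent circle is exactly the signature of a point pair that is in Ribaucour correspondence, namely the pair fixed by the contact sphere $t \in \LL$ containing $p_1, p_2$ and in oriented contact with $s_1, s_2$. So the plan is to argue that an admissible tangent point pair singles out precisely one of the two inversions: $\sigma_a$ maps $p_1 \mapsto p_2$ for pairs whose tangent circle is compatible with the correspondence induced by $\mathfrak{a}$, whereas $\sigma_{\tilde a}$ realizes the orientation-reversed correspondence via $\sigma_\mathfrak{p}(\mathfrak{s}_2)$. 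Since the two correspondences are genuinely different (the circles are not tangent to each other, so neither M-Lie inversion degenerates), a given admissible pair $(p_1,p_2)$ is matched by at most one of them; existence of a tangent circle through $p_1, p_2$ guarantees it is matched by exactly one.

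Concretely, I would verify that the M-Lie inversion interchanging $\Gamma_1$ and $\Gamma_2$ and sending $p_1 \mapsto p_2$ is uniquely determined: given the cospherical circles, an M-Lie inversion mapping $\Gamma_1$ to $\Gamma_2$ is fixed by the requirement that it swap the two orthogonal spheres through the common sphere, leaving only the discrete binary choice between $\mathfrak{s}_2$ and $\sigma_\mathfrak{p}(\mathfrak{s}_2)$; prescribing the image of one non-tangential point sphere then eliminates this binary ambiguity. The corresponding Ribaucour transformation is the one determined by this inversion, and it extends the given correspondence by construction.

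The main obstacle I anticipate is verifying that an admissible point pair is always realized by one of the two canonical M-Lie inversions (existence), rather than lying ``between'' the two correspondences. This requires checking that the tangency condition on the circle through $p_1, p_2$ is equivalent to $p_2 \in \{\sigma_a(p_1), \sigma_{\tilde a}(p_1)\}$, i.e.\ that admissibility is exactly the condition that the pair arise from a contact sphere $t$ as in the proof of Lemma \ref{lem_rib_trafo_lie_inversion}. Once this equivalence is established, uniqueness follows immediately from the distinctness of the two inversions in the non-tangent case.
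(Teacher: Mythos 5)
There is a genuine gap, and it sits exactly where you flag it. Your plan reduces the corollary to the dichotomy of Lemma \ref{lem_rib_trafo_lie_inversion} and then hinges on the equivalence ``$(p_1,p_2)$ lies on a tangent circle $\iff$ $p_2\in\{\sigma_a(p_1),\sigma_{\tilde a}(p_1)\}$''. The final paragraph of the lemma's proof gives only one direction of this: pairs induced by $\sigma_a$ or $\sigma_{\tilde a}$ admit a tangent circle. The converse is precisely the existence assertion of the corollary, and your proposal defers it (``once this equivalence is established\dots'') without any idea for proving it; so what you have is a reduction of the statement to an unproven claim of essentially the same strength. The paper avoids the dichotomy altogether and proves existence by direct construction: from the admissible pair it forms the sphere $\mathfrak{s}_{12}:=\lspan{\mathfrak{p}_1,\mathfrak{p}_2}\mathfrak{s}_1-\lspan{\mathfrak{s}_1,\mathfrak{p}_2}\mathfrak{p}_1$, which passes through $p_1,p_2$ and lies in the contact element $\lspann{\mathfrak{s}_1,\mathfrak{p}_1}$; it then chooses $\mathfrak{s}_2\in\gamma_2^\perp\cap\lspann{\mathfrak{s}_{12},\mathfrak{p}_2}$ --- the nontriviality of this intersection is exactly where the tangency hypothesis is consumed, since a $(2,1)$-plane and a null $2$-plane in $\mathbb{R}^{4,2}$ do not meet in general position --- and finally takes the M-Lie inversion with $\mathfrak{a}=\lspan{\mathfrak{s}_2,\mathfrak{p}}\mathfrak{s}_1-\lspan{\mathfrak{s}_1,\mathfrak{p}}\mathfrak{s}_2$, which maps $p_1$ onto $p_2$. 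Any honest completion of your route would have to carry out essentially this construction anyway, so the detour through the two canonical inversions buys nothing.

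A secondary weakness is the uniqueness step. From ``the two correspondences are genuinely different'' you infer that at most one of them matches the given pair; but distinct correspondences can agree at particular points. Indeed, if the cospherical circles intersect, every point sphere in $\gamma_1\cap\gamma_2\cap\LLL$ is orthogonal to $\mathfrak{s}_1$, $\mathfrak{s}_2$ and $\sigma_{\mathfrak{p}}(\mathfrak{s}_2)$, hence to both $\mathfrak{a}$ and $\tilde{\mathfrak{a}}$, so it is fixed by both $\sigma_a$ and $\sigma_{\tilde a}$: the two correspondences coincide there. To conclude uniqueness you must argue that the correspondences differ at the given $p_1$, and this again uses admissibility (such coincidence pairs are not contained in any tangent circle). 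Finally, note that the corollary does not assume the circles are non-tangent, whereas your argument does; the touching case is covered by the remark following Lemma \ref{lem_rib_trafo_lie_inversion}, where the correspondence is unique outright.
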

\begin{proof}
Let $\Gamma_1$ and $\Gamma_2$ be two circles lying on the
sphere $k \in \LL$. Moreover, we denote by $\mathfrak{s}_1
\in \gamma_1^\perp \cap \LLL$ a sphere that is orthogonal
to $k$ and define the sphere
\begin{equation*}
 \mathfrak{s}_{12} :
  = \lspan{\mathfrak{p}_1, \mathfrak{p}_2} \mathfrak{s}_1
  - \lspan{\mathfrak{s}_1, \mathfrak{p}_2}\mathfrak{p}_1.
\end{equation*}
The sphere $s_{12}$ contains the point pair $(p_1, p_2)$
and lies in the contact element $\lspann{s_1, p_1}$. Then
the sphere $\mathfrak{s}_2 \in \gamma_2^\perp \cap
\lspann{\mathfrak{s}_{12}, \mathfrak{p}_2}$ provides the
construction for the sought-after Ribaucour correspondence:
the M-Lie inversion determined by the linear sphere complex
\begin{equation*}
 \mathfrak{a} :
 = \lspan{\mathfrak{s}_2, \mathfrak{p}}\mathfrak{s}_1
 - \lspan{\mathfrak{s}_1, \mathfrak{p}} \mathfrak{s}_2 
\end{equation*}
induces the Ribaucour correspondence that maps $p_1$
onto $p_2$.
\end{proof}

\noindent Conversely, if two circles $\Gamma_1$ and $\Gamma_2$
are related by an M-Lie inversion~$\sigma$, then $\sigma$
induces a smooth Ribaucour transformation between the two
circles:  let
\begin{equation*}
 t \mapsto (\mathfrak{c}_1(t), \ \sigma_a(\mathfrak{c}_1(t))
\end{equation*}
be a simultaneous parametrization of the two circles, where
$\mathfrak{c}_1(t) \in \gamma_1 \cap \LLL$. For any  sphere
$\mathfrak{s}_1 \in \gamma_1^\perp$, the map
\begin{equation*}
 t \mapsto \mathfrak{s}(t) :
  = \lspann{\mathfrak{s}_1, \mathfrak{c}_1(t)}
  \cap \lspann{\sigma_a(\mathfrak{s}_1),
  \sigma_a(\mathfrak{c}_1(t)} \cap \mathcal{L}
\end{equation*}
defines a 1-parameter family of spheres that is in
oriented contact with both circles. Hence, by \cite[Thm
2.15]{discrete_channel}, we conclude that $t \mapsto s(t)$
gives rise to a Dupin cyclide with curvature circles
$\Gamma_1$ and $\Gamma_2$, which proves the claim.

\section{Discrete cyclic systems}\label{section_cyclic_systems}
\noindent In this section we develop the main notions of
this paper from two different points of view: firstly, we
consider discrete cyclic systems, that is, discrete orthogonal
coordinate system that have two families of discrete channel
surfaces as coordinate surfaces. Secondly, we investigate
under which conditions a discrete circle congruence is
cyclic, thus, is the underlying circle congruence of a
discrete cyclic system.

\truepar
\textbf{Notation.}
Throughout the text we will adopt the following notation
conventions for domains of discrete maps:
we will consider a simply connected subset of the lattice
$\mathbb{Z}^3$, organized into
 vertices $\bar{\mathcal{V}}$,
 edges $\bar{\mathcal{E}}$, and
 faces $\bar{\mathcal{F}}$;
for a $2$-dimensional ``slice'', modeled in $\mathbb{Z}^2$,
we will use
 $\mathcal{V}$,
 $\mathcal{E}$, and
 $\mathcal{F}$,
for the sets of vertices, edges, and faces, respectively.
Thus our domains will be (rather trivial) quadrilateral
or cubical cell complexes,
where only cells of the dimensions $0$, $1$, and $2$
will play a role.
Furthermore, $I\subset\mathbb{Z}$ will denote
(the vertex set of) a discrete (closed) interval.

The domains under consideration are assumed to be sufficiently
large, that is, in each coordinate direction there exist at
least three vertices.

\subsection{Definition and basic properties}
In the spirit of \cite{org_principles, ddg_book}, we consider
discrete triply orthogonal systems as principal contact
element nets:
\begin{defi}\label{def_disc_tops}
A \emph{discrete triply orthogonal system} is a map 
\begin{equation*}
 f:\bar{\mathcal{V}} \rightarrow \mathcal{Z}\times \mathcal{Z}
 \times \mathcal{Z}, \ i \mapsto f_i= (f_i^1,f_i^2,f_i^3)
\end{equation*}
such that
\begin{itemize}
\item[(i)] at each vertex the point spheres
 of the contact elements coincide,
\begin{equation*}
 f_i^1 \cap f_i^2 \cap f_i^3 =: f_i^p \in \PL,
\end{equation*} 
\item[(ii)] at each vertex the spheres of different
 contact elements intersect orthogonally and
\item[(iii)] any two adjacent contact elements of
 the same family intersect,
\begin{equation*}
 f_i^\mu \cap f_j^\mu =: s_{ij}^\mu \in \LL, \ 
 \mu =\{ 1,2,3 \}.
\end{equation*} 
\end{itemize}
The map $f^p: \bar{\VV} \rightarrow \PL$ will be called the
\emph{point sphere map} of $f$ and provides a 3-dimensional
circular net.
\end{defi} 

\noindent We remark that a map $f:\bar{\mathcal{V}} \rightarrow
\mathcal{Z}\times \mathcal{Z} \times \mathcal{Z}$ satisfies
conditions (i) and (ii) if and only if,
at any vertex $i\in\bar{\mathcal{V}}$,
any two spheres $s_i^\lambda$ and $s_i^\mu$ of any two
different contact elements $f_i^\lambda$ and $f_i^\mu$ fulfill
\begin{equation*}
 \lspan{\mathfrak{s}_i^\lambda, \mathfrak{s}_i^\mu
 + \lspan{\mathfrak{p}, \mathfrak{s}_i^\mu}\mathfrak{p}}=0.
\end{equation*}
Note that this condition is symmetric
in $\lambda,\mu\in\{1,2,3\}$, $\lambda\neq\mu$.

\truepar 
Any discrete triply orthogonal system consists of three
(Lam\'e) families of \emph{coordinate surfaces},
\begin{align*}
 \lambda &\mapsto \{\bar{\mathcal{V}} \ni (\lambda, x_2, x_3)
  \mapsto f_{(\lambda, x_2, x_3)}^1 \in \mathcal{Z} \}, \\
 \lambda &\mapsto \{\bar{\mathcal{V}} \ni (x_1, \lambda, x_3)
  \mapsto f_{(x_1, \lambda,x_3)}^2 \in \mathcal{Z} \}, \\
 \lambda &\mapsto \{\bar{\mathcal{V}} \ni (x_1, x_2, \lambda)
  \mapsto f_{(x_1, x_2,\lambda)}^3 \in \mathcal{Z} \};
\end{align*}
those are discrete Legendre maps and represent discrete
surfaces that intersect orthogonally along discrete
curvature lines. We call those discrete lines of intersection
\emph{$x_i$-trajectories} of the system.


\truepar

Due to condition (iii), any two adjacent coordinate surfaces
of the same family are related by a discrete  Ribaucour
transformation (cf \cite{org_principles, ddg_book,
rib_families}).

In particular, for any edge of a triply orthogonal system,
we obtain three distinguished spheres $(s_{ij}^\mu)_\mu$
that pairwise intersect orthogonally:
two of these spheres are curvature spheres of coordinate
surfaces and the third sphere is a Ribaucour sphere enveloped
by the Ribaucour pair of adjacent coordinate surfaces.

\truepar
Imitating the smooth notion of a cyclic system
(cf \cite{coolidge, darboux_ortho, rib_cyclic}),
we introduce the following notion:
\begin{defi}\label{def_cyclic_system}
A \emph{discrete cyclic system} is a discrete triply
orthogonal system such that two families of coordinate
surfaces are discrete channel surfaces that intersect along
their circular curvature lines.
\end{defi}

\noindent We emphasize that the notion of a discrete
triply orthogonal system as well as cyclicity of it are
only invariant under M\"obius transformations, hence, those
notions depend on the choice of a point sphere complex.

\truepar
Recall that a discrete channel surface in the sense of
\cite{discrete_channel} is a discrete Legendre map that
admits a constant Lie cyclide for each coordinate ribbon
of one coordinate direction. Hence, the curvature spheres
of the discrete channel surface along each coordinate ribbon
are curvature spheres of this constant Lie cyclide.

As a consequence, a discrete channel surface has
one family of circular curvature lines and one family of
curvature spheres is constant along each of them. Moreover,
any two adjacent circular curvature lines are curvature
lines of the corresponding (constant) Lie cyclide.

Thus, since one family of (orthogonal) trajectories of
a discrete cyclic system provides the circular curvature
lines of the discrete channel surfaces, we conclude:

\begin{cor}
One family of trajectories of a discrete cyclic system are
circular.
\end{cor}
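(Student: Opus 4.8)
The plan is to read the statement off directly from Definition \ref{def_cyclic_system} together with the recalled structure of discrete channel surfaces; the only genuine task is to match the combinatorial labelling of the coordinate-surface families with that of the trajectories, after which the conclusion is immediate.

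First I would make explicit the identification between trajectories and intersections of coordinate surfaces. A coordinate surface of the family fixing the value of $x_\nu$ is built from the contact elements $f^\nu$ on the corresponding slice. Two coordinate surfaces taken from two \emph{distinct} families, say the ones fixing $x_\lambda=a$ and $x_\mu=b$ with $\{\lambda,\mu,\nu\}=\{1,2,3\}$, meet exactly along the discrete curve on which both $x_\lambda$ and $x_\mu$ are held constant, that is, along the $x_\nu$-trajectory. Thus each family of $x_\nu$-trajectories is realised as the shared (orthogonal) curvature lines of the two coordinate-surface families that do not carry $x_\nu$ as a free parameter.

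Next, by Definition \ref{def_cyclic_system} two of the three coordinate-surface families are discrete channel surfaces, and these are required to intersect precisely along their circular curvature lines. By the identification just made, these two families are exactly the pair whose mutual intersections form one distinguished family of trajectories, say the $x_\nu$-trajectories. It therefore remains only to observe that the curvature lines realising these intersections are circular, which follows at once from the recalled property of discrete channel surfaces: each such surface carries exactly one family of circular curvature lines, and the definition stipulates that the two channel families meet along these circular curvature lines rather than along the generic ones. Hence the distinguished family of $x_\nu$-trajectories consists of circular curvature lines and is therefore circular, proving the claim. I expect no analytic obstacle; the only point demanding care is the bookkeeping of the first step, namely verifying that ``intersection of two coordinate surfaces along a curvature line'' and ``trajectory'' denote the same discrete curves.
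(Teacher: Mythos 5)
Your proposal is correct and follows essentially the same route as the paper: both arguments combine Definition \ref{def_cyclic_system} (two coordinate-surface families are discrete channel surfaces intersecting along their circular curvature lines) with the recalled fact that a discrete channel surface carries exactly one family of circular curvature lines, and then identify the trajectories with these intersection curves. The extra bookkeeping you do in matching trajectories with intersections of coordinate surfaces from the two channel families is precisely what the paper leaves implicit in its one-line conclusion.
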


\noindent Moreover, as circular curvature lines of a
discrete channel surface, two adjacent trajectories are
related by a discrete Ribaucour transformation that is induced
by a smooth Ribaucour transformation (cf \cite[Prop
2.10]{discrete_channel}). Note that this correspondence
is given by an M-Lie inversion as described in Lemma
\ref{lem_rib_trafo_lie_inversion}.

Hence:
\begin{cor}
A discrete cyclic system provides a discrete 2-dimensional
congruence of circles where any two adjacent circles are
cospherical.
\end{cor}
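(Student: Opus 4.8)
The plan is to first single out the two-dimensional family of circles underlying a discrete cyclic system, and then to reduce the cosphericity of adjacent circles to the fact that two adjacent circular curvature lines of a discrete channel surface are curvature circles of one and the same constant Lie cyclide.

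First I would fix, by Definition \ref{def_cyclic_system} and without loss of generality, the two families of discrete channel surfaces to be the first and second coordinate families. Their common circular curvature lines are then precisely the $x_3$-trajectories of the system, so that the point sphere map restricted to each $x_3$-trajectory traces a circle in $\PL$. These circles are indexed by the remaining two discrete parameters $(x_1,x_2)$, which exhibits the announced discrete $2$-dimensional congruence of circles, defined on the $(x_1,x_2)$-slice of the lattice.

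Next I would classify the two possible types of (edge-)adjacency in this congruence. Two circles that differ by a unit step in $x_1$ with $x_2$ fixed are two consecutive circular curvature lines of the single second-family channel surface determined by that value of $x_2$; symmetrically, two circles that differ by a unit step in $x_2$ with $x_1$ fixed are two consecutive circular curvature lines of the first-family channel surface at the fixed $x_1$. Thus in either case a pair of adjacent circles of the congruence is a pair of adjacent circular curvature lines of one and the same discrete channel surface.

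The core step is then to show that two adjacent circular curvature lines of a discrete channel surface lie on a common sphere. Here I would invoke the defining property of a discrete channel surface in the sense of \cite{discrete_channel}: the ribbon bounded by the two adjacent circular curvature lines carries a constant Lie cyclide, and the two bounding circles are curvature circles, in one curvature family, of this (Dupin) cyclide. It therefore remains to verify that two curvature circles belonging to the same family of a Dupin cyclide are cospherical. Since cosphericity is M\"obius invariant and every Dupin cyclide is M\"obius equivalent to a torus of revolution, a circular cylinder, or a circular cone, this can be checked on these standard models, in which each curvature family consists of coaxial circles, respectively of pairwise coplanar lines, which are manifestly cospherical; alternatively the common sphere can be produced directly from Fact \ref{fact_circles_on_sphere} once homogeneous sphere coordinates adapted to the constant Lie cyclide are chosen. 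Assembling these observations, every pair of adjacent circles of the congruence is cospherical, which is the assertion.

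I expect the main obstacle to be this core step, namely turning the mere existence of the constant Lie cyclide into an explicit common sphere and confirming that same-family curvature circles are indeed cospherical. By the preceding corollary and Lemma \ref{lem_rib_trafo_lie_inversion}, the Ribaucour correspondence between adjacent trajectories is already encoded by an M-Lie inversion, so the remaining work is genuinely geometric—exhibiting the sphere—rather than a re-derivation of that correspondence.
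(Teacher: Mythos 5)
Your proposal is correct and takes essentially the same route as the paper: identify the congruence with the circular trajectories, observe that two adjacent trajectories are same-family curvature circles of the constant Lie cyclide carried by the channel-surface ribbon between them, and conclude cosphericity. The only divergence is in the final step, where the paper simply cites that adjacent circular curvature lines are related by a smooth Ribaucour transformation (\cite[Prop~2.10]{discrete_channel}), hence cospherical, while you verify the underlying fact directly on M\"obius normal forms --- which works, modulo the small slip that torus meridian circles are not coaxial (they are nonetheless cospherical, by reflection symmetry in the bisecting plane through the axis, or via Fact \ref{fact_circles_on_sphere} as you suggest).
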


\noindent Furthermore, for any discrete cyclic system we
obtain two distinguished 2-parameter families of Dupin cyclides,
namely, those that provide constant Lie cyclides for the
discrete channel surfaces.

Any circle of the above congruence is then a curvature line on
four Dupin cyclide patches, the Lie cyclides of the two discrete
channel surfaces that intersect along this circle. Since their
contact elements along the circle intersect orthogonally,
the two constant curvature spheres are mutually quer-spheres
of the smooth Lie cyclide patches.

\subsection{Cyclic circle congruences}\label{subsect_cyclic_congruence}
In this subsection, we investigate whether a given discrete
2-dimensional circle congruence admits orthogonal surfaces
and, subsequently, gives rise to a discrete cyclic system.

Thus, we consider a circle congruence 
on a 2-dimensional domain,
\begin{equation*}
 \Gamma = (\gamma, \gamma^\perp):
  \VV \rightarrow G_{(2,1)}^{\mathcal{P}} \times G_{(2,1)}.
\end{equation*} 
\begin{defi}
A discrete Legendre map $f: \VV \rightarrow \mathcal{Z}$
is said to be \emph{orthogonal} to $\Gamma$ if, for each
vertex $i \in \VV$,
\begin{itemize}
\item[(i)] the point sphere $f_i^\mathfrak{p}$ lies
 on the circle $\Gamma_i$ and 
\item[(ii)] any other sphere in the contact element
 $f_i$ intersects the circle $\Gamma_i$ orthogonally.
\end{itemize}
\end{defi}
\begin{figure}[t]
\hspace*{-3cm}\begin{minipage}{6cm}
  \begin{overpic}[scale=.2]{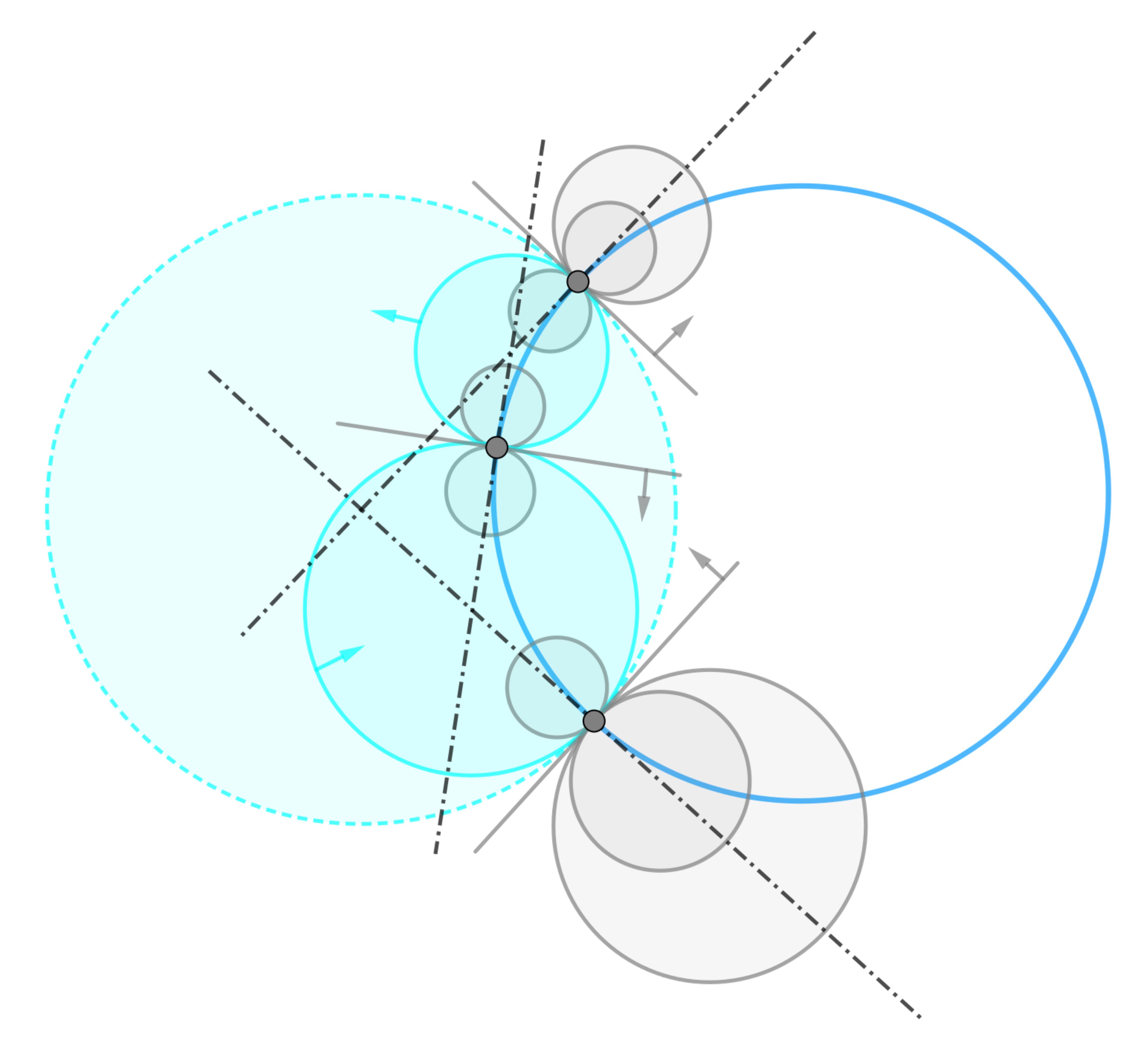}
    \put(60,85){\color{gray}$f^3_m$}
    \put(42.5,80){\color{gray}$f^3_n$}
    \put(76,10){\color{gray}$f^3_o$}
    \put(82,70){\color{UniBlue}{$\gamma_i$}}
  \end{overpic}
  \end{minipage}
  \caption{ A circular orthogonal trajectory $\gamma_i$ of
   a discrete cyclic system with three orthogonal surfaces
   $f^3_m, f^3_n$ and $f^3_o$ that pairwise form (up to
   orientation) a discrete Ribaucour pair. The cyan spheres
   provide the spheres enveloped by two orthogonal surfaces.}
\label{fig:scheme_cyclic}
\end{figure}
\noindent Algebraically, properties (i) and (ii) amount to
the following condition:
\begin{lem}\label{lem_orth_condition}
A discrete Legendre map $f$ is orthogonal to a circle
congruence $\Gamma$ if and only if any element $s_i \in f_i$
and $\mathfrak{c}_i \in \gamma_i^\perp \cap \mathcal{L}$
satisfy
\begin{equation}\label{equ_orth_legendre}
 \lspan{\mathfrak{s}_i, \mathfrak{c}_i
 + \lspan{\mathfrak{c}_i, \mathfrak{p}}\mathfrak{p}}=0 \ \ 
 \text{for any vertex } i \in \mathcal{V}.
\end{equation}
\end{lem}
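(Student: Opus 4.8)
The plan is to fix a vertex $i$ (suppressing the index) and to decompose the contact element $f_i$, a null $2$-plane, into its unique point sphere $f_i^{\mathfrak{p}}$ --- characterized inside $f_i$ by $\lspan{\mathfrak{f}_i^{\mathfrak{p}}, \mathfrak{p}}=0$ --- together with any proper sphere $\mathfrak{s}_i \in f_i$ spanning the complementary projective direction. Since the map $\mathfrak{r} \mapsto \lspan{\mathfrak{r}, \mathfrak{c}_i + \lspan{\mathfrak{c}_i, \mathfrak{p}}\mathfrak{p}}$ is linear in $\mathfrak{r}$, condition (\ref{equ_orth_legendre}) holds for every $s_i \in f_i$ exactly when it holds on these two spanning vectors. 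The argument therefore splits into a ``point sphere'' case and a ``proper sphere'' case, which I claim recover conditions (i) and (ii) respectively, and I would prove both implications simultaneously through this equivalence.

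First I would record a spanning observation about $\gamma_i^\perp$. Because $\gamma_i \in G_{(2,1)}^{\mathcal{P}}$ is orthogonal to $\mathfrak{p}$, its complement $\gamma_i^\perp$ has signature $(2,1)$ and contains $\mathfrak{p}$; the $\mathfrak{p}$-orthogonal part of $\gamma_i^\perp$ is then positive definite and hence null-free. Two consequences follow: every nonzero $\mathfrak{c}_i \in \gamma_i^\perp \cap \LLL$ satisfies $\mathfrak{c}_i \not\perp \mathfrak{p}$, so that (\ref{equ_orth_complex}) is applicable to each such $\mathfrak{c}_i$; and the null cone $\gamma_i^\perp \cap \LLL$ spans $\gamma_i^\perp$, so that testing a linear condition against all $\mathfrak{c}_i \in \gamma_i^\perp \cap \LLL$ is the same as testing it against all of $\gamma_i^\perp$.

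For the point sphere, $\lspan{\mathfrak{f}_i^{\mathfrak{p}}, \mathfrak{p}}=0$ collapses (\ref{equ_orth_legendre}) to $\lspan{\mathfrak{f}_i^{\mathfrak{p}}, \mathfrak{c}_i}=0$; imposing this for all $\mathfrak{c}_i \in \gamma_i^\perp \cap \LLL$ is, by the spanning observation, equivalent to $\mathfrak{f}_i^{\mathfrak{p}} \in (\gamma_i^\perp)^\perp = \gamma_i$, that is, to condition (i). For a proper sphere $\mathfrak{s}_i$, the vector $\mathfrak{c}_i + \lspan{\mathfrak{c}_i, \mathfrak{p}}\mathfrak{p}$ is precisely the orthogonal-intersection complex of $c_i$ furnished by (\ref{equ_orth_complex}), so (\ref{equ_orth_legendre}) states that $s_i$ meets $c_i$ orthogonally; requiring this for all $\mathfrak{c}_i \in \gamma_i^\perp \cap \LLL$ says, by the remark preceding Fact \ref{fact_two_orth} (these $c_i$ being exactly the spheres through $\Gamma_i$), that $s_i$ meets the circle $\Gamma_i$ orthogonally, which is condition (ii). Running both cases in each direction and combining them through the linearity noted above yields the stated equivalence.

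The step I expect to be the most delicate is the bookkeeping around the point sphere: one must confirm that $f_i^{\mathfrak{p}}$ genuinely occupies one of the two projective directions of $f_i$, that the collapse of (\ref{equ_orth_legendre}) to $\lspan{\mathfrak{f}_i^{\mathfrak{p}}, \mathfrak{c}_i}=0$ recovers the full membership $\mathfrak{f}_i^{\mathfrak{p}} \in \gamma_i$ rather than a weaker incidence, and that $\mathfrak{c}_i \not\perp \mathfrak{p}$ holds throughout so that the complex in (\ref{equ_orth_complex}) is well defined. Once these points are secured, everything else is a direct application of the facts already established.
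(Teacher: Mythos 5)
Your proof is correct and takes essentially the same route as the paper's: split the spheres of the contact element into the point sphere, for which (\ref{equ_orth_legendre}) collapses to incidence with all spheres of $\gamma_i^\perp$, i.e.\ membership in $\gamma_i$ (condition (i)), and the proper spheres, for which (\ref{equ_orth_legendre}) is exactly orthogonal intersection with each $c_i$ via (\ref{equ_orth_complex}) and hence with the circle $\Gamma_i$ (condition (ii)). Your extra bookkeeping --- the signature argument showing that every nonzero $\mathfrak{c}_i \in \gamma_i^\perp \cap \mathcal{L}$ satisfies $\mathfrak{c}_i \not\perp \mathfrak{p}$ and that these null vectors span $\gamma_i^\perp$ --- merely makes explicit what the paper leaves implicit, and is sound.
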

\begin{proof}
Suppose that $s_i \in f_i \cap \PL$ is a point
sphere, then equation (\ref{equ_orth_legendre}) becomes
$\lspan{\mathfrak{s}_i, \mathfrak{c}_i}=0$. This holds if
and only if the point sphere $s_i$ lies on all spheres in
$\gamma_i^\perp$ and thus is a point on the circle.

For any other sphere $s_i \in f_i$, $\lspan{\mathfrak{s}_i,
\mathfrak{p}}\neq 0$, the claim follows from
(\ref{equ_orth_complex}).
\end{proof}

\noindent The circle congruences that we are particularly
interested in stem from discrete cyclic system:
\begin{defi}\label{def_cyclic_congruence}
A discrete circle congruence will be called \emph{cyclic} if
it admits a discrete family $I \ni \lambda \mapsto f^\lambda$
of orthogonal discrete Legendre maps that give rise to a
discrete cyclic system.
\end{defi}

\noindent In what follows, we will only consider
\emph{non-degenerate} circle congruences, that is, four
circles of any elementary quadrilateral do not share a
1-parameter family of  orthogonal spheres. By this assumption
we guarantee that not all orthogonal surfaces have totally
umbilic faces.
\begin{thm}\label{thm_flat_connection}
A non-degenerate circle congruence $\Gamma=(\gamma,
\gamma^\perp): \VV \rightarrow G_{(2,1)}^{\mathcal{P}}
\times G_{(2,1)}$ is cyclic if and only if it admits a flat
connection on the trivial bundle $\VV \times \mathbb{R}^{4,2}$
comprised of Lie inversions that map adjacent circles
onto each other, that is, there exists a 2-parameter family
of linear sphere complexes $a: \mathcal{E} \rightarrow
\mathbb{P}(\mathbb{R}^{4,2})$ such that the induced Lie
inversions $\{\sigma_{e}\}_{e\in \mathcal{E}}$ satisfy
\begin{equation*}
 \sigma_{ij}(\gamma_j) = \gamma_i \ \ \text{and} \ \ 
 \sigma_{ij} \circ \sigma_{jk} = \sigma_{il} \circ \sigma_{lk}
\end{equation*}
for any quadrilateral $(ijkl)$.
\end{thm}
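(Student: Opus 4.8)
The plan is to prove the two implications separately, reusing throughout two facts established above: by Lemma~\ref{lem_rib_trafo_lie_inversion} the Ribaucour correspondence between two cospherical circles is encoded by an M-Lie inversion, and by the discussion following Corollary~\ref{cor_choice_ribtrafo} two circles interchanged by such an inversion are the curvature circles of a common Dupin cyclide. I also note at the outset that, whenever $\Gamma_i\neq\Gamma_j$, the mere requirement $\sigma_{ij}(\gamma_j)=\gamma_i$ with $\gamma_i,\gamma_j\subset\mathcal{P}$ already forces $\lspan{\mathfrak{a}_{ij},\mathfrak{p}}=0$: a short computation with the reflection formula shows that otherwise $\gamma_j\subset\{a_{ij}\}^\perp$, so that $\sigma_{ij}$ would fix $\gamma_j$ and the two circles would coincide. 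Hence every inversion of the connection is automatically an M-Lie inversion, preserves the point sphere complex, and carries $\gamma_j^\perp$ onto $\gamma_i^\perp$.

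For sufficiency, assume the flat connection $\{\sigma_e\}_{e\in\mathcal{E}}$ is given. I would fix a base vertex $o\in\VV$ and, for each point sphere on $\Gamma_o$, the orthogonal contact element it determines via Fact~\ref{fact_orth_cont_els}; this yields a discrete one-parameter family $\lambda\mapsto f_o^\lambda$. Parallel transport then defines $f_i^\lambda:=\sigma_{\mathrm{path}}(f_o^\lambda)$ along any edge path from $o$ to $i$. Since the domain is simply connected and flatness gives trivial holonomy around every quadrilateral (from $\sigma_{ij}\sigma_{jk}=\sigma_{il}\sigma_{lk}$ and $\sigma_e^2=\mathrm{id}$ one obtains $\sigma_{il}\sigma_{ij}\sigma_{jk}\sigma_{kl}=\mathrm{id}$), the map $f^\lambda$ is well defined. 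By the opening remark each $\sigma_{ij}$ preserves $\mathfrak{p}$ and maps $\gamma_j^\perp$ onto $\gamma_i^\perp$, so Lemma~\ref{lem_orth_condition} shows it sends orthogonal contact elements to orthogonal contact elements; thus every $f^\lambda$ is orthogonal to $\Gamma$. Along an edge the sphere enveloped by the Ribaucour pair (the common sphere preserved by $\sigma_{ij}$ constructed in the proof of Lemma~\ref{lem_rib_trafo_lie_inversion}) lies in $f_i^\lambda\cap f_j^\lambda$, so each $f^\lambda$ is a discrete Legendre map and the adjacent circles $\Gamma_i,\Gamma_j$ are the curvature circles of a constant Lie cyclide along the corresponding ribbon. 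This exhibits the two coordinate directions of $\VV$ as discrete channel surfaces and hence produces the discrete cyclic system of Definitions~\ref{def_cyclic_system} and~\ref{def_cyclic_congruence}.

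For necessity, assume $\Gamma$ is cyclic with orthogonal family $\{f^\lambda\}$. As recorded above, any two adjacent circles are cospherical and their Ribaucour correspondence is realized by the M-Lie inversion $\sigma_{ij}$ of Lemma~\ref{lem_rib_trafo_lie_inversion}; I would take these inversions as the connection, so that $\sigma_{ij}(\gamma_j)=\gamma_i$ holds by construction. Because $\sigma_{ij}$ preserves $\mathfrak{p}$ and maps the point sphere of $f_j^\lambda$ onto that of $f_i^\lambda$, the image $\sigma_{ij}(f_j^\lambda)$ is an orthogonal contact element of $\Gamma_i$ through the same circle point; by the pairing of Fact~\ref{fact_orth_cont_els} it coincides with $f_i^\lambda$, the orientation being pinned down by the cyclic system. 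Hence $\sigma_{ij}(f_j^\lambda)=f_i^\lambda$ for every $\lambda$.

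It remains to establish flatness, which I expect to be the main obstacle. Around a quadrilateral $(ijkl)$ I would consider $g:=\sigma_{kl}\sigma_{il}\sigma_{ij}\sigma_{jk}$; by the previous paragraph $g$ fixes the circle $\Gamma_k$ and every orthogonal contact element $f_k^\lambda$, $\lambda\in I$, and hence fixes all spheres lying in these contact elements. The non-degeneracy hypothesis ensures that the four circles of the quadrilateral admit no common one-parameter family of orthogonal spheres, which I would use to show that the spheres of $\{f_k^\lambda\}_{\lambda\in I}$ span $\mathbb{R}^{4,2}$; a Lie sphere transformation fixing a spanning set is the identity, so $g=\mathrm{id}$, which is precisely the flatness relation $\sigma_{ij}\sigma_{jk}=\sigma_{il}\sigma_{lk}$. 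Turning the non-degeneracy assumption into this spanning statement, and checking that the orientation bookkeeping of the necessity step is consistent all the way around the quadrilateral, are the points that will need the most care.
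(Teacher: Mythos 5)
Your opening observation --- that $\sigma_{ij}(\gamma_j)=\gamma_i$ with $\Gamma_i\neq\Gamma_j$ already forces $\lspan{\mathfrak{a}_{ij},\mathfrak{p}}=0$ --- is correct and is exactly the paper's argument, and your sufficiency construction (transport of orthogonal contact elements chosen along an initial circle via Fact~\ref{fact_orth_cont_els}) is the paper's as well. The proof breaks, however, precisely at the step you flagged as delicate: flatness in the necessity direction. The spanning statement you hope to extract from non-degeneracy is false for structural reasons. Every sphere that intersects the circle $\Gamma_k$ orthogonally lies in the $4$-dimensional subspace $\gamma_k\oplus\lspan{\mathfrak{p}}$: in the notation of the proof of Fact~\ref{fact_orth_cont_els}, each $\mathcal{O}_m$ is contained in $\lspann{\mathfrak{s}_1+\mathfrak{p},\ \mathfrak{s}_2+\mathfrak{p}}^\perp=\gamma_k\oplus\lspan{\mathfrak{p}}$, independently of the point $m$. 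Hence the spheres of the entire family $\{f_k^\lambda\}_{\lambda\in I}$ span at most this $4$-dimensional space, never $\mathbb{R}^{4,2}$, no matter how non-degenerate the congruence is; the holonomy $g$ is therefore pinned down only on $\gamma_k\oplus\lspan{\mathfrak{p}}$ and may a priori still act as a nontrivial rotation of the positive definite plane $\gamma_k^\perp\cap\{\mathfrak{p}\}^\perp$. This is exactly the hole the paper closes by using the \emph{whole} cyclic system rather than only the orthogonal family: cyclicity (Definition~\ref{def_cyclic_congruence}) also provides two families of discrete channel surfaces, whose curvature spheres $s_k^1,s_k^2\in\gamma_k^\perp$ are likewise preserved by the holonomy around a face; since $s_k^1$, $s_k^2$ and $\mathfrak{p}$ span $\gamma_k^\perp$, the connection is flat on $\gamma^\perp$ as well, and only then on all of $\mathbb{R}^{4,2}$. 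The missing idea, available to you by hypothesis, is to bring these channel surfaces into the flatness argument; the non-degeneracy hypothesis cannot do this job, since it concerns common orthogonal spheres of the four circles of a face, which is a different matter.

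Two further points concern the sufficiency direction. First, Definition~\ref{def_cyclic_system} demands a triply orthogonal system in which two coordinate families are discrete channel surfaces; your transported family $f^\lambda$ only yields the third family. The paper completes the system by choosing two orthogonally intersecting spheres $\mathfrak{r}_{i_0}^1,\mathfrak{r}_{i_0}^2\in\gamma_{i_0}^\perp\cap\LLL$, forming the contact elements $\lspann{\mathfrak{r}_{i_0}^\mu, p_0^\lambda}$ along the initial circle and transporting them too; your sentence that the construction ``exhibits the two coordinate directions of $\VV$ as discrete channel surfaces'' presupposes contact elements along the circles that you never define. Second, your sufficiency argument nowhere invokes non-degeneracy, whereas this is the direction in which the paper needs it --- to guarantee that the transported contact elements constitute genuine discrete Legendre maps; as the Remark following the theorem indicates, the implication can fail for degenerate congruences, so a proof of this implication that never uses the hypothesis cannot be complete.
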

\begin{rem} For degenerate circle congruences that admit a
flat connection of this kind the result can indeed fail:
consider, for example, a discrete circle congruence that
consists of parallel lines such that the point spheres
obtained as intersection of the lines with a fixed orthogonal
plane do not form a 2-dimensional circular net. Then, this
circle congruence does not give rise to a discrete cyclic
system, because there are no orthogonal surfaces that are
discrete Legendre maps. However, the reflections in the
bisecting planes of two adjacent parallel lines provide a flat
connection in the sense of Theorem \ref{thm_flat_connection}.
\end{rem} 

\begin{proof}
Suppose that the circle congruence $\Gamma: \VV \rightarrow
G_{(2,1)}^{\mathcal{P}} \times G_{(2,1)}$ is cyclic. Then,
by definition, there exists an associated discrete cyclic
system $f=(f^1,f^2,f^3)$ on $\VV \times I$, where $f^1$
and $f^2$ denote the two coordinate surface families of
discrete channel surfaces.

These discrete channel surfaces induce a canonical
M-Lie inversion between two adjacent circles $\Gamma_i$
and $\Gamma_j$ of $\Gamma$: namely, the M-Lie inversion
that interchanges the curvature circles of the constant Lie
cyclide provided by one of the discrete channel surfaces,
$f^1$ or $f^2$.
We denote these M-Lie inversions by $\sigma_{ij}$ and obtain:
$\sigma_{ij}(\gamma_j)=\gamma_i$.

Moreover, by construction, these M-Lie inversions map
adjacent contact elements (frames) $f_i$ and $f_j$
of the discrete cyclic system
onto each other. Since the orthogonal surfaces $f^3$ are
discrete Legendre maps, the connection provided by $\sigma$
is flat on $\gamma$. Furthermore, since, for any $i \in \VV$
and $m \in I$, we have $f_{i,m}^\mu \cap \gamma_i^\perp =:
s_i^\mu \in \LL$, $\mu=1,2$, it follows that
\begin{equation*}
 (\sigma_{ij} \circ \sigma_{jk} \circ \sigma_{kl}
  \circ \sigma_{li})(s_i^\mu) = s_i^\mu.
\end{equation*} 
Together with the fact that the M-Lie inversions $\sigma$
preserve the point sphere complex $\mathfrak{p}$, we conclude
that the induced connection is also flat on $\gamma^\perp$.

\truepar
Conversely, assume that a circle congruence $\Gamma$ admits
a flat connection provided by Lie inversions $\sigma_{ij}$
that map adjacent circles $\gamma_i$ and $\gamma_j$
onto each other, that is, any point sphere in $\gamma_i$
is mapped to a point sphere in $\gamma_j$. Thus the Lie
inversion $\sigma_{ij}$ preserves the point sphere complex
$\mathfrak{p}$ and is an M-Lie inversion.

To prove that $\Gamma$ is indeed cyclic, we will construct
a discrete family $I \ni \lambda \mapsto f^{3,\lambda}$
of orthogonal surfaces that is part of a discrete cyclic
system. To do so, we fix point spheres $I \ni \lambda
\mapsto p^\lambda_0$ along an initial circle $\Gamma_{i_0}$
of the circle congruence. Furthermore, according to Fact
\ref{fact_orth_cont_els}, we choose contact elements
$\lambda \mapsto f^{3,\lambda}_{i_0}$ that consist of
spheres orthogonal to $\Gamma_{i_0}$ such that any two
adjacent contact elements share a common sphere
(cf Figure \ref{fig:scheme_cyclic}).

Then transport of these contact elements $f_{i_0}^{3,\lambda}$
along edges $ij\in\mathcal{E}$ by the M-Lie inversions
$\sigma_{ij}$
consecutively defines a discrete family of orthogonal surfaces
$\lambda \mapsto f^{3,\lambda}$ by
\begin{equation*}
 f_j^{3,\lambda} := \sigma_{ji}(f_i^{3,\lambda}).
\end{equation*}
Due to the flatness of the connection, this construction
is well-defined. Moreover, since any M-Lie inversion
$\sigma_{ij}$ fixes a (curvature) sphere $s_{ij}^\lambda
\in f_i^{3,\lambda} \cap f_j^{3,\lambda}$ and the circle
congruence is non-degenerate, we indeed obtain discrete
Legendre maps $f^{3,\lambda}$.

\truepar
To complete the proof, it remains to construct two families
$\lambda_1 \mapsto f^{1,\lambda_1}$ and $\lambda_2 \mapsto
f^{2,\lambda_2}$ of discrete channel surfaces that supplement
the family $f^{3,\lambda}$ of orthogonal surfaces. To equip
the point sphere maps with suitable contact elements, we
choose two spheres $\mathfrak{r}_{i_0}^1, \mathfrak{r}_{i_0}^2
\in \gamma_{i_0}^\perp \cap \mathcal{L}$ that intersect
each other orthogonally and define contact elements along
the circle $\Gamma_{i_0}$ by
\begin{equation*}
 f^{\mu, i_0}_{\lambda} :
  = \lspann{\mathfrak{r}_{i_0}^\mu, p_{0}^\lambda} \ \ 
  \text{for } \ \mu \in \{ 1,2\}.
\end{equation*}
Then transport of these two contact elements by means of
the M-Lie inversions $\sigma$ along edges provides two
families of discrete channel surfaces,
with the circles of the congruence as curvature lines.

Thus, in summary, the constructed map $f:=(f^1, f^2, f^3)$
provides a discrete cyclic system with underlying circle
congruence $\Gamma$.
\end{proof}

\noindent Hence, to any discrete cyclic system we associate a
unique 2-parameter family of M-Lie inversions~$\sigma_{ij}$,
defined on edges of the associated circle congruence,
that simultaneously map adjacent point spheres of all
orthogonal surfaces onto each other. Note that these M-Lie
inversions are symmetric, that is, $\sigma_{ij}=\sigma_{ji}$.

\truepar
These M-Lie inversions induce a Ribaucour correspondence
between any two adjacent circles, which are therefore
cospherical. Furthermore, they reveal a property well-known
from smooth cyclic systems:
\begin{cor}\label{cor_const_cross}
The cross-ratio of the point spheres of any four orthogonal
surfaces of a discrete cyclic system is constant
along the orthogonal surfaces.
\end{cor}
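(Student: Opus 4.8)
The plan is to exploit the formula for cross-ratios of spheres related by a Lie inversion, equation (\ref{equ_formula_cross_ratio}), together with the structure established in Theorem \ref{thm_flat_connection}. Recall that to the discrete cyclic system we have associated a 2-parameter family of M-Lie inversions $\sigma_{ij}$ on edges $ij\in\mathcal{E}$ that map adjacent point spheres of \emph{all} orthogonal surfaces $f^{3,\lambda}$ onto each other. The key observation is that these M-Lie inversions depend only on the edge of the circle congruence and not on the individual orthogonal surface. Thus, passing from the point spheres over the circle $\Gamma_i$ to those over an adjacent circle $\Gamma_j$ is effected by one and the same M\"obius transformation $\sigma_{ij}$, simultaneously for every choice of the discrete parameter $\lambda$.

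First I would fix four orthogonal surfaces, indexed by parameters $\lambda_1,\lambda_2,\lambda_3,\lambda_4\in I$, and consider their point spheres $p^{\lambda_1}_i, p^{\lambda_2}_i, p^{\lambda_3}_i, p^{\lambda_4}_i \in \PL$ lying on the circle $\Gamma_i$. Since a cross-ratio of four concircular point spheres is a M\"obius invariant, and since $\sigma_{ij}$ is an M-Lie inversion (hence a M\"obius transformation) carrying each $p^{\lambda}_i$ to $p^{\lambda}_j$, we immediately get
\begin{equation*}
 \text{cr}(p^{\lambda_1}_i, p^{\lambda_2}_i, p^{\lambda_3}_i, p^{\lambda_4}_i)
 = \text{cr}(\sigma_{ij}(p^{\lambda_1}_i), \sigma_{ij}(p^{\lambda_2}_i),
   \sigma_{ij}(p^{\lambda_3}_i), \sigma_{ij}(p^{\lambda_4}_i))
 = \text{cr}(p^{\lambda_1}_j, p^{\lambda_2}_j, p^{\lambda_3}_j, p^{\lambda_4}_j).
\end{equation*}
This establishes that the cross-ratio is unchanged under transport across any single edge of the circle congruence. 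As indicated after equation (\ref{equ_formula_cross_ratio}), one may compute this cross-ratio of concircular point spheres directly via appropriate M-Lie inversions, but for the invariance argument the bare fact that $\sigma_{ij}$ is a M\"obius transformation preserving the four point spheres' images is all that is needed.

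The remaining step is to propagate this edge-wise invariance to invariance along the whole orthogonal surface, that is, over the 2-dimensional vertex set $\VV$. Here I would invoke the flatness of the connection from Theorem \ref{thm_flat_connection}: since transport is independent of the path chosen between two vertices, the cross-ratio computed over $\Gamma_i$ agrees with that computed over any other $\Gamma_k$ reachable by a sequence of edges, and a simply connected domain guarantees a connecting edge-path exists. I expect the main (and really only) subtlety to be a matter of careful bookkeeping rather than substance: one must check that the cross-ratio being invariant is genuinely a M\"obius invariant of the four \emph{unoriented} point spheres, so that the orientation-agnostic nature of the point sphere map does not interfere, and that the four chosen parameters index genuinely distinct orthogonal surfaces so the cross-ratio is well defined. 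Once these routine points are settled, the constancy statement follows directly from M\"obius invariance plus path-independence.
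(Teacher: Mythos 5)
Your proposal is correct and follows essentially the same route as the paper: the corollary is stated there as a direct consequence of the fact that the M-Lie inversions $\sigma_{ij}$ of the flat connection \emph{simultaneously} map adjacent point spheres of all orthogonal surfaces onto each other, and, being M\"obius transformations, preserve the cross-ratio of concircular point spheres. Your only superfluous step is invoking flatness for path-independence --- once the cross-ratio is shown equal across every edge, connectedness of the domain alone gives constancy, flatness being needed only (as in Theorem \ref{thm_flat_connection}) to have the well-defined family of orthogonal surfaces in the first place.
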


\noindent Moreover, by using Fact \ref{fact_orth_cont_els},
we observe another relation between any two orthogonal
surfaces of a discrete cyclic system (see also Figure
\ref{fig:scheme_cyclic}):
\begin{cor}\label{cor_adjacent_rib}
Up to a possible change of orientation, any two orthogonal
surfaces of a discrete cyclic system are related by a discrete
Ribaucour transformation.
\end{cor}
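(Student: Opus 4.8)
The plan is to upgrade the Ribaucour relation between \emph{adjacent} orthogonal surfaces --- which is already supplied by condition (iii) of Definition \ref{def_disc_tops} --- to arbitrary pairs, the additional input being Fact \ref{fact_orth_cont_els}. Fix two orthogonal surfaces $f^{3,\lambda}$ and $f^{3,\mu}$ of the discrete cyclic system together with their point spheres $p^\lambda_i,p^\mu_i$ and a vertex $i\in\VV$. By orthogonality, $p^\lambda_i$ and $p^\mu_i$ are two points of the circle $\Gamma_i$, and the contact elements $f^{3,\lambda}_i$, $f^{3,\mu}_i$ consist precisely of the spheres meeting $\Gamma_i$ orthogonally at these two points. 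I would therefore apply the second part of Fact \ref{fact_orth_cont_els}, with $m=p^\lambda_i$ and $n=p^\mu_i$, to obtain a common sphere of the two contact elements up to orientation, that is, $f^{3,\lambda}_i\cap f^{3,\mu}_i\neq\{0\}$ or $f^{3,\lambda}_i\cap\sigma_\mathfrak{p}(f^{3,\mu}_i)\neq\{0\}$; this common sphere is the Ribaucour sphere at the vertex $i$.

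Next I would promote this pointwise statement to a global one. Since the orthogonal surfaces arise by transporting contact elements along edges with the M-Lie inversions $\sigma_{ij}$, and since these preserve $\mathfrak{p}$, hence commute with $\sigma_\mathfrak{p}$ and carry common spheres to common spheres, the alternative appearing in Fact \ref{fact_orth_cont_els} (``unflipped'' versus ``flipped'') propagates consistently from an initial circle $\Gamma_{i_0}$ to every vertex. Consequently, after a single global change of orientation --- replacing $f^{3,\mu}$ by $\sigma_\mathfrak{p}(f^{3,\mu})$ if necessary --- the two surfaces share a common sphere $r_i\in f^{3,\lambda}_i\cap f^{3,\mu}_i$ at \emph{every} vertex, so that $r:\VV\to\LL$ is a sphere congruence enveloped by both surfaces.

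Finally I would confirm that the two surfaces are genuinely Ribaucour related by checking the edge condition, namely that on each edge $ij$ the four point spheres are concircular. Here the cyclic M-Lie inversion $\sigma_{ij}$ satisfies $\sigma_{ij}(p^\lambda_i)=p^\lambda_j$ and $\sigma_{ij}(p^\mu_i)=p^\mu_j$, so that formula (\ref{equ_formula_cross_ratio}) with $\mathfrak{s}_1=\mathfrak{p}^\lambda_i$, $\mathfrak{s}_2=\mathfrak{p}^\mu_i$ and $\mathfrak{a}=\mathfrak{a}_{ij}$ (its hypotheses holding because adjacent circles, and the two point spheres, are distinct) exhibits the cross-ratio of $p^\mu_i,p^\mu_j,p^\lambda_j,p^\lambda_i$ as a real number; hence these four point spheres are concircular. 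Together with the common spheres $r_i$ this presents $f^{3,\lambda}$ and $f^{3,\mu}$, up to the orientation flip above, as a discrete Ribaucour pair. I expect the main obstacle to be precisely the \emph{non-adjacent} case, where condition (iii) no longer hands over the Ribaucour sphere: one must extract it from Fact \ref{fact_orth_cont_els} and, crucially, verify that the required change of orientation is the \emph{same} at all vertices --- which is exactly what the compatibility of the $\sigma_{ij}$ with $\sigma_\mathfrak{p}$ secures.
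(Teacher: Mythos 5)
Your first two paragraphs reproduce, with more care, essentially the paper's own argument: the corollary is stated there as a direct consequence of Fact \ref{fact_orth_cont_els}, because at each vertex the contact elements of two orthogonal surfaces consist of spheres that intersect the circle $\Gamma_i$ orthogonally at two of its points, hence share a common sphere up to orientation; these common spheres constitute the enveloped sphere congruence. Your second paragraph --- securing that the ``flipped/unflipped'' alternative is the same at every vertex because the transport M-Lie inversions fix $\mathfrak{p}$ and therefore commute with $\sigma_\mathfrak{p}$ --- addresses a point the paper leaves implicit, and your argument for it is correct; note that the existence of the transporting M-Lie inversions for an arbitrary discrete cyclic system (not just one constructed by transport) is exactly what the first half of the proof of Theorem \ref{thm_flat_connection} and the remark following it provide.

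Your third paragraph, however, is both unnecessary and logically flawed. Unnecessary, because in this paper a discrete Ribaucour pair is \emph{by definition} a pair of discrete Legendre maps enveloping a common sphere congruence (see the beginning of Section \ref{sect_from_rib}, and the remark after Definition \ref{def_disc_tops} where adjacent coordinate surfaces are declared Ribaucour related precisely on account of the shared spheres $s^\mu_{ij}$); once the common spheres $r_i$ are produced, the proof is complete. Flawed, because formula (\ref{equ_formula_cross_ratio}) returns a real number for \emph{any} configuration of spheres pairwise related by a Lie inversion, so the reality of its output cannot certify concircularity; the paper's remark says the formula can be used to \emph{compute} the cross-ratio of four point spheres that are already known to be concircular, not to detect concircularity. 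If you want that statement, argue directly: since $\mathfrak{p}^\lambda_j=\sigma_{ij}(\mathfrak{p}^\lambda_i)\in\lspann{\mathfrak{p}^\lambda_i,\mathfrak{a}_{ij}}$ and $\mathfrak{p}^\mu_j=\sigma_{ij}(\mathfrak{p}^\mu_i)\in\lspann{\mathfrak{p}^\mu_i,\mathfrak{a}_{ij}}$, all four point spheres lie in the at most $3$-dimensional subspace $\lspann{\mathfrak{p}^\lambda_i,\mathfrak{p}^\mu_i,\mathfrak{a}_{ij}}$, which is orthogonal to $\mathfrak{p}$ because $\sigma_{ij}$ is an M-Lie inversion; such a subspace of $\{\mathfrak{p}\}^\perp$ containing distinct null lines is a $(2,1)$-plane, so the four points are indeed concircular.
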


\noindent However we note that, in contrast to the smooth
theory, the reconstruction of a discrete cyclic system
from a cyclic circle congruence is not unique. There are
several ambiguities: firstly, a discrete cyclic circle
congruence can admit various flat connections in the sense
of Theorem \ref{thm_flat_connection}. This non-uniqueness
of the connection stems from the generic existence of two
Ribaucour correspondences between two cospherical circles
(cf Lemma \ref{lem_rib_trafo_lie_inversion}). An example
of different flat connections associated to a face of a
discrete cyclic circle congruence is illustrated in Figure
\ref{fig:ambiguity}.

\begin{figure}[t]
\hspace{-1.5cm}\begin{minipage}{7.5cm}
  \includegraphics[scale=0.238]{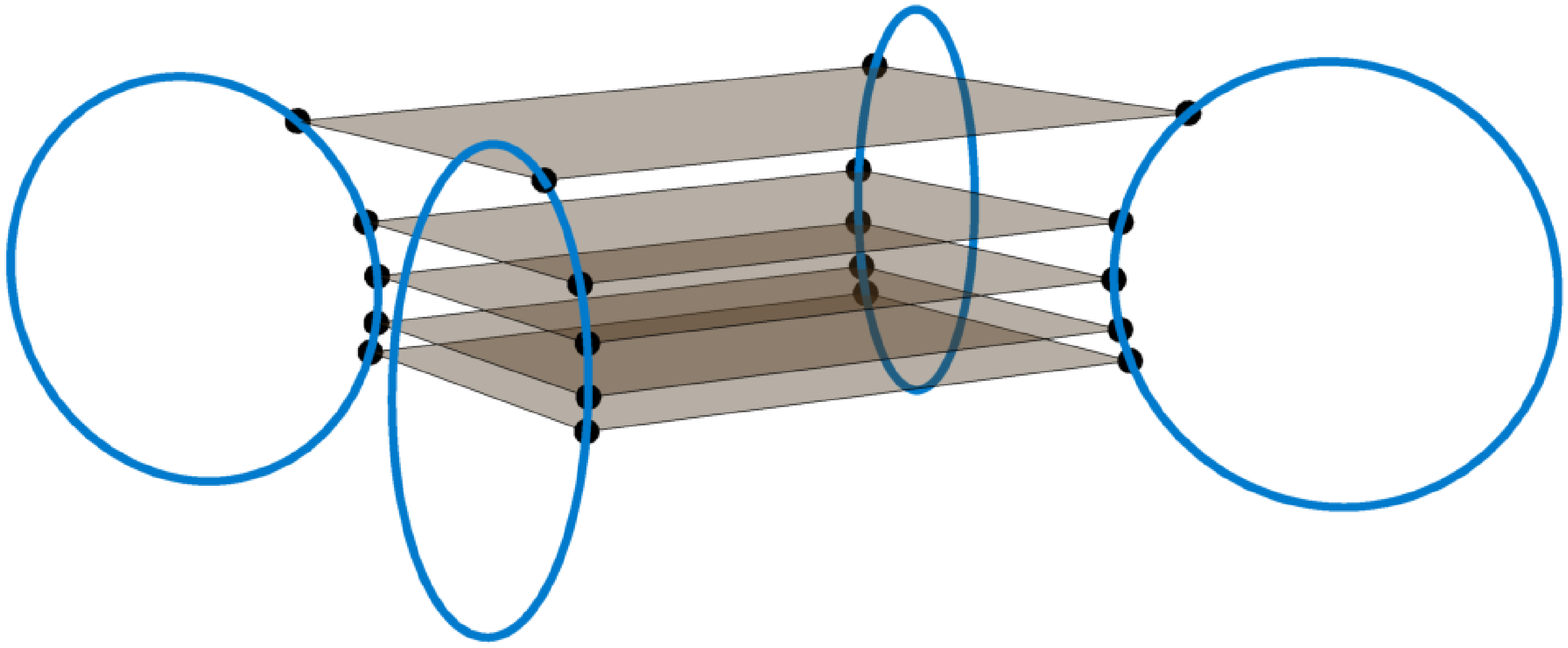}
 \end{minipage}
 \vspace*{-0.5cm} \begin{minipage}{6cm}
  \includegraphics[scale=0.25]{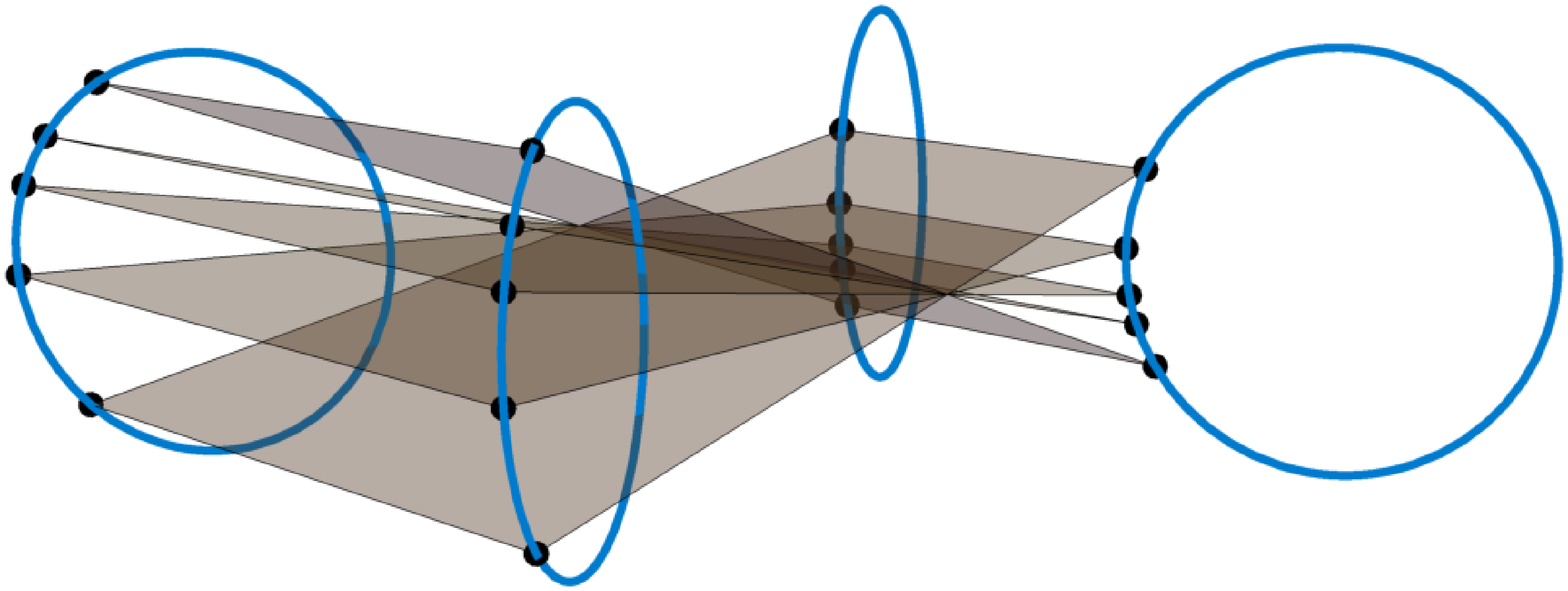}
 \end{minipage}
 \caption{Contrary to the smooth case, the family of
 orthogonal surfaces to a discrete cyclic circle congruence
 is not necessarily unique.}
\label{fig:ambiguity}
\end{figure}

Hence a discrete circle congruence does generically not have
a unique family of orthogonal surfaces,
even after an initial circle is equipped with a point sampling.

\truepar
Secondly, once a discrete family of orthogonal surfaces is
established, we are left with a $1$-parameter choice for
the other two families of coordinate surfaces, namely the
discrete channel surfaces. Although the circular curvature
lines are already fixed
we have the choice of a pair of orthogonally intersecting
curvature spheres for the discrete channel surfaces
at an initial circle
(see the choice of the spheres $r_{i_0}^1$ and $r_{i_0}^2$
in the proof of Theorem \ref{thm_flat_connection}).

\truepar
\noindent Recall \cite{MR0115134, cyclic_guichard_eth,
rib_cyclic} that, in the smooth case, the existence of three
orthogonal surfaces of a circle congruence already implies
that the congruence is cyclic.
We obtain an analogous statement in the discrete case.

We say that three orthogonal discrete Legendre maps are
\emph{generic} if, for any edge of the circle congruence,
the three curvature spheres span a $(2,1)$-plane of
$\mathbb{R}^{4,2}$.
\begin{thm}\label{thm_three_cyclic}
A discrete circle congruence that admits three generic
orthogonal discrete Legendre maps is cyclic.
\end{thm}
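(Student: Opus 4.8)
The plan is to reduce the statement to Theorem \ref{thm_flat_connection}: it suffices to manufacture, out of the three given generic orthogonal discrete Legendre maps $g^1,g^2,g^3$, a flat connection of M-Lie inversions interchanging adjacent circles of $\Gamma$. For an edge $ij$ and a surface $g^a$ write $s_{ij}^a\in g_i^a\cap g_j^a$ for the curvature sphere along that edge and $p_i^a$ for the point sphere of $g_i^a$; by orthogonality of $g^a$, each non-point sphere $s_{ij}^a$ meets both $\Gamma_i$ and $\Gamma_j$ orthogonally, while $p_i^a$ is a point of $\Gamma_i$ (and of $\Gamma_j$ for $p_j^a$). The first task is the linear-algebraic backbone. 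Passing to the complexes $\mathfrak{a}_a:=\mathfrak{s}_{ij}^a+\lspan{\mathfrak{s}_{ij}^a,\mathfrak{p}}\mathfrak{p}$ from (\ref{equ_orth_complex}), orthogonality of $s_{ij}^a$ to both circles reads $\mathfrak{a}_a\perp(W_i+W_j)$, where $W_i:=\gamma_i^\perp\cap\{\mathfrak{p}\}^\perp$ is the spacelike $(2,0)$-plane with $\gamma_i^\perp=\lspann{\mathfrak{p}}\oplus W_i$. A dimension count inside $\{\mathfrak{p}\}^\perp$ then shows that the genericity hypothesis — that $s_{ij}^1,s_{ij}^2,s_{ij}^3$ span a $(2,1)$-plane $V_{ij}$ — forces $\dim(W_i+W_j)=3$ and $\mathfrak{p}\in V_{ij}$.

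From this I would read off two consequences. First, $\gamma_i^\perp\cap\gamma_j^\perp=\lspann{\mathfrak{p}}\oplus(W_i\cap W_j)$ has signature $(1,1)$, so by Fact \ref{fact_circles_on_sphere} the adjacent circles $\Gamma_i$ and $\Gamma_j$ are cospherical (the two null lines being the common sphere and its reverse). Second, $V_{ij}=\lspann{\mathfrak{p}}\oplus A$ with $A:=V_{ij}\cap\{\mathfrak{p}\}^\perp=\lspann{\mathfrak{a}_1,\mathfrak{a}_2,\mathfrak{a}_3}\subset\gamma_i\cap\gamma_j$; in particular every curvature sphere $s_{ij}^a$ lies in $\lspann{\gamma_i,\mathfrak{p}}$. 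This places all the data of the three orthogonal surfaces along the edge inside the $(2,2)$-space $\lspann{\gamma_i,\mathfrak{p}}$ of spheres orthogonal to $\Gamma_i$, a fact that will govern the flatness argument below.

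Next I would define the connection: for each edge let $\sigma_{ij}$ be the Lie inversion fixing $V_{ij}$ pointwise, i.e. $\mathfrak{a}_{ij}\in V_{ij}^\perp$, singled out among the two such involutions exchanging $\gamma_i$ and $\gamma_j$ by the normalisation $\sigma_{ij}(p_j^1)=p_i^1$. Since $\mathfrak{p}\in V_{ij}$, the inversion automatically preserves $\mathfrak{p}$ and is therefore an M-Lie inversion with $\sigma_{ij}(\gamma_j)=\gamma_i$. The decisive claim is that this one inversion transports all three surfaces at once, $\sigma_{ij}(g_j^a)=g_i^a$ for $a=1,2,3$: because $\sigma_{ij}$ fixes each $s_{ij}^a$, it sends $g_j^a=\lspann{\mathfrak{s}_{ij}^a,\mathfrak{p}_j^a}$ to a contact element through $s_{ij}^a$ whose point sphere lies on $\Gamma_i$, and one must show this image equals $g_i^a$, i.e. $\sigma_{ij}(p_j^a)=p_i^a$. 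This is exactly where genericity is used: the relation $\mathfrak{p}\in\lspann{\mathfrak{s}_{ij}^1,\mathfrak{s}_{ij}^2,\mathfrak{s}_{ij}^3}$ is what guarantees that the three point correspondences $p_j^a\leftrightarrow p_i^a$ all realise the \emph{same} one of the two Ribaucour correspondences between $\Gamma_i$ and $\Gamma_j$ (cf Lemma \ref{lem_rib_trafo_lie_inversion}), so that the choice forced by $a=1$ is automatically compatible with $a=2,3$.

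Finally I would prove flatness. Fix a quadrilateral $(ijkl)$ and consider the monodromy $M_i:=\sigma_{ij}\circ\sigma_{jk}\circ\sigma_{kl}\circ\sigma_{li}$. Since the $g^a$ are globally defined discrete Legendre maps, $M_i$ fixes each contact element $g_i^a$, hence the three point spheres $p_i^1,p_i^2,p_i^3\in\gamma_i$; a Möbius transformation of a circle fixing three points is the identity, so $M_i$ fixes $\gamma_i$ pointwise, and it also fixes the curvature spheres $s_{ij}^a\in\lspann{\gamma_i,\mathfrak{p}}$ together with $\mathfrak{p}$. Thus $M_i$ fixes the whole $(2,2)$-space $\lspann{\gamma_i,\mathfrak{p}}$ pointwise, and the only thing that can survive is a rotation of the complementary plane $W_i$ — geometrically, a ``rotation about $\Gamma_i$'' acting on the pencil $\gamma_i^\perp$ of spheres through the circle, which precisely the orthogonal-sphere data along the edges cannot detect. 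Ruling out this residual rotation, and thereby obtaining $M_i=\mathrm{id}$ and the flatness relation $\sigma_{ij}\circ\sigma_{jk}=\sigma_{il}\circ\sigma_{lk}$, is the main obstacle: I expect it to require the discrete integrability of the three orthogonal surfaces on the quadrilateral (the closing condition of a principal net, linking the four curvature spheres and the four cospherical circles) together with the standing non-degeneracy assumption. Once flatness is secured, Theorem \ref{thm_flat_connection} immediately yields that $\Gamma$ is cyclic.
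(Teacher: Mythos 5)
Your construction is, up to its final step, the same as the paper's own proof: genericity forces adjacent circles to be cospherical (the paper gets this from the span condition around its equation (\ref{equ_space}) together with Fact \ref{fact_circles_on_sphere}), and the edge connection is the M-Lie inversion of Lemma \ref{lem_rib_trafo_lie_inversion}/Corollary \ref{cor_choice_ribtrafo} that extends the point correspondence of the orthogonal nets; your linear algebra ($\mathfrak{p}\in V_{ij}$, $V_{ij}=\lspann{\mathfrak{p}}\oplus(\gamma_i\cap\gamma_j)$ with $\gamma_i\cap\gamma_j$ spacelike, all edge data confined to $\lspann{\gamma_i,\mathfrak{p}}$) is correct and in fact sharper than what the paper writes out. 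The problem is that you stop exactly where the theorem has to be proved. Flatness is not fine print: Theorem \ref{thm_flat_connection} requires $\sigma_{ij}\circ\sigma_{jk}=\sigma_{il}\circ\sigma_{lk}$ as maps of all of $\mathbb{R}^{4,2}$, and the plane $W_i=\gamma_i^\perp\cap\{\mathfrak{p}\}^\perp$ on which your residual rotation lives is precisely the part used to build the two channel-surface families (the spheres $\mathfrak{r}^1,\mathfrak{r}^2\in\gamma^\perp$ are transported there), so an unresolved rotation would exactly destroy cyclicity in the sense of Definition \ref{def_cyclic_congruence}. You reduce flatness correctly to excluding that rotation and then state that you \emph{expect} it to follow from integrability; an expectation is an acknowledged gap, not a proof. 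One way to actually close it in the spirit of your reduction: each route transport $T=\sigma_{ij}\circ\sigma_{jk}$ and $T'=\sigma_{il}\circ\sigma_{lk}$ is a product of two Lie inversions, hence fixes the orthogonal complement of the $2$-plane spanned by its two complex vectors pointwise, and $T-\mathrm{id}$, $T'-\mathrm{id}$ have image in those $2$-planes; since $T$ and $T'$ agree on the $4$-dimensional space $\lspann{\gamma_k,\mathfrak{p}}$, generically the image of $T-\mathrm{id}$ on that space already spans the $2$-plane, forcing the two planes, and then the two maps, to coincide. Nothing of this kind appears in your text.

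There is a second, unacknowledged gap: the claim that the single inversion $\sigma_{ij}$, normalised by $\sigma_{ij}(p_j^1)=p_i^1$, automatically satisfies $\sigma_{ij}(p_j^a)=p_i^a$ for $a=2,3$ is asserted (``this is exactly where genericity is used'') but never argued, and edge genericity alone cannot deliver it. Normalise two adjacent circles by a M\"obius transformation to concentric circles in a plane: their common orthogonal spheres are the oriented planes through the common axis, and \emph{both} Ribaucour correspondences (the radial and the antipodal one) produce point pairs joined by such planes; so three orthogonal Legendre edges can mix the two correspondences while their three curvature spheres still span a $(2,1)$-plane containing $\mathfrak{p}$. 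Hence compatibility of the three point correspondences needs input beyond the span condition --- it is exactly what the paper packages into ``the correspondence induced by the point sphere maps of the orthogonal surfaces'' before invoking Corollary \ref{cor_choice_ribtrafo}, using that the $g^\mu$ are Legendre maps on the whole non-degenerate congruence rather than data on a single edge. In short: your skeleton matches the paper and the preparatory lemmas are right, but the two statements that actually prove the theorem --- one inversion serving all three nets, and full flatness --- are both missing.
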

\begin{proof}
Let $\Gamma:\VV \rightarrow G_{(2,1)}^{\mathcal{P}} \times
G_{(2,1)}$ be a circle congruence that admits at least
three generic orthogonal discrete Legendre maps $g^\mu$,
$\mu \in \{1,2,3 \}$. To prove that $\Gamma$ is cyclic,
we will construct a flat connection as described in Theorem
\ref{thm_flat_connection}.

Firstly, consider two adjacent circles $\Gamma_i$ and
$\Gamma_j$ of the congruence $\Gamma$ and denote by
\begin{equation*}
 \mathfrak{s}_i, \mathfrak{t}_i \in \gamma_i^\perp \cap \LLL
\end{equation*}
two spheres that intersect in the circle $\Gamma_i$,
and similarly for $\Gamma_j$.
Since the curvature spheres $s_{ij}^\mu$ of the orthogonal
surfaces that belong to the edge $(ij)$ intersect all spheres
in $\gamma_i^\perp$ and $\gamma_j^\perp$ orthogonally
we obtain
\begin{equation}\label{equ_space}
 s_{ij}^\mu \in \lspann{\mathfrak{s}_1 + \lspan{\mathfrak{s}_1,
  \mathfrak{p}}\mathfrak{p}, \  
 \mathfrak{s}_2 + \lspan{\mathfrak{s}_2,
  \mathfrak{p}}\mathfrak{p}, \ 
 \mathfrak{t}_1 + \lspan{\mathfrak{t}_1,
  \mathfrak{p}}\mathfrak{p}, \ 
 \mathfrak{t}_2 + \lspan{\mathfrak{t}_2,
  \mathfrak{p}}\mathfrak{p}}^\perp.
\end{equation}
Hence we deduce that this span is at most 3-dimensional,
which implies that the space
\begin{equation*}
 \lspann{\mathfrak{s}_1, \mathfrak{s}_2, \mathfrak{t}_1,
 \mathfrak{t}_2, \mathfrak{p}} \subset \mathbb{R}^{4,2}
\end{equation*}
is at most 4-dimensional.
Therefore, due to Fact \ref{fact_circles_on_sphere},
the two circles $\Gamma_1$ and $\Gamma_2$ lie on a common
sphere.
This sphere is unique up to orientation.

As a consequence, using Corollary \ref{cor_choice_ribtrafo},
there exists a unique Ribaucour correspondence between any
two adjacent circles that extends the correspondence
induced by the point sphere maps of the orthogonal surfaces
$\{g^\mu\}_\mu$. By Lemma \ref{lem_rib_trafo_lie_inversion},
this correspondence is described by an M-Lie inversion
$\sigma_{ij}$.

Since $\{g^\mu\}_\mu$ are discrete Legendre maps, these M-Lie
inversions provide a flat connection for $\Gamma$.
Consequently, the circle congruence $\Gamma$ is indeed cyclic.
\end{proof}

To illustrate the situation we investigate a standard
example for cyclic circle congruences, namely, those that
stem from a parallel family of surfaces in a space form.
To start with we discuss discrete parallel surfaces in a
flat ambient space form as described in \cite{MR2657431}:
\begin{ex}\label{ex_parallel_flat}
Fix a point sphere complex $\mathfrak{p}$ and
a flat space form by choosing a lightlike space form vector $q
\in \LL$, $\lspan{\mathfrak{p}, \mathfrak{q}}=0$. Let $f:\VV
\rightarrow \mathcal{Z}$ be a discrete Legendre map with
space form projection
\begin{equation*}
 (\mathfrak{f}^\mathfrak{p}, \mathfrak{t}):
  \VV \rightarrow \QQ \times \PP,
\end{equation*}
where $\mathfrak{f}^\mathfrak{p}$ and $\mathfrak{t}$
denote the point sphere map and tangent plane congruence
in $f$, respectively.
Moreover, we denote the family of parallel discrete
Legendre maps of $f$ by $I \ni \lambda \rightarrow
(\mathfrak{f}^{\mathfrak{p},\lambda}, \mathfrak{t^\lambda})$.

At each vertex, the point spheres of the parallel family
lie on the line orthogonal to the tangent plane. Using Fact
\ref{fact_orth_circle}, this line is described by
\begin{equation*}
 \mathfrak{f}_i^{\mathfrak{p}, \lambda}
  \in \lspann{\mathfrak{f}_i^\mathfrak{p}, \mathfrak{q},
  \mathfrak{t}_i + \lspan{\mathfrak{t}_i,
  \mathfrak{p}}\mathfrak{p}} =: \gamma_i.
\end{equation*}
Moreover, note that the point spheres
$\mathfrak{f}_i^{\mathfrak{p}, \lambda}$ and
$\mathfrak{f}_j^{\mathfrak{p}, \lambda}$ of adjacent vertices,
as well as adjacent tangent planes, are related by the
reflection in the corresponding bisecting hyperplane.
Hence the sphere complexes
\begin{equation*}
 a:\mathcal{E} \rightarrow \PL, \ \mathfrak{a}_{ij} :
  = \mathfrak{t}_i - \mathfrak{t}_j
\end{equation*}
provide a flat connection for the discrete line congruence
$\Gamma=(\gamma, \gamma^\perp)$, which is therefore cyclic.

The two further coordinate surface families of a discrete
cyclic system associated to this line congruence consist of
developable discrete channel surfaces. They are determined
by two suitably prescribed curvature spheres at one initial
vertex of the discrete line congruence, that is, by the choice
of two orthogonal planes that intersect in the line of the
congruence at this initial vertex.
\end{ex}

\noindent To conclude this section, we emphasize
that our analysis also leads to a notion of
\emph{semi-discrete cyclic systems}, where the orthogonal
surfaces are discrete Legendre maps and the other two
coordinate surface families consist of semi-discrete
channel surfaces --- as has already become evident
from Example \ref{ex_parallel_flat}.

In particular, any discrete cyclic circle congruence gives
rise to semi-discrete cyclic systems: once a flat connection
in the sense of Theorem \ref{thm_flat_connection}
is established, we choose
a (smooth) 1-parameter family of orthogonal contact elements
 along an initial circle, as well as
two suitable curvature spheres for the semi-discrete channel
 surfaces that orthogonally intersect in this initial circle.
The transport of this initial principal frame by means of
the M-Lie inversions of the connection then gives
rise to the sought-after semi-discrete cyclic system.


\section{Discrete cyclic systems with special orthogonal surfaces}
\label{sect_from_rib}
\noindent
Historically, smooth cyclic systems were closely related to
the Ribaucour transformation of surfaces,
as described by Ribaucour \cite{rib_cyclic, salkowski}:
the circle congruence formed by the circles that orthogonally
intersect the surfaces of a Ribaucour pair in corresponding
points is cyclic;
furthermore, any two orthogonal surfaces of this cyclic
circle congruence form a Ribaucour pair.

By imposing special geometric properties on a Ribaucour pair
of surfaces,
this construction gives rise to particular cyclic systems.
Amongst them are
cyclic systems with orthogonal surfaces that are Guichard
 surfaces \cite{MR239516, csg_45} and,
in particular, parallel families of flat fronts in hyperbolic
 space~\cite{MR2652493};
and cyclidic systems, where all coordinate surfaces are
 Dupin cyclides \cite{darboux_ortho, totallycyclic}.

We report on a similar construction in the discrete framework:
the orthogonal circle congruence of a discrete Ribaucour pair
is cyclic.
Subsequently, we investigate circle congruences constructed
from discrete Ribaucour pairs with (at least) one totally
umbilic discrete envelope.
In this way, we obtain discrete cyclic circle
congruences with discrete flat fronts in hyperbolic space
as orthogonal surfaces, as well as discrete cyclic systems
where all coordinate surfaces are discrete Dupin cyclides.
Cyclic systems with a $1$-parameter family of orthogonal
Guichard surfaces will be reported on in a forthcoming paper.

\truepar
In this section we only consider \textit{non-degenerate} 
discrete Ribaucour sphere congruences 
$r:\mathcal{V} \rightarrow \mathbb{P}(\mathcal{L})$ 
(cf \cite{org_principles}),
that is, discrete conjugate nets (nets with planar faces)
in $\mathbb{P}(\mathcal{L})$,
where any homogeneous coordinate vectors of each face
span a $(2,1)$- or a $(1,2)$-plane. In the former case, 
where every span has Minkowski signature, each face 
of the sphere congruence models a Dupin cyclide,
that we will refer to as the associated \emph{R-Dupin cyclide}.

The notion of a
Ribaucour sphere congruence is clearly a Lie sphere geometric
notion. However, once we start to construct discrete cyclic 
circle congruences, we again fix a M\"obius subgeometry of Lie 
sphere geometry, modelled on
a point sphere complex $\mathfrak{p}$. 
We say that a discrete Ribaucour sphere congruence is 
\emph{non-degenerate in this M\"obius subgeometry} if it 
additionally satisfies 
$\lspan{\mathfrak{r}_i, \mathfrak{p}} \neq 0$,
that is, if it does not contain any point spheres.

\begin{figure}[t]
\hspace*{-3cm}\begin{minipage}{5cm}
  \includegraphics[scale=0.25]{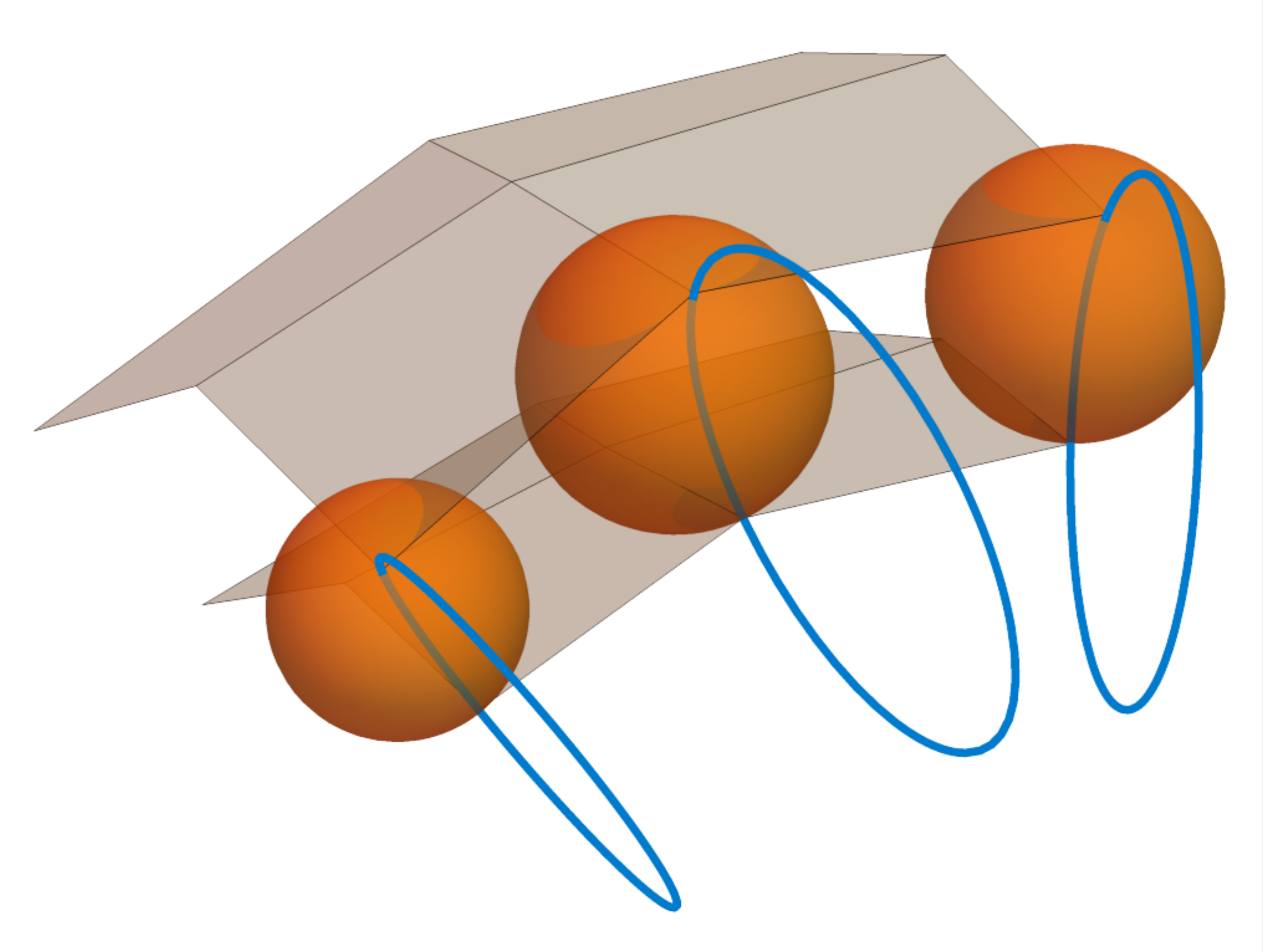}
 \end{minipage}
 \hspace*{1.2cm}\begin{minipage}{6cm}
  \includegraphics[scale=0.25]{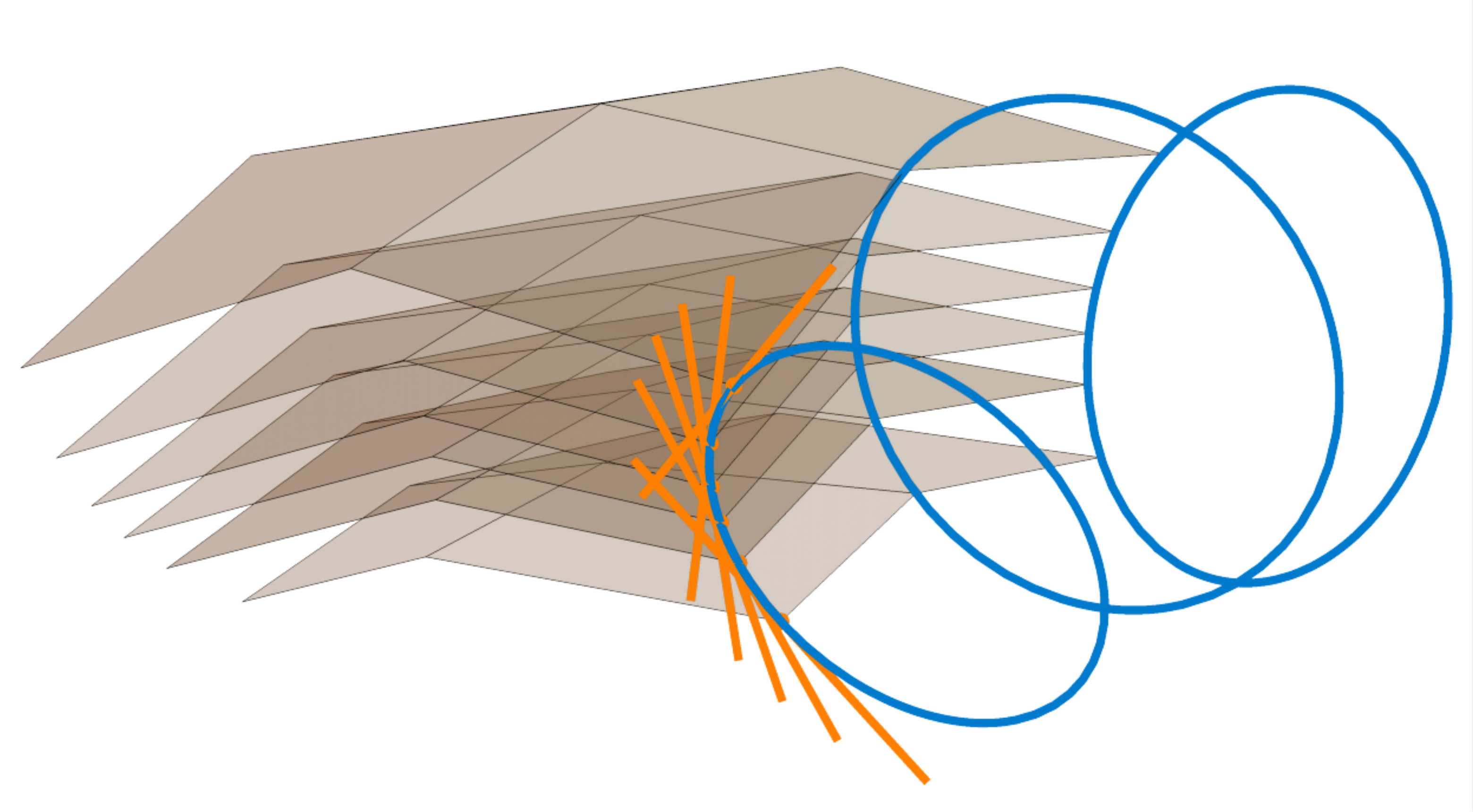}
 \end{minipage} 
 \caption{
  \emph{Left:} A Ribaucour pair of discrete Legendre maps
   with a part of its associated circle congruence (blue),
   consisting of circles that intersect the spheres of the
   Ribaucour congruence (orange) orthogonally in the point
   spheres of the envelopes.
  \emph{Right:} Discrete Legendre maps that are
   orthogonal to the circle congruence associated to the
   Ribaucour pair.}
\label{fig:ex_rib_pair}
\end{figure}

\subsection{Discrete cyclic circle congruences associated to
 discrete Ribaucour pairs}\label{subsect_rib}
We begin by formulating useful general observations about
Ribaucour sphere congruences:
first we provide a characterization in terms of the existence
of certain flat connections,
then show that a flat connection can be chosen to be comprised
of M-Lie inversions once a M\"obius subgeometry is fixed.
Given a pair of envelopes of the Ribaucour sphere congruence
this connection then also yields a flat connection for the
orthogonal circle congruence that we are interested in, 
showing that it is cyclic 
(cf Thm \ref{thm_flat_connection}). 

\truepar
Recall \cite{org_principles, rib_families} that any 
non-degenerate discrete Ribaucour sphere congruence 
admits a 2-parameter family of 
envelopes, which are uniquely determined by the choice of 
one appropriate contact element at one initial Ribaucour 
sphere.
 
Conversely, two discrete Legendre maps form a discrete 
Ribaucour pair if they envelop a common sphere congruence; 
this sphere congruence is then a discrete Ribaucour sphere 
congruence. 

\truepar
Taking the first point of view and focusing on the sphere 
congruence, we obtain the following characterization of 
discrete Ribaucour sphere congruences:

\begin{prop}\label{prop_ribsph_flat}
A discrete sphere congruence $r: \mathcal{V} \to \LL$ with
$\lspan{\mathfrak{r}_i, \mathfrak{r}_j} \neq 0$ for $(ij)
\in \mathcal{E}$, is a discrete Ribaucour sphere congruence
if and only if it admits a flat connection on the trivial
bundle $\mathcal{V} \times \mathbb{R}^{4,2}$ comprised of
Lie inversions that map adjacent Ribaucour spheres onto
each other.
\end{prop}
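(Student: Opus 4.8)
The plan is to prove both implications by working, one quadrilateral at a time, inside the at most three\nobreakdash-dimensional subspace spanned by the four spheres of a face. Throughout I use the fact, recalled before Def \ref{def_M-Lie}, that a Lie inversion interchanging two spheres $r_i,r_j$ with $\lspan{\mathfrak r_i,\mathfrak r_j}\neq 0$ must have $\mathfrak a_{ij}\in\lspann{\mathfrak r_i,\mathfrak r_j}$ (the direction $\mathfrak r_i-\mathfrak r_j$, up to rescaling the representatives); consequently $\sigma_{ij}$ fixes the orthogonal complement $\lspann{\mathfrak r_i,\mathfrak r_j}^\perp$ pointwise. I also record that the stated flatness $\sigma_{ij}\circ\sigma_{jk}=\sigma_{il}\circ\sigma_{lk}$ is equivalent, since each $\sigma$ is an involution and $\sigma_{lk}=\sigma_{kl}$, to the closing of the face loop $\sigma_{li}\circ\sigma_{ij}\circ\sigma_{jk}\circ\sigma_{kl}=\mathrm{id}$.

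For the converse direction (a flat connection forces a Ribaucour congruence) I would argue as follows. The transformation $\sigma_{ij}\circ\sigma_{jk}$ is a product of two reflections, hence it fixes the orthogonal complement of the plane $\Pi:=\lspann{\mathfrak a_{ij},\mathfrak a_{jk}}$ pointwise and acts nontrivially on $\Pi$; non-degeneracy guarantees $\sigma_{ij}\neq\sigma_{jk}$, so it is not the identity and its set of moved vectors spans exactly $\Pi$. The same applies to $\sigma_{il}\circ\sigma_{lk}$ with plane $\Pi':=\lspann{\mathfrak a_{il},\mathfrak a_{lk}}$. Flatness equates these two maps, so their moving spaces coincide and therefore $\Pi=\Pi'$. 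Since $\mathfrak a_{ij}\in\lspann{\mathfrak r_i,\mathfrak r_j}$ and $\mathfrak a_{jk}\in\lspann{\mathfrak r_j,\mathfrak r_k}$ we have $\Pi\subseteq\lspann{\mathfrak r_i,\mathfrak r_j,\mathfrak r_k}$, whereas $\Pi'\subseteq\lspann{\mathfrak r_i,\mathfrak r_k,\mathfrak r_l}$; writing $\mathfrak a_{jk}=\mu\mathfrak r_j+\nu\mathfrak r_k$, its $\mathfrak r_j$-component is nonzero (otherwise $\mathfrak a_{jk}$ would be the null vector $\propto\mathfrak r_k$), and $\mathfrak a_{jk}\in\Pi=\Pi'\subseteq\lspann{\mathfrak r_i,\mathfrak r_k,\mathfrak r_l}$ then yields $\mathfrak r_j\in\lspann{\mathfrak r_i,\mathfrak r_k,\mathfrak r_l}$. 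Thus the four spheres of every face are linearly dependent, i.e. $r$ has planar faces, and since these faces carry null vectors and are non-degenerate their spans have signature $(2,1)$ or $(1,2)$; hence $r$ is a discrete Ribaucour sphere congruence.

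For the forward direction I would use that every face of a non-degenerate Ribaucour sphere congruence spans a $(2,1)$-plane $V$, whose null directions form a conic isomorphic to $\mathbb{RP}^1$ and carry the associated R-Dupin cyclide. On this conic the canonical Lie inversion $\sigma_{ij}$ interchanging the adjacent spheres $r_i,r_j$ acts, after normalizing the homogeneous representatives within $V$, as the reflection $\theta\mapsto\theta_i+\theta_j-\theta$ that swaps the two points $r_i,r_j$ and fixes their midpoint. Composing the four inversions around a face then telescopes, as the angular parameters cancel in pairs, so that $\sigma_{li}\circ\sigma_{ij}\circ\sigma_{jk}\circ\sigma_{kl}$ acts as the identity on the conic; being an element of $\mathrm{SO}(V)$ that fixes every null direction, and fixing $V^\perp$ pointwise because each $\mathfrak a\in V$, it must be the identity on all of $\mathbb R^{4,2}$. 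This is precisely the asserted flatness, and it parallels the construction of the connection from the constant Lie cyclides in the proof of Thm \ref{thm_flat_connection}.

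The step I expect to be delicate is making the choice of inversions consistent from one face to the next. An edge $(ij)$ is shared by two faces whose $(2,1)$-planes generally differ, yet the single inversion $\sigma_{ij}$ must simultaneously be the ``perpendicular-bisector'' reflection for both conics; equivalently, one must pin down a global choice of homogeneous representatives, a gauge, for which every face is balanced. The main work therefore lies not in the per-face telescoping but in showing that such a gauge exists and trivialises all per-face loops at once; I expect to obtain this from the conjugate-net and envelope structure of Ribaucour congruences recalled from \cite{org_principles, rib_families}, and to dispose of the degenerate configurations (null rotation planes, or tangential adjacent spheres) separately under the standing non-degeneracy hypothesis.
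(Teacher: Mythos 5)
Your proof of the backward implication (flat connection $\Rightarrow$ Ribaucour) is essentially correct and takes a genuinely different route from the paper's: the paper transports an initial contact element $\lspann{\mathfrak{s}_0,\mathfrak{r}_0}$ around by the connection to manufacture a $2$-parameter family of envelopes and then invokes the cited envelope characterization of Ribaucour congruences, whereas you compare the moving spaces of $\sigma_{ij}\circ\sigma_{jk}$ and $\sigma_{il}\circ\sigma_{lk}$ and deduce planarity of faces directly, which is exactly the definition of a Ribaucour sphere congruence (a conjugate net in $\mathbb{P}(\mathcal{L})$) recalled at the start of Section \ref{sect_from_rib}. This is more self-contained than the paper's argument. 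The only loose end is your appeal to an unstated ``non-degeneracy'' to rule out $\sigma_{ij}=\sigma_{jk}$: since $\mathfrak{a}_{ij}\in\lspann{\mathfrak{r}_i,\mathfrak{r}_j}$ and $\mathfrak{a}_{jk}\in\lspann{\mathfrak{r}_j,\mathfrak{r}_k}$ cannot be null, equality of these two complexes forces $r_i=r_k$, in which case the face is trivially planar; with that observation the case analysis closes without any extra hypothesis.

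The forward implication (Ribaucour $\Rightarrow$ flat connection), however, has a genuine gap, and it is precisely the one you flag yourself: you never construct the connection. Your telescoping computation on the conic of null lines in a face's $(2,1)$-plane only shows that, \emph{after} a per-face normalization of homogeneous representatives, the four ``bisector'' inversions of that single face have trivial holonomy; the proposition requires one family $a:\mathcal{E}\to\mathbb{P}(\mathbb{R}^{4,2})$ whose inversions are compatible with all faces simultaneously, and you explicitly defer the existence of such a global gauge to unspecified structure theory. The missing idea is a one-line formula, eq.~(\ref{equ_rib_conn}) of the paper: fix any $n\in\mathbb{P}(\mathbb{R}^{4,2})$ with $\lspan{\mathfrak{n},\mathfrak{r}_i}\neq 0$ for all $i\in\mathcal{V}$ and set $\mathfrak{a}_{ij}:=\lspan{\mathfrak{n},\mathfrak{r}_i}\mathfrak{r}_j-\lspan{\mathfrak{n},\mathfrak{r}_j}\mathfrak{r}_i$, i.e.\ normalize $\lspan{\mathfrak{n},\mathfrak{r}}\equiv 1$ and take $\mathfrak{a}_{ij}=\mathfrak{r}_j-\mathfrak{r}_i$. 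This choice is globally defined, symmetric in the edge, and makes each $\sigma_{ij}$ exchange the normalized representatives exactly; on a planar face all four complexes then lie in the $3$-dimensional span $V$ of the face, so the holonomy fixes $V^\perp$ pointwise and fixes $\mathfrak{r}_i$, after which one can conclude either with (a corrected version of) your conic argument on $V$ --- noting that with this gauge the normalized null vectors need not form an ellipse, so the additive angle parametrization requires care --- or, as the paper does, by observing that the holonomy preserves every envelope's contact element as well as the complex $n^\perp$ and hence, using the $2$-parameter family of envelopes, fixes $r_i^\perp$ pointwise. Without exhibiting such a gauge, your treatment of this direction is a plan rather than a proof.
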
 
\begin{proof}
Let $r: \mathcal{V} \to \LL$ be a discrete Ribaucour sphere
congruence and fix $n \in \mathbb{P}(\mathbb{R}^{4,2})$
such that $\lspan{\mathfrak{n}, \mathfrak{r}_i} \neq 0$
for any $i \in \VV$. We consider the discrete map
\begin{equation}\label{equ_rib_conn}
 a:\mathcal{E} \rightarrow \mathbb{P}(\mathbb{R}^{4,2}),
  \enspace 
 (ij)\mapsto a_{ij} := \lspann{\mathfrak{a}_{ij}},
  \enspace\text{where}\enspace
 \mathfrak{a}_{ij} := \lspan{\mathfrak{n}, \mathfrak{r}_i}\mathfrak{r}_j - \lspan{\mathfrak{n}, \mathfrak{r}_j}\mathfrak{r}_i
\end{equation}
and denote the Lie inversions with respect to the linear
sphere complex $a_{ij}$ by $\sigma_{ij}$. By construction,
those Lie inversions exchange adjacent spheres of the congruence
and satisfy, for any face $(ijkl)$,
\begin{equation*}
(\sigma_{ij}\circ\sigma_{jk}\circ\sigma_{kl}\circ\sigma_{li})
(\mathfrak{r}_i) = \mathfrak{r}_i.
\end{equation*}
To prove that the Lie inversions $\sigma$ indeed yield a
flat connection on  $\mathcal{V} \times \mathbb{R}^{4,2}$,
we will investigate the interplay between the 2-parameter 
family of envelopes of $r$ and these Lie inversions. 
Firstly, note that $\sigma_{ij}$
preserves all spheres that are in oriented contact with $r_i$
and $r_j$, hence, adjacent contact elements of the envelopes
are exchanged (see also \cite[\S 3]{rib_families}). 
In particular, we obtain that 
$(\sigma_{ij}\circ\sigma_{jk}\circ\sigma_{kl}\circ\sigma_{li})
(f_i) = f_i $ for the contact elements of any envelope $f$.

Moreover, since $\lspan{\mathfrak{a}_{ij},
\mathfrak{n}}=0$, the Lie inversions preserve the
linear sphere complex $\mathbb{P}(\mathcal{L} \cap n^\perp)$.
Since each contact element of an envelope
contains exactly one sphere in $\mathbb{P}(\mathcal{L}
\cap n^\perp)$ and any discrete Ribaucour sphere
congruence admits a 2-parameter family of discrete envelopes,
 we conclude that, for any $\mathfrak{v} \in r_i^\perp$,
\begin{equation*}
 (\sigma_{ij}\circ\sigma_{jk}\circ\sigma_{kl}\circ\sigma_{li})
 (\mathfrak{v})=\mathfrak{v}.
\end{equation*}
Thus, the Lie inversions $\sigma_{ij}$ provide
a flat connection for the Ribaucour sphere congruence.  

\truepar
Conversely, assume that $r:\mathcal{V} \to \LL$,
$\lspan{\mathfrak{r}_i, \mathfrak{r}_j} \neq 0$, is a
discrete sphere congruence that admits a flat connection
comprised of Lie inversions, as above.
Furthermore, let $s_0 \in \LL$ be a
sphere in oriented contact to an initial sphere
$r_0$ of the Ribaucour sphere congruence,
that is, $s_0 \perp r_0$. Then propagation of the
contact element $\lspann{s_0, r_0}$ with by means of the
flat connection provides a well-defined discrete Legendre
map that envelopes~$r$. Due to the possible choices for $s_0$,
we
obtain a 2-parameter family of discrete envelopes and $r$
is indeed a discrete Ribaucour sphere congruence.
\end{proof}

\noindent We remark that the flat connection of Prop
\ref{prop_ribsph_flat} is not unique. However, if
we fix a M\"obius subgeometry by choosing a point sphere
complex $\mathfrak{p}$ such that the Ribaucour sphere
congruence is non-degenerate in this M\"obius subgeometry, 
we may fix
homogeneous coordinates
$\mathfrak{r}:\mathcal{V}\to\mathcal{L}$ so that
$\lspan{\mathfrak{r},\mathfrak{p}}\equiv 1$ and consider
the discrete map
\begin{equation}\label{equ_a_rib_pair}
 a:\mathcal{E} \rightarrow \mathbb{P}(\mathbb{R}^{4,2}),
  \enspace 
 (ij)\mapsto a_{ij} := \lspann{\mathfrak{a}_{ij}},
  \enspace\text{where}\enspace
 \mathfrak{a}_{ij} := \mathfrak{r}_j - \mathfrak{r}_i.
\end{equation}
Then the associated Lie inversions $\sigma_{ij}$ preserve
the point sphere complex and are therefore M-Lie inversions
(cf \cite[\S 3.1]{rib_families}). Hence:

\begin{cor}\label{cor_rib_flat_Mlie}
A discrete sphere congruence $r: \mathcal{V} \to \LL
\setminus \PL$ with $\lspan{\mathfrak{r}_i, \mathfrak{r}_j}
\neq 0$ for $(ij) \in \mathcal{E}$, is a discrete Ribaucour
sphere congruence if and only if it admits a (unique)
flat connection on the trivial bundle $\mathcal{V} \times
\mathbb{R}^{4,2}$ compound by M-Lie inversions that map
adjacent Ribaucour spheres onto each other.
\end{cor}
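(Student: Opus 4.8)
The plan is to obtain this corollary as a direct specialization of Proposition \ref{prop_ribsph_flat}, taking the auxiliary vector $\mathfrak{n}$ in the connection (\ref{equ_rib_conn}) to be the point sphere complex $\mathfrak{p}$ itself. This choice is admissible precisely because the hypothesis $r:\mathcal{V}\to\LL\setminus\PL$ says that no Ribaucour sphere is a point sphere, i.e.\ $\lspan{\mathfrak{r}_i,\mathfrak{p}}\neq 0$ for every $i\in\VV$, which is exactly the nondegeneracy required of $\mathfrak{n}$ in the proof of that proposition.

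First I would treat existence in the forward direction. Assuming $r$ is a discrete Ribaucour sphere congruence, I normalise the homogeneous coordinates so that $\lspan{\mathfrak{r},\mathfrak{p}}\equiv 1$; then the connection (\ref{equ_rib_conn}) with $\mathfrak{n}=\mathfrak{p}$ collapses to the map (\ref{equ_a_rib_pair}), namely $\mathfrak{a}_{ij}=\mathfrak{r}_j-\mathfrak{r}_i$. By Proposition \ref{prop_ribsph_flat} the associated Lie inversions $\sigma_{ij}$ already constitute a flat connection interchanging adjacent Ribaucour spheres, so it remains only to verify that they preserve $\mathfrak{p}$. This is the short computation $\lspan{\mathfrak{a}_{ij},\mathfrak{p}}=\lspan{\mathfrak{r}_j,\mathfrak{p}}-\lspan{\mathfrak{r}_i,\mathfrak{p}}=1-1=0$, so by Definition \ref{def_M-Lie} each $\sigma_{ij}$ is an M-Lie inversion. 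The converse is immediate: any M-Lie inversion is in particular a Lie inversion, so a flat connection built from M-Lie inversions mapping adjacent spheres onto each other is of the type considered in Proposition \ref{prop_ribsph_flat}, whence $r$ is a discrete Ribaucour sphere congruence.

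The remaining, and only genuinely delicate, point is uniqueness, and here I would appeal to the edge-wise rigidity recorded in the preliminaries: for two non-point spheres $r_i,r_j$ that are not in oriented contact (ensured by $\lspan{\mathfrak{r}_i,\mathfrak{r}_j}\neq 0$) there is exactly one M-Lie inversion interchanging them, even though there is a whole $1$-parameter family of Lie inversions doing so. Consequently each $\sigma_{ij}$ is forced on every edge, and the flat connection comprised of M-Lie inversions is unique. The main obstacle is thus not the existence or the flatness, which are inherited wholesale from Proposition \ref{prop_ribsph_flat}, but rather making sure the uniqueness claim invokes the uniqueness of the \emph{M-Lie} inversion on each edge rather than that of a general Lie inversion.
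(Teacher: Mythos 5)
Your proposal is correct and follows essentially the same route as the paper: the paper also specializes the connection of Proposition \ref{prop_ribsph_flat} by fixing homogeneous coordinates with $\lspan{\mathfrak{r},\mathfrak{p}}\equiv 1$ and taking $\mathfrak{a}_{ij}=\mathfrak{r}_j-\mathfrak{r}_i$ (i.e.\ the choice $\mathfrak{n}=\mathfrak{p}$), observing that the resulting Lie inversions preserve $\mathfrak{p}$. Your explicit treatment of uniqueness via the edge-wise uniqueness of the M-Lie inversion interchanging two non-point, non-tangent spheres is exactly the fact from the preliminaries that the paper relies on implicitly.
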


\noindent Returning to our principal aim,
the construction of a cyclic circle
congruence associated to a discrete Ribaucour pair, 
we fix two discrete Legendre maps $(f^+,
f^-):\mathcal{V} \rightarrow \mathcal{Z}\times \mathcal{Z}$
 that envelop the non-degenerate
discrete Ribaucour sphere congruence $r:\mathcal{V} \rightarrow
\mathbb{P}(\mathcal{L})\setminus \PL$. 

Then the flat connection of M-Lie inversions associated to
$r$ (see Cor \ref{cor_rib_flat_Mlie}) will
also provide a flat connection for the circle congruence:
we consider the
circle congruence $\Gamma:\mathcal{V}
\rightarrow G_{(2,1)}^{\mathcal{P}} \times G_{(2,1)}$
consisting of circles $\Gamma_i$ that intersect the
spheres $r_i$ in the point spheres $p^\pm_i \in f^\pm_i$
of the envelopes orthogonally
(for an illustration see Figure~\ref{fig:ex_rib_pair}).
By Fact~\ref{fact_orth_circle}, this circle congruence is
described by the $(2,1)$-planes
\begin{equation}\label{equ_orth_circle_rib}
 \gamma_i:=\lspann{\mathfrak{p}_i^+, \
 \mathfrak{p}_i^-, \
 \mathfrak{r}_i+\mathfrak{p}}.
\end{equation}
Since the M-Lie inversions $\sigma$ described by
(\ref{equ_a_rib_pair}) satisfy $\sigma_{ij}(\mathfrak{r}_j +
\mathfrak{p})=\mathfrak{r}_i+ \mathfrak{p}$, they also map
adjacent circles of $\Gamma$
onto each other, that is, $\sigma_{ij}(\gamma_j)=\gamma_i$. 

Thus, by the above, these M-Lie inversions $\sigma$ yield a flat
connection for the circle congruence $\Gamma$ and, by
Theorem~\ref{thm_flat_connection}, we conclude:

\begin{thm}\label{thm_associated_rib_cyclic}
Let $p^\pm \in f^\pm$ be the point sphere maps of two
envelopes $f^\pm$ of a discrete Ribaucour sphere congruence
$r:\VV \rightarrow \LL$.
Then the circles that orthogonally intersect the spheres
$r$ in the point sphere maps $p^\pm$ form a cyclic circle
congruence with $f^\pm$ as orthogonal surfaces.
This circle congruence is given by
\begin{equation*}
\begin{aligned}
 \Gamma=(\gamma, \gamma^\perp):
  \VV &\rightarrow G_{(2,1)}^{\mathcal{P}} \times G_{(2,1)},
 \\i &\mapsto \lspann{\mathfrak{p}_i^+, \mathfrak{p}_i^-,
 \mathfrak{r}_i+\mathfrak{p}} \times \lspann{\mathfrak{p}_i^+,
 \mathfrak{p}_i^-, \mathfrak{r}_i+\mathfrak{p}}^\perp,
\end{aligned}
\end{equation*}
where $\mathfrak{r}_i \in r_i$ such that
$\lspan{\mathfrak{r}_i, \mathfrak{p}}=1$;
it will be referred to as \emph{associated}
to the discrete Ribaucour pair.
\end{thm}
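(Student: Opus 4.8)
The plan is to assemble the pieces prepared in the discussion preceding the statement into a direct application of Theorem~\ref{thm_flat_connection}. First I would observe that, by Fact~\ref{fact_orth_circle}, the $(2,1)$-planes $\gamma_i := \lspann{\mathfrak{p}_i^+, \mathfrak{p}_i^-, \mathfrak{r}_i + \mathfrak{p}}$ (with $\lspan{\mathfrak{r}_i, \mathfrak{p}} = 1$) genuinely describe the circles meeting $r_i$ orthogonally through the two point spheres $p_i^\pm$, so $\Gamma$ is well defined and given by the asserted formula. Corollary~\ref{cor_rib_flat_Mlie} then supplies the unique flat connection of M-Lie inversions $\sigma_{ij}$ for the Ribaucour sphere congruence $r$, associated with the linear sphere complexes $\mathfrak{a}_{ij} := \mathfrak{r}_j - \mathfrak{r}_i$ of~(\ref{equ_a_rib_pair}). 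The entire argument reduces to checking that this \emph{same} connection already maps adjacent circles onto each other; flatness on the full bundle $\VV \times \mathbb{R}^{4,2}$ is then inherited for free, so both hypotheses of Theorem~\ref{thm_flat_connection} hold.

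The heart of the proof is the identity $\sigma_{ij}(\gamma_j) = \gamma_i$, which I would verify on the three spanning vectors of $\gamma_j$. For the ``radial'' generator $\mathfrak{r}_j + \mathfrak{p}$ I would compute directly: since $\mathfrak{r}_i, \mathfrak{r}_j$ are null and $\lspan{\mathfrak{r}_i, \mathfrak{p}} = \lspan{\mathfrak{r}_j, \mathfrak{p}} = 1$, one finds $\lspan{\mathfrak{a}_{ij}, \mathfrak{a}_{ij}} = -2\lspan{\mathfrak{r}_i, \mathfrak{r}_j}$ and $\lspan{\mathfrak{r}_j + \mathfrak{p}, \mathfrak{a}_{ij}} = -\lspan{\mathfrak{r}_i, \mathfrak{r}_j}$, whence $\sigma_{ij}(\mathfrak{r}_j + \mathfrak{p}) = \mathfrak{r}_i + \mathfrak{p}$, the identity already noted before the statement. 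For the two point-sphere generators $\mathfrak{p}_j^\pm$ I would invoke the envelope property established in the proof of Proposition~\ref{prop_ribsph_flat}: the $\sigma_{ij}$ interchange adjacent contact elements of any envelope of $r$, so they carry $f_j^\pm$ to $f_i^\pm$ and, since they preserve $\mathfrak{p}$, carry point spheres to point spheres; hence $\sigma_{ij}(\mathfrak{p}_j^\pm) \in \lspann{\mathfrak{p}_i^\pm} \subset \gamma_i$. These three inclusions together give $\sigma_{ij}(\gamma_j) = \gamma_i$.

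It remains to confirm that $f^\pm$ are genuine orthogonal surfaces of $\Gamma$, for which I would appeal to Lemma~\ref{lem_orth_condition}. Writing a general sphere of the contact element as $\mathfrak{s} = \alpha\mathfrak{p}_i^\pm + \beta\mathfrak{r}_i \in f_i^\pm = \lspann{\mathfrak{p}_i^\pm, \mathfrak{r}_i}$ and using $\lspan{\mathfrak{p}_i^\pm, \mathfrak{p}} = 0$, $\lspan{\mathfrak{r}_i, \mathfrak{p}} = 1$, the orthogonal-complex vector of~(\ref{equ_orth_complex}) becomes $\mathfrak{s} + \lspan{\mathfrak{s}, \mathfrak{p}}\mathfrak{p} = \alpha\mathfrak{p}_i^\pm + \beta(\mathfrak{r}_i + \mathfrak{p}) \in \gamma_i$, which is precisely the orthogonality condition, while the point spheres $p_i^\pm$ lie on $\Gamma_i$ by construction; so both conditions of the definition are met.

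The calculations above are all routine; the only point that genuinely requires care is the applicability of Theorem~\ref{thm_flat_connection}, which presupposes a \emph{non-degenerate} circle congruence. I would therefore argue that the non-degeneracy of the Ribaucour sphere congruence $r$ in the fixed M\"obius subgeometry forces $\Gamma$ to be non-degenerate, so that the four circles of an elementary quadrilateral do not share a one-parameter family of orthogonal spheres; this is exactly the step where a degenerate input would break the conclusion, in the spirit of the Remark following Theorem~\ref{thm_flat_connection}. Granting non-degeneracy, the flat connection of M-Lie inversions mapping adjacent circles onto each other satisfies the hypotheses of Theorem~\ref{thm_flat_connection}, and $\Gamma$ is cyclic with $f^\pm$ among its orthogonal surfaces.
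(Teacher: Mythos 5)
Your proposal reproduces the paper's own proof essentially step for step: the paper likewise obtains the formula for $\gamma_i$ from Fact~\ref{fact_orth_circle}, takes the flat connection of M-Lie inversions supplied by Corollary~\ref{cor_rib_flat_Mlie} with the complexes $\mathfrak{a}_{ij}=\mathfrak{r}_j-\mathfrak{r}_i$ of (\ref{equ_a_rib_pair}), uses $\sigma_{ij}(\mathfrak{r}_j+\mathfrak{p})=\mathfrak{r}_i+\mathfrak{p}$ together with the fact (from the proof of Proposition~\ref{prop_ribsph_flat}) that the $\sigma_{ij}$ exchange adjacent contact elements of the envelopes to conclude $\sigma_{ij}(\gamma_j)=\gamma_i$, and then invokes Theorem~\ref{thm_flat_connection}. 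Your explicit verifications --- the reflection computation for $\mathfrak{r}_j+\mathfrak{p}$, and the check via Lemma~\ref{lem_orth_condition} that every sphere $\alpha\mathfrak{p}_i^\pm+\beta\mathfrak{r}_i\in f_i^\pm$ satisfies the orthogonality condition --- are correct and simply make explicit what the paper states tersely.

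The one point where you go beyond the paper is, however, also the one point where your argument fails: the claim that non-degeneracy of $r$ in the fixed M\"obius subgeometry forces $\Gamma$ to be non-degenerate is false. Consider a circular net of unit vectors $u:\VV\to\mathbb{R}^3$, $|u_i|=1$, and the totally umbilic discrete Legendre maps $f^\pm$ whose point spheres $au_i$ and $bu_i$ ($0<a<b$) lie on two concentric round spheres; they form a Ribaucour pair enveloping the spheres $r_i$ of radius $\frac{b-a}{2}$ centred at $\frac{a+b}{2}u_i$. On each face these spheres are curvature spheres of a ring torus, so they span a $(2,1)$-plane, and none of them is a point sphere: $r$ is non-degenerate in the M\"obius subgeometry. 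Yet the associated circles $\Gamma_i$ are the radial lines $\mathbb{R}u_i$, and every sphere centred at the origin intersects all of them orthogonally; hence every elementary quadrilateral of $\Gamma$ shares a $1$-parameter family of orthogonal spheres, so $\Gamma$ is degenerate in precisely the sense excluded by Theorem~\ref{thm_flat_connection}. (The conclusion of the theorem survives in this example --- the radial congruence is cyclic, with the concentric spheres as orthogonal surfaces and cones as channel surfaces --- but not via your implication.) To be fair, the paper's proof is equally silent on this hypothesis: it applies Theorem~\ref{thm_flat_connection} without checking non-degeneracy of $\Gamma$, thereby implicitly restricting to the generic situation. So the honest repair is either to impose non-degeneracy of $\Gamma$ as a (generic) standing assumption, or to handle degenerate congruences by a separate argument; deducing it from the non-degeneracy of $r$, as you propose, is not available.
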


\noindent Firstly, we will exploit this construction to
extend Example \ref{ex_parallel_flat} and discuss discrete
(normal) line congruences in space forms obtained from parallel
discrete surfaces.
\begin{ex}[Discrete cyclic circle congruences associated
 to parallel surfaces in space forms]
Let $\mathfrak{p} \in \mathbb{R}^{4,2}$, $\lspan{\mathfrak{p},
\mathfrak{p}}=-1$, be a fixed point sphere complex and
$\mathfrak{q} \in \mathbb{R}^{4,2}\setminus \{ 0 \}$,
$\lspan{\mathfrak{p}, \mathfrak{q}}=0$, a space form
vector satisfying $\lspan{\mathfrak{q},\mathfrak{q}}=\pm 1$.
As before, we denote the space form projection of
a discrete Legendre map $f$ by
\begin{equation*}
 (\mathfrak{f}^\mathfrak{p}, \mathfrak{t}):
  \VV \rightarrow \QQ \times \PP,
\end{equation*}
with the point sphere map $\mathfrak{f}^\mathfrak{p}$ and
tangent plane congruence $\mathfrak{t}$ of $f$, respectively.
Furthermore, consider its discrete Ribaucour transform
$\hat{f}:=\sigma_q(f)$ obtained by the Lie inversion
in the linear sphere complex $\mathbb{P}(\LLL \cap \{ \mathfrak{q}\}^\perp)$.
 
By Theorem \ref{thm_associated_rib_cyclic}, the
discrete cyclic circle congruence associated to the Ribaucour
pair $(f, \hat{f})$ is given by
\begin{equation*}
 \gamma_i :
  = \lspann{\mathfrak{f}_i^p, \hat{\mathfrak{f}}_i^p,
  \mathfrak{t}_i+\lspan{\mathfrak{t}_i,
  \mathfrak{p}}\mathfrak{p}} = \lspann{\mathfrak{f}_i^p,
  \mathfrak{q}, \mathfrak{t}_i-\mathfrak{p}},
\end{equation*}
which yields a discrete normal line congruence in the
distinguished space form, that admits the family of parallel
surfaces as its orthogonal surfaces.

\truepar
In the case of a hyperbolic ambient quadric of constant
curvature,
that is, $\lspan{\mathfrak{q}, \mathfrak{q}} > 0$, the circles
of the constructed congruence $\Gamma$ orthogonally intersect
the spheres $\mathfrak{l}^\pm = \mathfrak{p} \pm \mathfrak{q}$
that coincide up to orientation.
Those spheres $\mathfrak{l}^\pm$ represent the infinity
boundary of the hyperbolic quadric of constant curvature,
that consists of two hyperbolic space forms.
Moreover, the
two orthogonal surfaces whose point spheres lie on $l^\pm$
provide the two discrete hyperbolic Gauss maps of $f$ and
its parallel surfaces.

Thus, by Corollary \ref{cor_adjacent_rib}, we learn that
the two discrete hyperbolic Gauss maps of a discrete Legendre
map are (up to orientation) related by a discrete Ribaucour
transformation. This fact suggests another construction for
discrete cyclic circle congruences associated to parallel
surfaces in hyperbolic space, namely, from its two
discrete hyperbolic Gauss maps.

\truepar
We say that a discrete Legendre map is \emph{totally umbilic}
if all curvature spheres coincide and, for any choice of
point sphere complex, its point sphere map is a circular net.

Using this definition, we have proven:
\begin{cor}\label{cor_rib_hyper_gauss}
A discrete Ribaucour pair of two totally umbilic discrete
Legendre maps whose point spheres lie on the same sphere gives
rise to a discrete normal line congruence in an appropriate
hyperbolic space form.
Any of its orthogonal surfaces in this hyperbolic space form
are parallel surfaces,
and have the discrete
Ribaucour pair as their common discrete hyperbolic Gauss maps.
\end{cor}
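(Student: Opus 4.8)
The plan is to read this statement as the converse of the construction discussed in the preceding Example: rather than starting from a discrete Legendre map in hyperbolic space and passing to its two hyperbolic Gauss maps, I start from the candidate pair of Gauss maps and reconstruct the normal line congruence together with its parallel orthogonal surfaces. The first step is therefore to record the associated cyclic circle congruence. Since $f^+$ and $f^-$ form a discrete Ribaucour pair, they envelop a discrete Ribaucour sphere congruence $r:\VV\to\LL$, and Theorem~\ref{thm_associated_rib_cyclic} supplies the associated cyclic circle congruence $\Gamma=(\gamma,\gamma^\perp)$ with
\[
 \gamma_i=\lspann{\mathfrak{p}_i^+,\ \mathfrak{p}_i^-,\ \mathfrak{r}_i+\mathfrak{p}},
 \qquad \lspan{\mathfrak{r}_i,\mathfrak{p}}=1,
\]
having $f^+$ and $f^-$ among its orthogonal surfaces.

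Next I would fix the space form. Writing $k\in\LL$ for the common sphere carrying the two point sphere maps, I choose its homogeneous coordinate so that $\lspan{\mathfrak{k},\mathfrak{p}}=-1$ and set $\mathfrak{q}:=\mathfrak{k}-\mathfrak{p}$. A one-line computation gives $\lspan{\mathfrak{q},\mathfrak{p}}=0$ and $\lspan{\mathfrak{q},\mathfrak{q}}=1>0$, so $\mathfrak{q}$ distinguishes a hyperbolic quadric of constant curvature $\QQ$ whose infinity boundary is represented by $\mathfrak{l}^\pm=\mathfrak{p}\pm\mathfrak{q}$; here $\mathfrak{l}^+=\mathfrak{k}$ recovers $k$, with the opposite orientation $\mathfrak{l}^-$ carrying the point spheres of the second map. (The two totally umbilic envelopes must carry $k$ with opposite orientations, as otherwise the enveloped congruence degenerates to the constant sphere $k$.)

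The heart of the argument is to show that every circle $\Gamma_i$ is a line of $\QQ$, that is, $\mathfrak{q}\in\gamma_i$; equivalently, that $\Gamma_i$ meets the ideal boundary $k$ orthogonally. This is exactly where total umbilicity enters. Since $f^+$ is totally umbilic with point spheres on $k$, it is a piece of the round sphere $k$, so its contact element at each vertex is the tangent pencil $f_i^+=\lspann{\mathfrak{p}_i^+,\mathfrak{l}^+}$. As $r$ is enveloped by $f^+$, the Ribaucour sphere satisfies $\mathfrak{r}_i\in\lspann{\mathfrak{p}_i^+,\mathfrak{l}^+}$, and the normalization $\lspan{\mathfrak{r}_i,\mathfrak{p}}=1$ forces $\mathfrak{r}_i+\mathfrak{p}=\alpha\,\mathfrak{p}_i^+-\mathfrak{q}$ for some $\alpha\in\mathbb{R}$. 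Hence $\mathfrak{q}=\alpha\,\mathfrak{p}_i^+-(\mathfrak{r}_i+\mathfrak{p})\in\lspann{\mathfrak{p}_i^+,\ \mathfrak{r}_i+\mathfrak{p}}\subseteq\gamma_i$, as required. I expect this to be the only nontrivial point; in particular, one should resist the temptation to test $\mathfrak{q}\in\gamma_i$ by merely pairing against the spanning vectors (this is not a valid membership criterion for the non-degenerate plane $\gamma_i$), and instead exhibit the membership explicitly as above. The key realisation is that umbilicity collapses the contact element to the tangent pencil of $k$ and thereby forces $r_i$ into $\lspann{\mathfrak{p}_i^+,\mathfrak{l}^+}$.

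Finally I would assemble the conclusion. Theorem~\ref{thm_associated_rib_cyclic} already gives that $\Gamma$ is cyclic, and the previous step shows its circles are the geodesics of $\QQ$; thus $\Gamma$ is a discrete normal line congruence in $\QQ$. Its orthogonal surfaces meet these geodesics orthogonally and, by the constancy of the cross-ratio of four orthogonal surfaces (Corollary~\ref{cor_const_cross}), lie at constant hyperbolic distance along the common normals, hence are parallel surfaces in $\QQ$. The two boundary points of each geodesic $\Gamma_i$ are $p_i^+$ and $p_i^-$, so the two orthogonal surfaces whose point spheres lie on $\mathfrak{l}^\pm$ are precisely $f^+$ and $f^-$; these are the discrete hyperbolic Gauss maps shared by the whole parallel family, which is the assertion.
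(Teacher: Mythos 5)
Your proof is correct, and it takes a genuinely different route from the paper's. The paper obtains Corollary \ref{cor_rib_hyper_gauss} by reading its preceding construction backwards: it starts from a discrete Legendre map $f$ \emph{in} a hyperbolic space form, pairs it with its reflection $\hat f=\sigma_q(f)$, computes via Theorem \ref{thm_associated_rib_cyclic} that the associated congruence is $\gamma_i=\lspann{\mathfrak{f}_i^p,\mathfrak{q},\mathfrak{t}_i-\mathfrak{p}}$ --- manifestly a line congruence since $\mathfrak{q}\in\gamma_i$ by construction --- and then identifies the two hyperbolic Gauss maps as the totally umbilic orthogonal surfaces at the ideal boundary, which by Corollary \ref{cor_adjacent_rib} form a Ribaucour pair; the corollary is then asserted as the converse reading of this picture. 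You instead prove the converse directly: starting from the totally umbilic pair, you manufacture the space form vector $\mathfrak{q}=\mathfrak{k}-\mathfrak{p}$ from the common sphere, and the crux of your argument --- that total umbilicity collapses the contact elements to tangent pencils of $k$, forcing $\mathfrak{r}_i+\mathfrak{p}=\alpha\,\mathfrak{p}_i^+-\mathfrak{q}$ and hence $\mathfrak{q}\in\gamma_i$ --- is exactly the step the paper leaves implicit (geometrically: a circle orthogonal to $r_i$ at $p_i^+$ is orthogonal to any sphere tangent to $r_i$ there, in particular to $k$, hence is a geodesic). Your closing argument for parallelism via Corollary \ref{cor_const_cross}, using that hyperbolic distance along a geodesic is a function of the cross-ratio with the two ideal endpoints $p_i^\pm$, is also sound, as is your side remark that non-degeneracy of the enveloped Ribaucour congruence forces the two envelopes to carry $k$ with opposite orientations (the same-orientation case is the degenerate constant-sphere situation the paper only handles later, in Theorem \ref{thm_flat_front_final}, by different means). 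What the two approaches buy: the paper's forward construction situates the corollary within the general space-form example and exhibits the whole parallel family explicitly, but as a proof of the stated corollary it presupposes that an arbitrary such pair arises as Gauss maps of something; your version makes the corollary self-contained by verifying the geodesic property algebraically, at the cost of leaning on Corollary \ref{cor_const_cross} for parallelism rather than getting it for free from the ambient construction.
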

\end{ex}

\noindent Clearly, the geometry of the chosen discrete
hyperbolic Gauss maps affects the properties of the
associated orthogonal surfaces. For example, in the next
subsection, we shall see that starting with hyperbolic Gauss
maps that form a Darboux pair will lead to discrete flat
fronts in hyperbolic space.

\subsection{Discrete flat fronts as orthogonal surfaces
 of cyclic circle congruences} \label{subsect_flat_fronts}
\noindent In the smooth, as well as in the discrete setup,
there are various ways to construct intrinsically flat
surfaces (or fronts) in hyperbolic space, that
is, surfaces with constant extrinsic Gaussian curvature~$1$.

From a Lie sphere geometric perspective, smooth and discrete
flat fronts in hyperbolic space are obtained as projections of
$\Omega$-surfaces spanned by two isothermic sphere congruences
each lying in a fixed parabolic linear sphere complex (see
\cite{MR2652493, lin_weingarten}).
Alternatively, as discussed in \cite{MR1752778, MR2946924,
disc_sing}, smooth and discrete flat fronts can be
produced from holomorphic data by means of a
Weierstrass type representation.

In \cite{MR2652493}, it was shown that smooth flat fronts
also arise as orthogonal surfaces of special cyclic systems
associated to Darboux pairs of totally umbilic surfaces,
namely, of their two hyperbolic Gauss maps. The aim of this
subsection is to demonstrate a similar construction within
the framework developed here, that also leads to parallel
families of discrete flat fronts in hyperbolic space.

\truepar
In order to make contact with our present setting,
we briefly recall the Lie geometric approach to discrete
flat fronts in hyperbolic space as established in
\cite[Expl~4.3]{lin_weingarten}:
a discrete $\Omega$-surface is a Legendre map $f=s^+\oplus s^-$
that is spanned by a suitable pair of isothermic sphere
congruences,
and its hyperbolic space form projection
$(\mathfrak{f}^\mathfrak{p},\mathfrak{t}):
 \mathcal{V}\to\QQ\times\PP$
with respect to the point sphere complex~$\mathfrak{p}$
and space form vector $\mathfrak{q}$ is a flat front
if and only if the enveloped isothermic sphere congruences
take values in two fixed parabolic linear sphere complexes
$l^\pm=\lspann{\mathfrak{q}\mp\mathfrak{p}}$,
that is,
$s^\pm\perp l^\pm$.

Fixing homogeneous coordinate vectors
$\mathfrak{l}^\pm=\mathfrak{q}\mp\mathfrak{p}$,
K\"onigs dual lifts of the two sphere congruences $s^\pm$
are determined by
\begin{align*}\label{equ_def_s}
 \mathfrak{s}^\pm
 = \pm( \lspan{\mathfrak{l}^\pm,\mathfrak{t}}
         \mathfrak{f}^\mathfrak{p}
      - \lspan{\mathfrak{l}^\pm,\mathfrak{f}^\mathfrak{p}}
         \mathfrak{t} )
 = \mathfrak{f}^\mathfrak{p} \pm \mathfrak{t};
\end{align*}
that is, they are edge-parallel and opposite diagonals on
each face are parallel:
\begin{equation*}
 \s_i^+-\s_j^+ \parallel \s_i^--\s_j^-, \enspace
 \s_i^+-\s_k^+ \parallel \s_j^--\s_l^- 
  \enspace\text{and}\enspace
 \s_j^+-\s_l^+ \parallel \s_k^--\s_i^-.
\end{equation*}
The former condition is equivalently expressed by
the vanishing of the mixed area on faces,
$$
  0 = A(\mathfrak{s}^+,\mathfrak{s}^-),
   \enspace\text{while}\enspace
  A(\mathfrak{s}^+,\mathfrak{s}^-)
    = A(\mathfrak{f}^\mathfrak{p},\mathfrak{f}^\mathfrak{p})
    - A(\mathfrak{t},\mathfrak{t}),
$$
showing that the mixed area Gauss curvature
$K:=\frac{A(\mathfrak{t},\mathfrak{t})}
 {A(\mathfrak{f}^\mathfrak{p},\mathfrak{f}^\mathfrak{p})}$
of $(\mathfrak{f}^\mathfrak{p},\mathfrak{t})$ satisfies
$K\equiv 1$ if and only if
$\mathfrak{s}^\pm$ are K\"onigs dual.

\truepar
The key point in the construction of discrete cyclic circle
congruences that admit a parallel family of discrete flat
fronts is the interplay between the distinguished isothermic
sphere congruences of the discrete flat fronts and their
hyperbolic Gauss maps as totally umbilic envelopes of them.

The proposed construction will rely on the following general
observations on discrete Ribaucour sphere congruences with
a totally umbilic (discrete) envelope:
\begin{lem}[cf \cite{rib_families}]
 \label{lem_umbilic_envelope}
Let $r$ be a discrete Ribaucour sphere congruence that admits
a totally umbilic envelope, then $r$ is a $(2,1)$-congruence
and, on any face, the contact elements of the totally umbilic
envelope coincide with contact elements of the corresponding
R-Dupin cyclide along one circular curvature line of it.
\end{lem}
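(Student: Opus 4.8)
The plan is to first unpack the hypothesis "totally umbilic envelope" into a single fixed sphere shared by all contact elements of $f$. Since $f$ is totally umbilic, all of its curvature spheres coincide in one sphere $k \in \LL$; as the curvature sphere on any edge lies in the two adjacent contact elements, this forces $\mathfrak{k} \in f_i$ at every vertex $i$. Because $f$ envelops $r$ we also have $\mathfrak{r}_i \in f_i$, and the two spheres of a contact element are in oriented contact, so $\lspan{\mathfrak{r}_i, \mathfrak{k}} = 0$ and $f_i = \lspann{\mathfrak{r}_i, \mathfrak{k}}$ for all $i$. I would next check that $r_i \neq k$ at every vertex: otherwise, for an adjacent vertex $j$, the contact condition $\lspan{\mathfrak{k}, \mathfrak{r}_j} = 0$ would give $\lspan{\mathfrak{r}_i, \mathfrak{r}_j} = 0$, contradicting the standing assumption $\lspan{\mathfrak{r}_i, \mathfrak{r}_j} \neq 0$ on edges. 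Hence each $f_i$ is a genuine null $2$-plane.

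For the first assertion I would fix a face $(ijkl)$ and let $V := \lspann{\mathfrak{r}_i, \mathfrak{r}_j, \mathfrak{r}_k, \mathfrak{r}_l}$ be the $3$-dimensional span of the planar face. Since $\mathfrak{k}$ is orthogonal to each $\mathfrak{r}_\bullet$, it is a nonzero null vector lying in $V^\perp$. A $(1,2)$-plane of $\mathbb{R}^{4,2}$ has a positive definite orthogonal complement, which cannot contain a nonzero null vector; therefore $V$ cannot be a $(1,2)$-plane, and by the non-degeneracy assumption it must be a $(2,1)$-plane. Thus $r$ is a $(2,1)$-congruence and every face carries its R-Dupin cyclide. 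This signature count is the crux, and it is immediate once $\mathfrak{k} \in V^\perp \cap \LLL$ is in hand.

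For the second assertion I would use the Lie sphere description of the R-Dupin cyclide modelled on $V$: its two families of curvature spheres are the conics $\mathbb{P}(V \cap \LLL)$ and $\mathbb{P}(V^\perp \cap \LLL)$, and its contact elements are exactly the null $2$-planes $\lspann{\mathfrak{c}^1, \mathfrak{c}^2}$ with $\mathfrak{c}^1 \in V \cap \LLL$ and $\mathfrak{c}^2 \in V^\perp \cap \LLL$ (such a pair is automatically null and mutually orthogonal, so spans a line in the Lie quadric). Now $\mathfrak{r}_i \in V \cap \LLL$ belongs to the first family, while the fixed sphere $\mathfrak{k} \in V^\perp \cap \LLL$ belongs to the second; hence $f_i = \lspann{\mathfrak{r}_i, \mathfrak{k}}$ is a contact element of the R-Dupin cyclide. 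Since the second curvature sphere is held fixed at $\mathfrak{k}$ across the whole face, these contact elements all sit along the single curvature line $\{\mathfrak{c}^2 = \mathfrak{k}\}$ of the cyclide, which is circular because both families of curvature lines of a Dupin cyclide are circles. This is precisely the claim.

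The part I expect to require most care is bookkeeping rather than ideas: making the equivalence "totally umbilic envelope" $\Leftrightarrow$ "a common enveloped sphere $k$" precise from the definition, and ensuring $r_i \neq k$ so the contact elements do not degenerate. The main obstacle is confirming, by reference to \cite{discrete_channel}, that the contact elements of the R-Dupin cyclide modelled on $V$ are indeed the $\lspann{\mathfrak{c}^1, \mathfrak{c}^2}$ described above and that fixing $\mathfrak{c}^2$ cuts out a circular curvature line; once this standard parametrization is invoked, $\mathfrak{k}$ singles out the curvature line along which the totally umbilic envelope lies and the statement follows.
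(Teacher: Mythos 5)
Your proof is correct, and in fact the paper offers no proof of this lemma at all: it is stated with a ``cf'' citation to the reference \cite{rib_families}, so your argument supplies the self-contained verification that the citation stands in for. Your route is the natural one --- reduce ``totally umbilic envelope'' to a single fixed sphere $\mathfrak{k}$ lying in every contact element, so that $f_i=\lspann{\mathfrak{r}_i,\mathfrak{k}}$, then read off the signature of the face span $V$ from the fact that $\mathfrak{k}$ is a nonzero null vector in $V^\perp$ (the complement of a $(1,2)$-plane in $\mathbb{R}^{4,2}$ is positive definite, hence contains no null lines), and finally identify $f_i$ as a contact element of the R-Dupin cyclide modelled on $(V,V^\perp)$ along the curvature line singled out by the constant curvature sphere $\mathfrak{k}$. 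One small repair: the edge condition $\lspan{\mathfrak{r}_i,\mathfrak{r}_j}\neq 0$, which you invoke to rule out $r_i=k$, is not a standing assumption of the section --- it is a hypothesis only of Proposition \ref{prop_ribsph_flat} and Corollary \ref{cor_rib_flat_Mlie} --- so you should not rely on it here. You do not need it: if $r_i=k$ at some vertex of a face with span $V$, then $\mathfrak{k}\in V$, while $\mathfrak{k}\perp\mathfrak{r}_v$ for every vertex $v$ of that face (both vectors lie in the null $2$-plane $f_v$), so $\mathfrak{k}\in V\cap V^\perp$; this contradicts the section's standing non-degeneracy assumption that $V$ is a $(2,1)$- or $(1,2)$-plane, both of which are non-degenerate. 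With that substitution, every step of your argument goes through, including the final appeal to the classical fact that curvature lines of a Dupin cyclide are circular.
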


\noindent In this situation, the cross-ratios of the discrete
Ribaucour congruence are transferred to the point sphere map
of the totally umbilic envelope and vice versa. This follows
from a simple fact about smooth Dupin cyclides:
\begin{lem}\label{lem_same_cr}
The cross-ratio of four point spheres lying on a curvature
line of a Dupin cyclide coincides with the cross-ratio of the
four (non-constant) curvature spheres of the Dupin cyclide
that are in oriented contact with those point spheres.
\end{lem}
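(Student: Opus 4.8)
The plan is to work in the standard Lie sphere geometric model of a Dupin cyclide, given by an orthogonal splitting $\mathbb{R}^{4,2} = W \oplus W^\perp$ into two $(2,1)$-planes. The two families of curvature spheres are then the conics $\mathbb{P}(\LLL \cap W)$ and $\mathbb{P}(\LLL \cap W^\perp)$; since $W \perp W^\perp$, every sphere of one family is in oriented contact with every sphere of the other, and each contact element of the cyclide is spanned by one sphere from each family. Along the prescribed circular curvature line I would fix the constant curvature sphere $\mathfrak{b}_0 \in \LLL \cap W^\perp$ and let the non-constant curvature sphere run through a rational normal parametrization $\mathfrak{a}(u) = \mathfrak{a}_0 + u\,\mathfrak{a}_1 + u^2\mathfrak{a}_2$ of the conic $\LLL \cap W$.

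First I would pin down the point sphere inside each contact element $\lspann{\mathfrak{a}(u), \mathfrak{b}_0}$. Writing a general element as $\lambda\,\mathfrak{a}(u) + \mu\,\mathfrak{b}_0$ and imposing orthogonality to the point sphere complex $\mathfrak{p}$ (nullity being automatic in a null $2$-plane), I obtain
\begin{equation*}
 \mathfrak{p}(u) = \lspan{\mathfrak{b}_0, \mathfrak{p}}\,\mathfrak{a}(u) - \lspan{\mathfrak{a}(u), \mathfrak{p}}\,\mathfrak{b}_0 = \mathfrak{w}_0 + u\,\mathfrak{w}_1 + u^2\,\mathfrak{w}_2,
\end{equation*}
where $\mathfrak{w}_i = \lspan{\mathfrak{b}_0, \mathfrak{p}}\,\mathfrak{a}_i - \lspan{\mathfrak{a}_i, \mathfrak{p}}\,\mathfrak{b}_0$, using that $u \mapsto \lspan{\mathfrak{a}(u), \mathfrak{p}}$ is a quadratic polynomial. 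Thus the point spheres along the curvature line are again given by a degree-two vector-valued polynomial in the parameter $u$.

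The key step is then to note that both $u \mapsto [\mathfrak{a}(u)]$ and $u \mapsto [\mathfrak{p}(u)]$ are standard conic (Veronese) parametrizations, hence projective isomorphisms from $\mathbb{P}^1$ onto the curvature-sphere conic, respectively the circle of point spheres. For such parametrizations the cross-ratio of four image points equals the cross-ratio of the four parameter values, so that
\begin{equation*}
 \text{cr}\big(\mathfrak{p}(u_1), \mathfrak{p}(u_2), \mathfrak{p}(u_3), \mathfrak{p}(u_4)\big) = \text{cr}(u_1, u_2, u_3, u_4) = \text{cr}\big(\mathfrak{a}(u_1), \mathfrak{a}(u_2), \mathfrak{a}(u_3), \mathfrak{a}(u_4)\big).
\end{equation*}
Since the $\mathfrak{a}(u_i)$ are exactly the non-constant curvature spheres in oriented contact with the point spheres $\mathfrak{p}(u_i)$, this is the desired identity.

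The main obstacle will be to confirm that $u \mapsto [\mathfrak{p}(u)]$ is a genuine, nondegenerate conic parametrization, i.e.\ that $\mathfrak{w}_0, \mathfrak{w}_1, \mathfrak{w}_2$ are linearly independent. This degenerates precisely when $\lspan{\mathfrak{b}_0, \mathfrak{p}} = 0$, that is, when the constant curvature sphere is itself a point sphere and the curvature line collapses --- the limiting case in which the cyclide reduces to a circle. For a genuine Dupin cyclide one has $\lspan{\mathfrak{b}_0, \mathfrak{p}} \neq 0$, and since $\mathfrak{b}_0 \notin W$ while $\mathfrak{a}_0, \mathfrak{a}_1, \mathfrak{a}_2$ span $W$, the independence of the $\mathfrak{w}_i$ follows immediately; this is the only place where the non-degeneracy of the cyclide enters.
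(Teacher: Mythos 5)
Your proof is correct, and its algebraic core coincides with the paper's: the expression
$\mathfrak{p}(u)=\lspan{\mathfrak{b}_0,\mathfrak{p}}\,\mathfrak{a}(u)-\lspan{\mathfrak{a}(u),\mathfrak{p}}\,\mathfrak{b}_0$
for the point sphere of each contact element is exactly the formula the paper writes (for four discrete spheres) as
$\mathfrak{p}_j=\lspan{\mathfrak{p},\mathfrak{s}_1}\mathfrak{s}_{2j}-\lspan{\mathfrak{p},\mathfrak{s}_{2j}}\mathfrak{s}_1$.
Where you diverge is the concluding step. The paper normalizes and computes the mutual inner products directly,
$\lspan{\mathfrak{p}_i,\mathfrak{p}_j}=\lspan{\mathfrak{s}_{2i},\mathfrak{s}_{2j}}$
(in your notation they get multiplied by the constant $\lspan{\mathfrak{b}_0,\mathfrak{p}}^2$, which cancels in any cross-ratio expression), and is done, since the cross-ratio of concircular point spheres, respectively of cospherical curvature spheres, is a function of these inner products of light-cone lifts. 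You instead parametrize the whole curvature line and invoke the projective fact that cross-ratios on a conic agree with cross-ratios of parameters under a quadratic parametrization. Both routes are valid; the paper's is a two-line computation that needs no nondegeneracy discussion, while yours makes visible the conceptual reason behind the identity, namely that the point-sphere/curvature-sphere correspondence is the restriction to the conic of the fixed linear map $\mathfrak{x}\mapsto\lspan{\mathfrak{b}_0,\mathfrak{p}}\mathfrak{x}-\lspan{\mathfrak{x},\mathfrak{p}}\mathfrak{b}_0$, and such maps preserve cross-ratios.

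One small correction to your closing paragraph: it is not true that a genuine Dupin cyclide always has $\lspan{\mathfrak{b}_0,\mathfrak{p}}\neq 0$; a cone or a horn cyclide is a perfectly good Dupin cyclide, yet one of its curvature spheres is a point sphere (the singular point), and the corresponding curvature line collapses to that point --- the cyclide does not reduce to a circle. The correct justification for $\lspan{\mathfrak{b}_0,\mathfrak{p}}\neq 0$ is the hypothesis of the lemma itself: if $\lspan{\mathfrak{b}_0,\mathfrak{p}}=0$, then $\mathfrak{p}(u)\parallel\mathfrak{b}_0$ for all $u$, so there are no four distinct point spheres on that curvature line and no cross-ratio to speak of. (The same implicit assumption underlies the paper's proof, so flagging it explicitly is a genuine, if minor, improvement.)
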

\begin{proof}
Let $\Delta=\delta_1 \oplus_\perp \delta_2 \in G_{(2,1)}
\times G_{(2,1)}$ represent a Dupin cyclide;
further
let $\mathfrak{s}_1\in\delta_1\cap\LLL$
and $\mathfrak{s}_{2j}\in\delta_2\cap\LLL$ ($j=1,\dots,4$)
denote one, respectively four, curvature spheres
of different families,
that is, $f_j=s_1\oplus s_{2j}$ ($j=1,\dots,4$) yield
four contact elements of the cyclide.
The point spheres of the corresponding contact elements
are then given by
$$
  \mathfrak{p}_j
  = \lspan{\mathfrak{p},\mathfrak{s}_1}\mathfrak{s}_{2j}
  - \lspan{\mathfrak{p},\mathfrak{s}_{2j}}\mathfrak{s}_1.
$$
As
$\lspan{\mathfrak{p}_i,\mathfrak{p}_j}
 =\lspan{\mathfrak{s}_{2i},\mathfrak{s}_{2j}}$
we conclude that
$cr(p_1,p_2,p_3,p_4)=cr(s_{21},s_{22},s_{23},s_{24})$.
\end{proof}

\noindent As a consequence of Lemmas
\ref{lem_umbilic_envelope} and \ref{lem_same_cr} we then
obtain:
\begin{cor}\label{cor_umbilic_cr}
The face cross-ratios
of a discrete Ribaucour sphere congruence and
of a totally umbilic envelope coincide.
In particular, a totally umbilic envelope of a discrete
Ribaucour sphere congruence is isothermic if and only if
the sphere congruence is.
\end{cor}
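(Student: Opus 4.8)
The plan is to reduce the statement to a single elementary quadrilateral $(ijkl)$ and to realise both face cross-ratios as cross-ratios of the two families of curvature spheres of one and the same Dupin cyclide, namely the R-Dupin cyclide $\Delta=\delta_1\oplus_\perp\delta_2$ of the face; the equality then falls out of Lemma \ref{lem_same_cr}. Concretely, I would compare the cross-ratio $\mathrm{cr}(r_i,r_j,r_k,r_l)$ of the four Ribaucour spheres on the face with the cross-ratio $\mathrm{cr}(\mathfrak{p}_i,\mathfrak{p}_j,\mathfrak{p}_k,\mathfrak{p}_l)$ of the four point spheres of the totally umbilic envelope there.

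First I would invoke Lemma \ref{lem_umbilic_envelope}: since $r$ is a $(2,1)$-congruence, the four spheres $r_i,r_j,r_k,r_l$ span the $(2,1)$-plane $\delta_2$ and, lying on the conic $\mathbb{P}(\delta_2)\cap\LL$, are four curvature spheres of one family of $\Delta$. The same lemma identifies the four contact elements of the umbilic envelope on the face with contact elements of $\Delta$ along a single circular curvature line; along such a line exactly one curvature sphere is constant, so I expect a single sphere $s=\lspann{\mathfrak{s}}$ with $\mathfrak{s}\in\delta_1\cap\LLL$ such that $f_m=s\oplus r_m$ for $m\in\{i,j,k,l\}$. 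Thus the four envelope point spheres $\mathfrak{p}_m$ are the point spheres of four contact elements of $\Delta$ sharing the constant curvature sphere $s$, while the four Ribaucour spheres are exactly the non-constant curvature spheres in oriented contact with them. Applying Lemma \ref{lem_same_cr} to $\Delta$ then gives
\begin{equation*}
 \mathrm{cr}(\mathfrak{p}_i,\mathfrak{p}_j,\mathfrak{p}_k,\mathfrak{p}_l)
  = \mathrm{cr}(r_i,r_j,r_k,r_l),
\end{equation*}
which is the claimed equality of face cross-ratios.

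For the addendum I would use that isothermicity of a discrete net, and likewise of a discrete sphere congruence, is characterised by the factorisation of its face cross-ratios into an edge labelling that takes equal values on opposite edges of each quadrilateral. As this is a condition on the face cross-ratios alone, and these have just been shown to coincide for $r$ and for the point sphere map of the totally umbilic envelope, the factorisation exists for one exactly when it exists for the other; hence the envelope is isothermic if and only if $r$ is.

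The step I expect to be the main obstacle is the middle one: extracting from Lemma \ref{lem_umbilic_envelope} a single constant curvature sphere $\mathfrak{s}\in\delta_1$ shared by all four contact elements of the face, so that the four envelope contact elements are genuinely arranged along one curvature line of $\Delta$ as Lemma \ref{lem_same_cr} requires, and keeping track of the two families of curvature spheres (and the unavoidable ``up to orientation'' choices) so that $r$ is matched with the non-constant family and the point spheres with the constant-$s$ contact elements. Once the identification $f_m=s\oplus r_m$ is in place the cross-ratio equality is immediate, and the isothermic claim is then purely formal.
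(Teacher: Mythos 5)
Your proposal is correct and takes essentially the same route as the paper: there the corollary is stated precisely as an immediate consequence of Lemmas \ref{lem_umbilic_envelope} and \ref{lem_same_cr}, and the identification you spell out --- that on each face the umbilic envelope's contact elements are $s\oplus r_m$ with a single constant curvature sphere $s\in\delta_1\cap\LLL$, so that the point spheres lie on one circular curvature line of the R-Dupin cyclide while the $r_m$ are the non-constant curvature spheres in contact with them --- is exactly what Lemma \ref{lem_umbilic_envelope} asserts, so the step you flag as the main obstacle is already supplied by that lemma. Your factorization argument for the isothermic addendum is likewise the intended (and standard) one.
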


\noindent In view of the fact that a discrete flat front is
spanned by a pair of isothermic sphere congruences, and those
are discrete Ribaucour sphere congruences that each have one
of the discrete hyperbolic Gauss maps as totally umbilic
second envelopes, we state the following theorem
(see also Figure \ref{fig:flat_fronts}):
\begin{thm}\label{thm_flat_front_final}
The hyperbolic Gauss maps $h^\pm:\VV\to\PL$ of a discrete
flat front in hyperbolic space form a (totally umbilic)
Darboux pair.\par
Conversely, any orthogonal net of the cyclic circle
congruence associated to a Darboux pair $(h^+,h^-)$
with values in a $2$-sphere projects to a flat front in
the hyperbolic space bounded by the target sphere of $h^\pm$.
\end{thm}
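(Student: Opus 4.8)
The plan is to run both implications through the totally umbilic envelopes of the two spanning sphere congruences, transferring the flat-front data to the hyperbolic Gauss maps and back. For the forward direction I would start from the $\Omega$-surface description $f=s^+\oplus s^-$, with K\"onigs dual lifts $\mathfrak{s}^\pm=\mathfrak{f}^\mathfrak{p}\pm\mathfrak{t}$ valued in the parabolic complexes $l^\pm=\lspann{\mathfrak{q}\mp\mathfrak{p}}$, and observe that each $s^\pm$ is a discrete Ribaucour sphere congruence whose second envelope is precisely the hyperbolic Gauss map $h^\pm$. Indeed, $s^\pm\perp l^\pm$ says that every sphere of $s^\pm$ is in oriented contact with the fixed ideal sphere $\Sigma$ represented by $\mathfrak{l}^\pm=\mathfrak{q}\mp\mathfrak{p}$, so its point of contact is the point sphere of an envelope lying on $\Sigma$; being valued in the fixed sphere $\Sigma$, this envelope is totally umbilic (cf Lemma \ref{lem_umbilic_envelope}). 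Computing the contact point sphere explicitly gives $\mathfrak{h}^\pm_i=\mathfrak{s}^\pm_i+\mathfrak{l}^\pm=\mathfrak{f}^\mathfrak{p}_i\pm\mathfrak{t}_i+\mathfrak{q}\mp\mathfrak{p}$, which one checks is null, orthogonal to $\mathfrak{p}$ and to $\mathfrak{q}$, hence a point sphere on $\Sigma$.

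The key computational observation is that the constant ideal vector $\mathfrak{l}^\pm$ drops out on edges, so that $\mathfrak{h}^\pm_i-\mathfrak{h}^\pm_j=\mathfrak{s}^\pm_i-\mathfrak{s}^\pm_j$. Consequently the edge-parallel and opposite-diagonal relations satisfied by $\mathfrak{s}^\pm$ are inherited verbatim by the lifts $\mathfrak{h}^\pm$, and the mixed areas agree, $A(\mathfrak{h}^+,\mathfrak{h}^-)=A(\mathfrak{s}^+,\mathfrak{s}^-)=0$ for a flat front. Thus $\mathfrak{h}^+$ and $\mathfrak{h}^-$ are edge-parallel K\"onigs dual lifts of two nets on $\Sigma$. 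It then remains to read this off as a Darboux pair: that $h^\pm$ are both isothermic follows from Corollary \ref{cor_umbilic_cr}, since their face cross-ratios coincide with those of the isothermic congruences $s^\pm$; that they are Ribaucour is Corollary \ref{cor_adjacent_rib}; and the vanishing mixed area (K\"onigs duality of the lifts) supplies the constant Darboux parameter. Together these are exactly the defining data of a totally umbilic Darboux pair.

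For the converse I would run the same bookkeeping backwards. Given a Darboux pair $(h^+,h^-)$ valued in a $2$-sphere $\Sigma$, fix $\mathfrak{p}$ and a space form vector $\mathfrak{q}$ so that $\Sigma$ becomes the ideal boundary with orientations $\mathfrak{l}^\pm=\mathfrak{q}\mp\mathfrak{p}$; as a Darboux pair they are Ribaucour, enveloping a common Ribaucour sphere congruence $r$, so by Theorem \ref{thm_associated_rib_cyclic} the circles meeting $r$ orthogonally at $h^\pm$ form the associated cyclic circle congruence $\Gamma$ with $h^\pm$ among its orthogonal surfaces. Any orthogonal net $g$ of the resulting cyclic system then has the circles of $\Gamma$, with ideal endpoints $h^\pm$, as its normals, so $h^\pm$ are its two hyperbolic Gauss maps; in particular, by Corollary \ref{cor_adjacent_rib}, the two curvature sphere congruences of $g$ have $h^\pm$ as totally umbilic envelopes and are therefore valued in $l^\pm$. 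Recovering the space form projection of $g$ from suitably normalized representatives (schematically $\mathfrak{f}^\mathfrak{p}\sim\mathfrak{h}^++\mathfrak{h}^-$ and $\mathfrak{t}\sim\mathfrak{h}^+-\mathfrak{h}^-$, up to the additive ideal and point-sphere vectors), the isothermicity of $h^\pm$ gives that $\mathfrak{s}^\pm=\mathfrak{f}^\mathfrak{p}\pm\mathfrak{t}$ are isothermic by Corollary \ref{cor_umbilic_cr}, and the constant Darboux parameter forces $A(\mathfrak{s}^+,\mathfrak{s}^-)=0$, i.e.\ $K\equiv 1$; hence $g$ projects to a flat front in the hyperbolic space bounded by $\Sigma$.

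I expect the main obstacle to be the precise translation between the two languages --- ``edge-parallel K\"onigs dual light-cone lifts on $\Sigma$ with vanishing mixed area'' on one side and ``Darboux pair with constant parameter'' on the other --- together with the attendant normalization bookkeeping. The homogeneous scalings of $\mathfrak{h}^\pm$ that realize edge-parallel K\"onigs dual lifts are exactly what select members of the parallel family of orthogonal surfaces, and one must verify that these scalings are consistent around each face so that the reconstruction of $g$, and the conclusion $A(\mathfrak{s}^+,\mathfrak{s}^-)=0$, is well defined. By contrast, the isothermicity transfer of Corollary \ref{cor_umbilic_cr} and the Ribaucour relation of Corollary \ref{cor_adjacent_rib} are essentially ready to use, so the real work is isolating the constant-cross-ratio condition and matching it to the K\"onigs duality of the lifts.
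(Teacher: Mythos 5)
Your forward direction is essentially the paper's own argument: the same offset formula $\mathfrak{h}^\pm=\mathfrak{s}^\pm+\mathfrak{l}^\pm$, the same observation that the constant vectors $\mathfrak{l}^\pm$ drop out on edges so that cross-ratios and K\"onigs duality transfer from $\mathfrak{s}^\pm$ to $\mathfrak{h}^\pm$, and the same appeal to Corollary \ref{cor_umbilic_cr}; that half is sound.

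The converse is where there is a genuine gap. Starting from an arbitrary orthogonal net $g$, you must show that the lifts $\mathfrak{s}^\pm=\mathfrak{f}^\mathfrak{p}\pm\mathfrak{t}$ of its two enveloped sphere congruences valued in $l^\pm$ satisfy $A(\mathfrak{s}^+,\mathfrak{s}^-)=0$; this you simply assert (``the constant Darboux parameter forces $A(\mathfrak{s}^+,\mathfrak{s}^-)=0$''), and your closing paragraph concedes that the face-consistency of the scalings is ``the real work''. That step \emph{is} the content of the theorem, so as written the converse is not proved. The paper sidesteps the difficulty by reversing the logic: it normalizes K\"onigs dual lifts of the Darboux pair so that $\lspan{\mathfrak{h}^+,\mathfrak{h}^-}\equiv-2$, defines $\mathfrak{s}^\pm:=e^{\pm\rho}\mathfrak{h}^\pm+\mathfrak{l}^\pm$ --- which are K\"onigs dual \emph{by construction}, since constant offsets and reciprocal constant scalings preserve edge-parallelism --- and then checks that the Legendre maps $f=s^+\oplus s^-$ are flat fronts (by the characterization recalled from \cite{lin_weingarten}) and are orthogonal to $\Gamma$ because $\mathfrak{s}^\pm_i\pm\mathfrak{p}\in\gamma_i$ together with Fact \ref{fact_two_orth}; the orthogonal nets are thus constructed as flat fronts rather than dissected. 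Two further points in your route need repair. First, the congruences enveloped by $g$ and $h^\pm$ are horosphere congruences (the spanning isothermic congruences of the $\Omega$-surface), not ``the two curvature sphere congruences of $g$''; for a flat front these do not coincide, so the sentence invoking Corollary \ref{cor_adjacent_rib} is wrong as stated. Second, the Ribaucour sphere congruence enveloped by the Darboux pair $(h^+,h^-)$ is the \emph{constant} sphere $l^\pm$, which violates the non-degeneracy hypotheses behind Theorem \ref{thm_associated_rib_cyclic} and Corollary \ref{cor_rib_flat_Mlie} (in particular $\lspan{\mathfrak{r}_i,\mathfrak{r}_j}=0$ on edges); the paper flags exactly this degeneracy after its proof and replaces the connection (\ref{equ_a_rib_pair}) by the M-Lie inversions coming from $a_{ij}=\lspann{\mathfrak{h}^+_i-\mathfrak{h}^+_j}$. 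Your appeals to those results for cyclicity and for the Ribaucour relations between $g$ and $h^\pm$ therefore require this substitute argument.
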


\begin{proof}
First suppose $(\mathfrak{f}^\mathfrak{p},\mathfrak{t})$ to be
a flat front in a hyperbolic space described by a point sphere
complex $\mathfrak{p}$ and a space form vector $\mathfrak{q}$
with $\lspan{\mathfrak{q},\mathfrak{q}}=1$.
Using homogeneous coordinate vectors as above,
$\mathfrak{s}^\pm=\mathfrak{f}^\mathfrak{p}\pm\mathfrak{t}$ and
$\mathfrak{l}^\pm=\mathfrak{q}\mp\mathfrak{p}$ for
the enveloped isothermic sphere congruences $s^\pm$ and
their linear sphere complexes, respectively,
we obtain
$$
 \mathfrak{h}^\pm
 = \pm( \lspan{\mathfrak{p},\mathfrak{l}^\pm}\mathfrak{s}^\pm
      - \lspan{\mathfrak{p},\mathfrak{s}^\pm}\mathfrak{l}^\pm )
 = \mathfrak{s}^\pm + \mathfrak{l}^\pm
 = (\mathfrak{f}^\mathfrak{p}+\mathfrak{q}) \pm (\mathfrak{t}-\mathfrak{p})
$$
as (homogeneous coordinates of the) hyperbolic Gauss maps
of the front $(\mathfrak{f}^\mathfrak{p},\mathfrak{t})$.
Note that $\mathfrak{h}^\pm\perp\mathfrak{p},\mathfrak{q}$,
that is, they are point sphere maps taking values in the
sphere $\mathfrak{q}$ that defines the infinity boundary
of the ambient hyperbolic space.
As $\mathfrak{h}^\pm$ are just constant offsets
of $\mathfrak{s}^\pm$,
they share cross ratios with the enveloped isothermic sphere
congruences,
hence are also isothermic (cf Cor \ref{cor_umbilic_cr}),
and form a pair of K\"onigs dual lifts,
hence yield a totally umbilic Darboux pair $(h^+,h^-)$,
see \cite[Def~4.4, Thm~3.26]{bcjpr}.
Note that $\lspan{\mathfrak{h}^+,\mathfrak{h}^-}=-2\neq 0$,
showing that the obtained Darboux pair is non-isotropic.

Next we reverse the above construction:
thus let $(h^+,h^-)$ denote a discrete Darboux pair of point
spheres that take values in a $2$-sphere,
that is, $h^\pm\perp\mathfrak{p},\mathfrak{q}$,
where $\mathfrak{p}$ denotes the point sphere complex
and $\mathfrak{q}\in\mathbb{R}^{4,2}$ with
$\lspan{\mathfrak{q},\mathfrak{q}}=1$
defines a M\"obius geometric sphere.
Then $\mathfrak{l}^\pm:=\mathfrak{q}\mp\mathfrak{p}$ yield
the two (Lie geometric) oriented spheres of this
(M\"obius geometric) sphere.
By \cite[Def~4.4, Thm~3.26]{bcjpr} we may choose K\"onigs dual
lifts $\mathfrak{h}^\pm$ of $h^\pm$;
as $\lspan{\mathfrak{h}^\pm,\mathfrak{h}^\pm}\equiv 0$
we infer (as usual) that
$$
 \mathfrak{h}^\pm_i + \mathfrak{h}^\pm_j
 \perp \mathfrak{h}^\pm_i - \mathfrak{h}^\pm_j
  \parallel \mathfrak{h}^\mp_i - \mathfrak{h}^\mp_j,
  \enspace\text{hence}\enspace
 \lspan{\mathfrak{h}^+,\mathfrak{h}^-} \equiv -2
$$
without loss of generality, after a possible (constant)
rescaling of $\mathfrak{h}^+$ or $\mathfrak{h}^-$.
Now observe that
$$
 \mathfrak{s}^\pm :
 = e^{\pm\rho}\mathfrak{h}^\pm + \mathfrak{l}^\pm
  \enspace\text{for}\enspace
 \rho\in\mathbb{R}
$$
yield K\"onigs dual lifts of an isotropic Darboux pair
$(s^+,s^-)$ of isothermic sphere congruences,
see \cite[Def 4.4]{bcjpr}:
in particular, the Legendre map $f:=s^+\oplus s^-$ projects
to a flat front $(\mathfrak{f}^\mathfrak{p},\mathfrak{t})$
in the hyperbolic space(s) given by $\mathfrak{p}$ and
$\mathfrak{q}$.

Finally note that, by Theorem \ref{thm_associated_rib_cyclic},
the cyclic circle congruence associated to the Darboux pair
$(h^+,h^-)$ is given by
$$
 \Gamma = \gamma\times\gamma^\perp:
 \mathcal{V}\to G_{(2,1)}^{\mathcal{P}} \times G_{(2,1)}
  \enspace\text{with}\enspace
 \gamma_i :
 = \lspann{ \mathfrak{h}^+_i,\mathfrak{h}^-_i,\mathfrak{q} }.
$$
Clearly, $\mathfrak{s}^\pm_i\pm\mathfrak{p}\in\gamma_i$
at every vertex $i\in\mathcal{V}$;
consequently, $f$ is an orthogonal net of this cyclic
circle congruence for every $\rho\in\mathbb{R}$,
by Fact \ref{fact_two_orth}.
\end{proof}

\noindent We note that the Ribaucour sphere congruence
enveloped by the discrete Darboux pair $(h^+,h^-)$,
considered in Theorem~\ref{thm_flat_front_final},
is highly degenerate:
it is the constant sphere $l^+$ or $l^-$ representing
the infinity boundary of hyperbolic space.
Hence the description (\ref{equ_a_rib_pair}) of the linear
sphere complexes that subsequently induce the flat connection
for the circle congruence fails.
However, in this particular situation the K\"onigs dual lifts
of the Darboux pair give rise to the sought-after linear sphere
complexes by
\begin{equation*}
 a_{ij}
 = \lspann{ \mathfrak{h}^+_i-\mathfrak{h}^+_j }
 = \lspann{ \mathfrak{h}^-_i-\mathfrak{h}^-_j }.
\end{equation*} 
These then induce the M-Lie inversions that provide the flat
connection for $\Gamma$ and interchange the point spheres of
the orthogonal nets on adjacent circles of the congruence.

\begin{figure}[t]
\hspace*{-4.7cm}\begin{minipage}{5cm}
  \begin{overpic}[scale=.4]{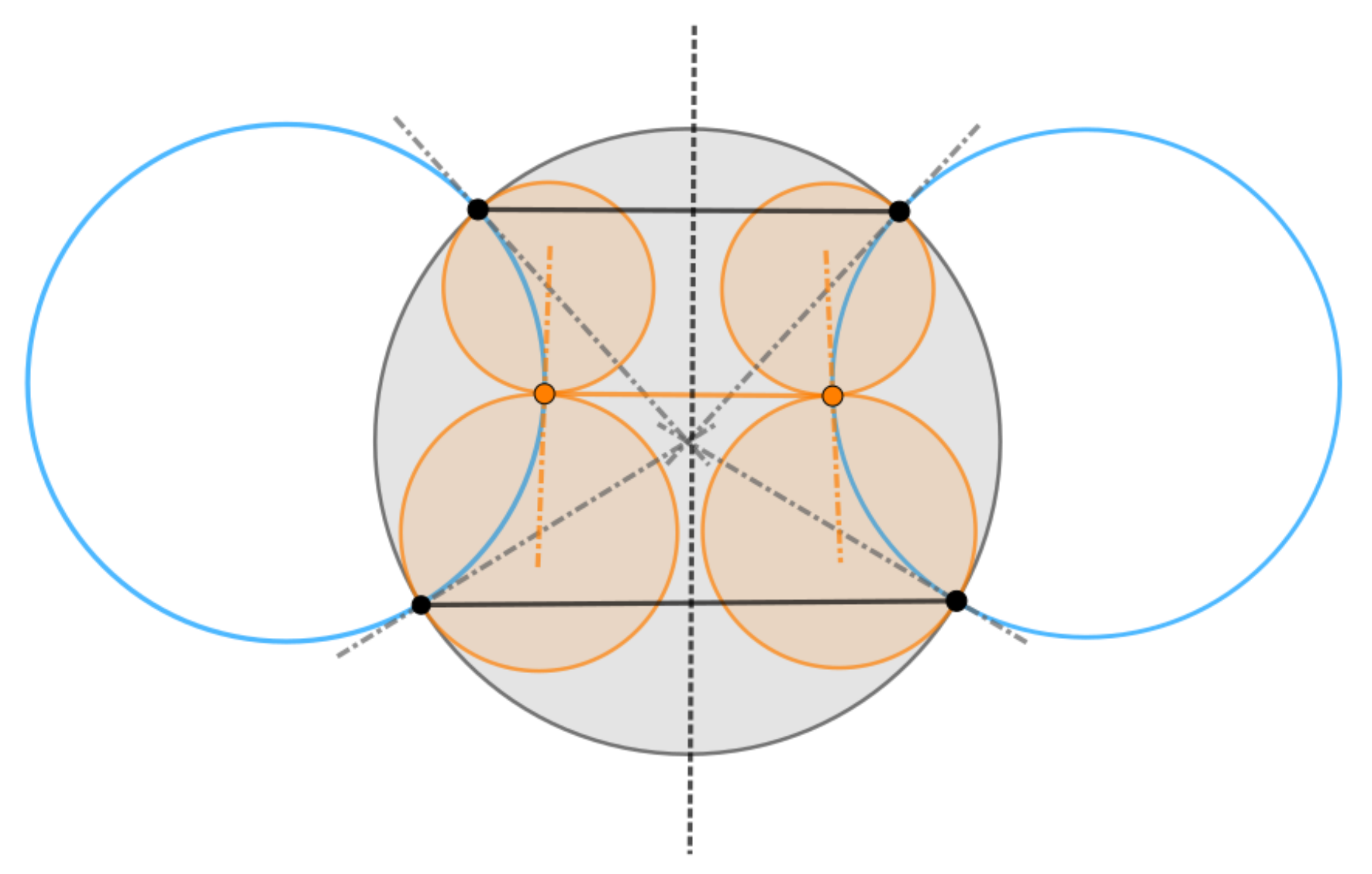}
  	\put(28,15){\color{gray}$h_i^-$}
    \put(68,15){\color{gray}$h_j^-$}
    \put(33,51){\color{gray}$h_i^+$}
    \put(63,51){\color{gray}$h_j^+$}
    \put(58,6){\color{gray}$l^\pm$}
    \put(52,58){\color{black}$\sigma_{ij}$}
    \put(39,22){\color{orange}{$f_i$}}
    \put(57.5,22){\color{orange}{$f_j$}}
    \put(34,39){\color{orange}{$s_i^+$}}
    \put(63,39){\color{orange}{$s_j^+$}}
    \put(30.5,27){\color{orange}{$s_i^-$}}
    \put(65.5,27){\color{orange}{$s_j^-$}}
    \put(3,38){\color{UniBlue}{$\gamma_i$}}
    \put(93,38){\color{UniBlue}{$\gamma_j$}}
   \end{overpic}
\end{minipage}

  \caption{Construction of a discrete flat front
   $f=\lspann{s^+, s^-}$ from the discrete cyclic circle
   congruence $\Gamma$ associated to a discrete Darboux pair
   of totally umbilic point sphere maps $h^\pm$
   with values in
   the fixed sphere $l^\pm$.
   Those become the point sphere maps of the two hyperbolic
   Gauss maps of the discrete flat front.}
  \label{fig:flat_fronts}
\end{figure}

\subsection{Discrete orthogonal systems with Dupin cyclides
 as coordinate surfaces}\label{subsect_dupin_cyclides}
Here we investigate discrete cyclic circle congruences that
stem from a discrete Ribaucour pair consisting of a discrete
Dupin cyclide and a totally umbilic surface. We shall prove
that these special circle congruences yield discrete cyclic
systems where all coordinate surfaces are discrete Dupin
cyclides.

An analogous result in the smooth case can be found in
\cite{darboux_ortho, MR0115134}.

\truepar
Thus, suppose that $f:\mathcal{V} \rightarrow \mathcal{Z}$ is a
discrete Dupin cyclide in the sense of \cite{discrete_channel},
that is, a discrete channel surface with respect to both
coordinate directions.
In particular, any space form projection yields a circular net
with circular curvature lines, so that the corresponding
curvature spheres are constant along them.

Furthermore, we consider a totally umbilic discrete
Ribaucour transform $u$ of $f$ with point spheres on the
constant sphere $n \in \LL$. For genericity, we will assume
that $n$ is not in oriented contact with the Dupin cyclide,
that is, $n\not\in f_i$ for all $i\in\VV$.

The discrete Ribaucour sphere congruence $r$ enveloped
by $(f, u)$ is then provided by the spheres in the contact
elements of $f$ that lie in the linear sphere complex 
$\mathbb{P}(\mathcal{L}\cap n^\perp)$.
Hence, $u$ can be expressed in terms of its (constant)
curvature sphere $n$ and the enveloped Ribaucour sphere
congruence $r$ by $u_i=\lspann{\mathfrak{n},\mathfrak{r}_i}$
(cf \cite{rib_coords, rib_families}).

\truepar
To avoid useless case analyses, a totally umbilic discrete
Legendre map with two families of circular curvature lines
will also be called a discrete Dupin cyclide.

\begin{thm}
Let $f$ be a discrete Dupin cyclide and $u$ a totally umbilic
Ribaucour transform of $f$. Then the orthogonal surfaces
of the discrete cyclic circle congruence associated to the
Ribaucour pair $(f,u)$ are discrete Dupin cyclides.

Furthermore, a suitable choice of contact elements for
the orthogonal surfaces yields a discrete cyclic system
so that all coordinate surfaces are discrete Dupin cyclides.
\end{thm}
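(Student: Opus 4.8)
The plan is to verify the discrete Dupin cyclide condition --- two curvature sphere congruences, each depending on a single coordinate --- for every coordinate surface, using the flat connection of M-Lie inversions $\sigma$ attached to $\Gamma$ by Theorem \ref{thm_flat_connection} and Corollary \ref{cor_rib_flat_Mlie}. I would set coordinates $(m,k)$ on $\VV$ and write the Lie cyclide of $f$ as $\delta_1\oplus_\perp\delta_2$, so that its curvature spheres are $\mathfrak{a}_m\in\delta_1\cap\LLL$ (depending only on $m$) and $\mathfrak{b}_k\in\delta_2\cap\LLL$ (depending only on $k$), with $f_{(m,k)}=\lspann{\mathfrak{a}_m,\mathfrak{b}_k}$ and $\lspan{\mathfrak{a}_m,\mathfrak{b}_k}=0$; the totally umbilic envelope $u$ carries the constant curvature sphere $\mathfrak{n}\in\LLL$, $u_{(m,k)}=\lspann{\mathfrak{n},\mathfrak{r}_{(m,k)}}$. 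Since the Ribaucour sphere $\mathfrak{r}_{(m,k)}$ lies in both null planes $f_{(m,k)}$ and $u_{(m,k)}$, nullity gives at once $\lspan{\mathfrak{b}_k,\mathfrak{r}_{(m,k)}}=\lspan{\mathfrak{a}_m,\mathfrak{r}_{(m,k)}}=\lspan{\mathfrak{n},\mathfrak{r}_{(m,k)}}=0$. With $\mathfrak{a}_{ij}=\mathfrak{r}_j-\mathfrak{r}_i$ from (\ref{equ_a_rib_pair}) this shows that the connection fixes $\mathfrak{n}$ across every edge, fixes $\mathfrak{b}_k$ along every $m$-edge of row $k$, and fixes $\mathfrak{a}_m$ along every $k$-edge of column $m$. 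As $f$ and $u$ are orthogonal surfaces of $\Gamma$ (Theorem \ref{thm_associated_rib_cyclic}), the same three spheres are orthogonal to the corresponding circles: $\mathfrak{n}$ to all circles, $\mathfrak{b}_k$ to the whole $k$-th row, $\mathfrak{a}_m$ to the whole $m$-th column.

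For the first assertion I would prove that each orthogonal surface $f^{3,\lambda}$ is a channel surface in both directions. Fixing a row $k$, the spheres orthogonal to every circle of that row form a pencil which, via (\ref{equ_orth_complex}), is spanned by $\mathfrak{b}_k$ and $\mathfrak{n}$; by the previous paragraph the connection fixes each of its members along the row. Given the chosen point samples $p^\lambda$, let $S^\lambda_{(m,k)}$ be the unique member of this pencil through $p^\lambda_{(m,k)}$. Then $\sigma_{(m,k)(m+1,k)}$ fixes $S^\lambda_{(m,k)}$, carries $p^\lambda_{(m,k)}$ to $p^\lambda_{(m+1,k)}$ and preserves oriented contact, so $S^\lambda_{(m,k)}$ passes through $p^\lambda_{(m+1,k)}$ as well; uniqueness forces $S^\lambda_{(m,k)}=S^\lambda_{(m+1,k)}=:S^\lambda_k$. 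Thus $S^\lambda_k$ lies in every contact element of the $m$-row and is the constant enveloped sphere of $f^{3,\lambda}$ in the $m$-direction; the column argument yields a constant $\tilde{S}^\lambda_m$ in the $k$-direction. Hence $f^{3,\lambda}_{(m,k)}=\lspann{S^\lambda_k,\tilde{S}^\lambda_m}$ has curvature spheres depending on a single coordinate each, which is precisely the discrete Dupin cyclide condition; the extreme samples recover $S^+_k=\mathfrak{b}_k$ (giving $f$) and $S^-_k=\mathfrak{n}$ (giving $u$).

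For the second assertion the orthogonal surfaces are already Dupin, and it remains to adjoin two families of discrete channel surfaces that are Dupin as well. At the initial circle I would pick the two orthogonally intersecting enveloped spheres $\mathfrak{r}^1_{i_0},\mathfrak{r}^2_{i_0}\in\gamma_{i_0}^\perp$ to be curvature spheres of the two R-Dupin cyclides carried by the congruence --- genuine Dupin cyclides here because $f$ is --- i.e.\ aligned with the distinguished $\mathfrak{a}$- and $\mathfrak{b}$-families. Transporting them by the connection gives $f^1,f^2$, which are channel along the circles by construction. The key observation is that the $m$- and $k$-curvature circles $\gamma'_{k,\lambda},\gamma''_{m,\lambda}$ produced in the first part are shared between the orthogonal surfaces and the channel surfaces; running the row/column argument again, now with the pencils of spheres containing these curvature circles, the chosen enveloped spheres transport to spheres that are constant along the second coordinate and contain $\gamma'_{k,\lambda}$, respectively $\gamma''_{m,\lambda}$. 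This provides $f^1,f^2$ with a constant Lie cyclide along every ribbon in the second direction too, so all three coordinate surface families are discrete Dupin cyclides.

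The step I expect to be delicate is this last one: one must check that the enveloped spheres of the two channel families can be chosen simultaneously orthogonal and so that their transport stays constant along the second coordinate --- equivalently, that the two distinguished $2$-parameter families of constant Lie cyclides of the cyclic system coincide with the R-Dupin cyclides of the congruence. By contrast the first part is comparatively clean, since the whole mechanism rests on the factorised orthogonality and the edgewise fixing of $\mathfrak{a}_m,\mathfrak{b}_k,\mathfrak{n}$, which follow immediately from the nullity of the contact elements $f_{(m,k)},u_{(m,k)}$ together with Theorem \ref{thm_associated_rib_cyclic}.
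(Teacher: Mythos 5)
Your strategy for the first assertion is viable and genuinely different from the paper's (which proves that the ``vertical'' point sphere nets are multi-circular and then transports contact elements by the resulting M-Lie inversions between non-adjacent coordinate lines), but as written its central step fails. In the Lie quadric, the span $\lspann{\mathfrak{b}_k,\mathfrak{n}}$ of two null vectors with $\lspan{\mathfrak{b}_k,\mathfrak{n}}\neq 0$ has signature $(1,1)$, so its only null lines are $\lspann{\mathfrak{b}_k}$ and $\lspann{\mathfrak{n}}$ themselves: your ``pencil'' contains exactly two spheres, and generically neither passes through $p^\lambda_{(m,k)}$, so ``the unique member of this pencil through $p^\lambda_{(m,k)}$'' does not exist. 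The correct object is the M\"obius pencil, i.e.\ the null cone of the $3$-space $\lspann{\mathfrak{b}_k,\mathfrak{n},\mathfrak{p}}$: note that $\mathfrak{p}$ itself satisfies the linear orthogonality condition (\ref{equ_orth_legendre}) for every circle, so the space of vectors orthogonal to all circles of a row is at least this $3$-space. Through a generic point of a circle that corrected pencil contains exactly one sphere up to orientation (cf Fact \ref{fact_orth_cont_els}). Since your connection fixes $\mathfrak{b}_k$ and $\mathfrak{n}$, and, consisting of M-Lie inversions, also fixes $\mathfrak{p}$, it fixes this $3$-space pointwise, and your transport argument then goes through after a consistent choice of orientations. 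So the first part is repairable, but the repair (adjoining $\mathfrak{p}$, equivalently passing to unoriented spheres) is essential, not cosmetic. You should also note that identifying ``curvature spheres depending on a single coordinate each'' with the paper's notion of discrete Dupin cyclide (constant Lie cyclide per ribbon in both directions) uses an equivalence from \cite{discrete_channel}, which deserves a citation.

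The second assertion is not established. Your prescription for the initial enveloped spheres is internally inconsistent: you require $\mathfrak{r}^1_{i_0},\mathfrak{r}^2_{i_0}\in\gamma^\perp_{i_0}$, i.e.\ spheres \emph{containing} the circle $\Gamma_{i_0}$, to be curvature spheres of the R-Dupin cyclides; but no curvature sphere of an R-Dupin cyclide contains the circle --- for instance $\mathfrak{b}_k$ is not orthogonal to the point sphere $\mathfrak{p}^-$ of $u$, hence $\mathfrak{b}_k\notin\gamma^\perp_{i_0}$. Moreover you explicitly leave open the decisive verification, namely that the transported enveloped spheres of the channel families remain constant along the second coordinate direction. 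The paper settles exactly this point with a different choice: at an initial vertex it takes, for each of the two circular curvature lines of $f$ through $p_0$, the unique (unoriented) sphere containing that curvature line and intersecting $f$ orthogonally along it; these two spheres are tangent to $\Gamma_0$ and mutually orthogonal, and --- this is the key observation --- each is invariant under all the M-Lie inversions $\sigma_{mn}$ attached to edges of its curvature line, so its propagation is constant along the corresponding direction and the vertical coordinate surfaces become Dupin cyclides. Without this invariance argument, or an equivalent one, the ``furthermore'' part of the theorem remains unproven.
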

\begin{figure}[t]
\hspace{-3.9cm}\begin{minipage}{6cm}
  \begin{overpic}[scale=.7]{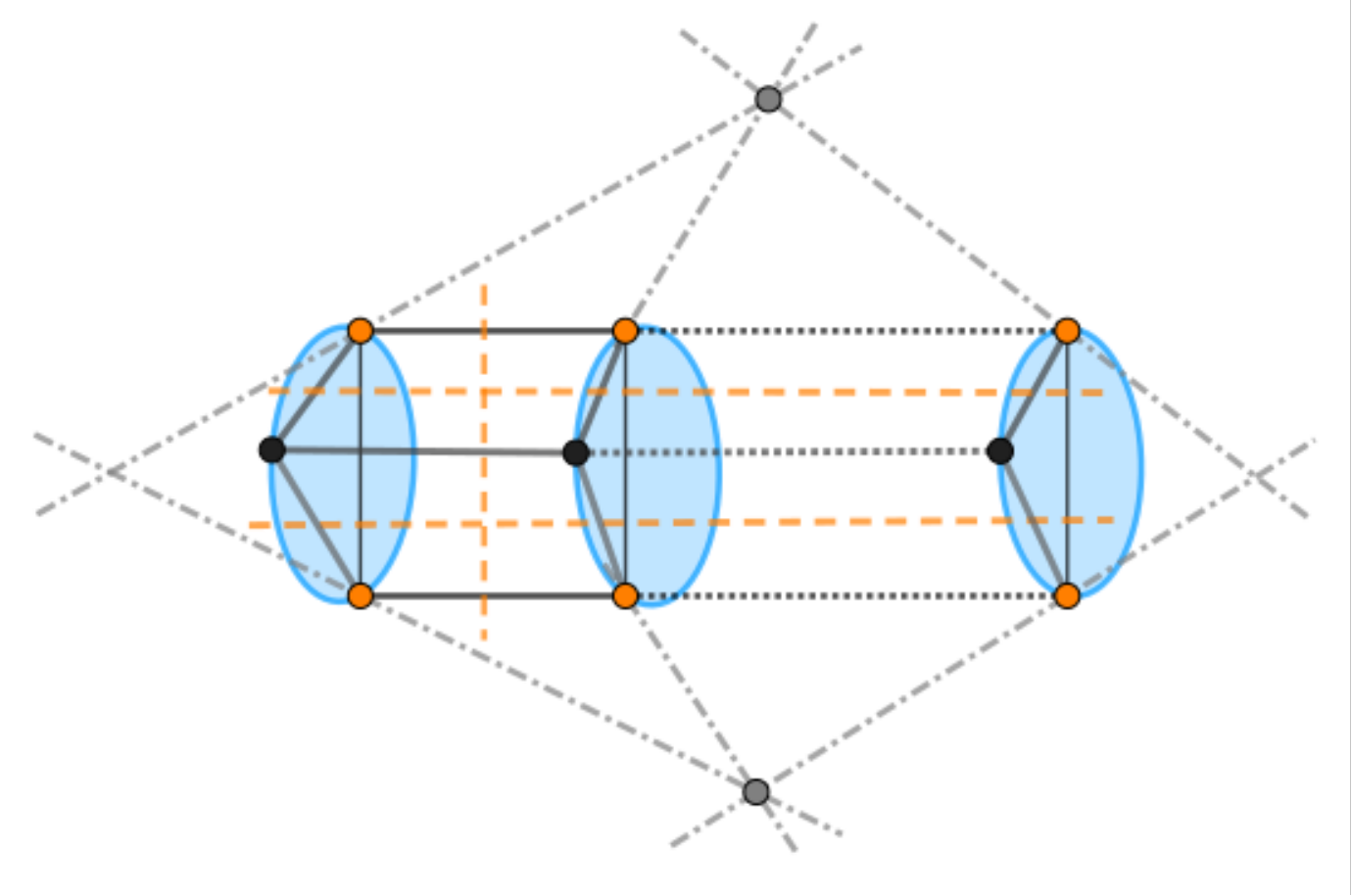}
  	\put(29,16){\color{gray}$u_m$}
    \put(51,16){\color{gray}$u_n$}
    \put(75,16){\color{gray}$u_z$}
    \put(29,47){\color{gray}$f_m$}
    \put(52,47){\color{gray}$f_n$}
    \put(72,48){\color{gray}$f_z$}
    \put(50,6){\color{black}$n$}
    \put(52,58){\color{black}$s$}
    \put(37.1,44){\color{orange}{$\sigma_{mn}$}}
    \put(64,38){\color{orange}{$\tilde{\sigma}$}}
  \end{overpic}
 \end{minipage}
 \\
 \begin{minipage}{5cm}
\hspace*{-2cm}\includegraphics[scale=0.35]{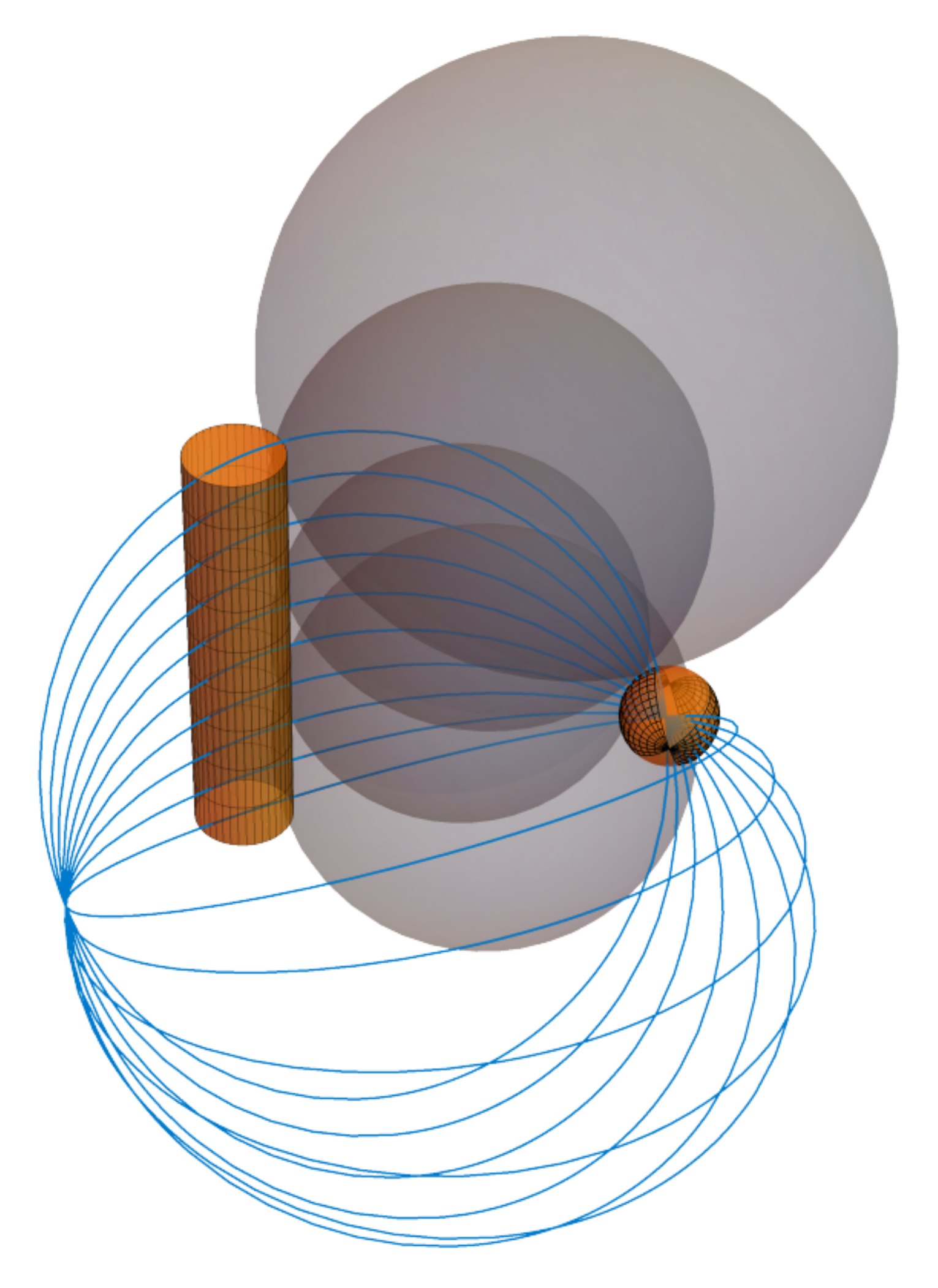}
 \end{minipage}
 \begin{minipage}{0.1cm} \ \ \ \end{minipage}
 \begin{minipage}{5cm}
\hspace*{-1.5cm}\includegraphics[scale=0.28]{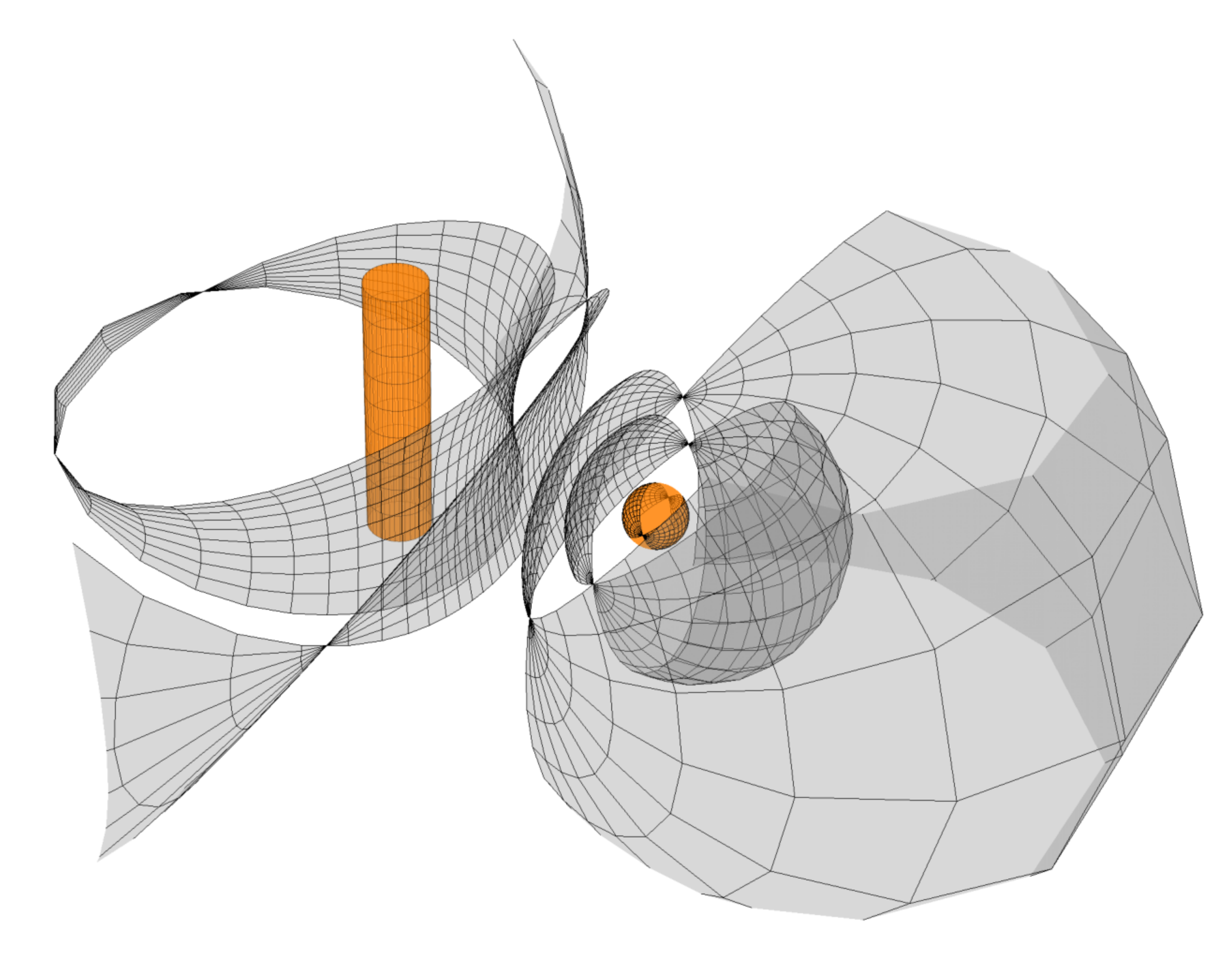}
\end{minipage}
 \caption{A discrete triply orthogonal system associated
  to the Ribaucour pair of a discrete cylinder and a totally
  umbilic spherical transform (orange). Equipped with suitable
  contact elements, all coordinate surfaces of the associated
  discrete cyclic system become discrete Dupin cyclides.}
 \label{fig:dupin_system}
\end{figure}
\begin{proof}
Let $(f,u): \VV \rightarrow \mathcal{Z} \times \mathcal{Z}$
be a discrete Ribaucour pair as in the assumption.
For a proof we pursue the following line of arguments:
\begin{itemize}
\item along each coordinate line of the given Dupin cyclide
 $f$,
 the spheres of the enveloped Ribaucour sphere congruence
 are curvature spheres of a constant Dupin cyclide and,
 therefore, the flat connection for the associated cyclic
 circle congruence is of a special type
 (cf Cor~\ref{cor_rib_flat_Mlie} and 
 Thm~\ref{thm_associated_rib_cyclic});
\item any underlying point sphere map of a discrete
 cyclic system, obtained from a sampling of an
 initial circle, has a special property; namely, its
 ``vertical coordinate surfaces'' are multi-circular nets
 \cite{multinet};
\item hence the M-Lie inversions that relate adjacent contact
 elements of the associated cyclic systems are constant along
 each coordinate ribbon of these
 ``vertical coordinate surfaces'';
\item in particular, propagation of the contact elements of
 the Dupin cyclide $f$ preserves circularity of the curvature
 lines, as well as the fact that the corresponding curvature
 spheres are constant along them;
 \item thus any orthogonal
 surface is a discrete Dupin cyclide and all orthogonal
 trajectories of the point sphere map of the associated
 discrete cyclic system are concircular;
\item moreover, a suitable choice for the contact elements
 of the vertical coordinate surfaces guarantees that those
 are also discrete Dupin cyclides in the Lie sphere 
 geometric sense, namely, discrete Dupin
 cyclides orthogonal to the Dupin cyclides formed by the
 Ribaucour spheres.
\end{itemize}

\truepar
To begin with, we investigate the Ribaucour pair $(f,u)$
and its Ribaucour sphere congruence along a fixed coordinate
line of $\mathcal{V}$. Since the spheres of the contact
elements of $f$ along
each coordinate line all lie in a $3$-dimensional projective
subspace of $\mathbb{P}(\mathbb{R}^{4,2})$,
the spheres of the enveloped Ribaucour congruence along
each coordinate line are curvature spheres of another (constant) Dupin cyclide
(see also \cite{rib_families}). 

Since $f$ is
a discrete Dupin cyclide, along this coordinate line,
all contact elements $F:=\{f_m, f_n, \cdots, f_z \}$
share a common curvature sphere; we denote this sphere
by $s$.
Furthermore,
all contact elements $U:=\{u_m, u_n, \cdots, u_z \}$ intersect
in the constant sphere $n$
(for a schematic see Figure \ref{fig:dupin_system} \emph{top}).
Thus, these two families of contact elements provide
two curvature lines on the Dupin cyclide obtained by
the spheres $R:=\{r_m, r_n, \cdots, r_z\}$ of the enveloped
Ribaucour sphere congruence along this coordinate line.

Therefore, additionally to the M-Lie inversions that
provide the flat connection for the Ribaucour pair (cf Cor
\ref{cor_rib_flat_Mlie}), we obtain further M-Lie inversions:
let $(f_m, u_m)$ and $(f_z, u_z)$ be two
arbitrary pairs of contact elements, then the four 
corresponding point spheres are concircular. Hence, 
there exists an M-Lie inversion $\sigma_{mz}$ so that
\begin{equation}\label{equ_mlie_multi}
 \sigma_{mz}(f_z)=f_m, \ \sigma_{mz}(u_z)=u_m, \ 
  (\text{and } \sigma_{mz}(r_z)=r_m).
\end{equation}
Thus, this M-Lie inversion $\sigma_{mz}$ also exchanges
the circles $\Gamma_m$ and $\Gamma_z$ of the orthogonal cyclic
circle congruence $\Gamma$ associated to the Ribaucour pair 
(see Thm~\ref{thm_associated_rib_cyclic});
hence, also the
point spheres of its other orthogonal surfaces.

\truepar
Next we investigate the underlying point sphere map
of a ``vertical coordinate surface'' of the associated
cyclic system:
using the above M-Lie inversions this is given by a
sampling of an initial circle, say $\Gamma_m$.
We aim to see that this point sphere map is
multi-circular in the sense of \cite{multinet},
that is,
every coordinate quadrilateral is circular,
not just every elementary coordinate quadrilateral.

We recall from \cite{multinet, rib_families}
that multi-circular point sphere nets may be characterized
by the existence of M-Lie inversions that interchange
corresponding point spheres of any two coordinate
lines in one family,
as those of (\ref{equ_mlie_multi}) do.
Thus we obtain multi-circularity of the point sphere net;
and, by symmetry, similar M-Lie inversions
$\tilde{\sigma}_{ij}$ that interchange the point spheres
of coordinate lines in the other family
(as illustrated in Fig \ref{fig:dupin_system} \emph{top}).

\truepar
For corresponding coordinate lines of two adjacent orthogonal
surfaces of the cyclic circle congruence,
the (constant) M-Lie inversion $\tilde{\sigma}$ that arises
from the multi-circularity of the vertical surface may be
used to transport contact elements:
it clearly maps a curvature sphere that is constant along
the coordinate line on one orthogonal surface to an alike
curvature sphere of the other.
Hence all orthogonal surfaces are, with $f$, Dupin cyclides.

Furthermore, we learn that the point spheres along any
coordinate line of the underlying point sphere map of the
discrete cyclic system are circular.

\truepar
Finally, to obtain a discrete cyclic system with discrete
Dupin cyclides as coordinate surfaces,
it remains to equip the vertical coordinate surfaces of the
 underlying point sphere map with suitable contact elements,
that is, to complement each contact element of an orthogonal
 surface of the cyclic circle congruence by two mutually
 orthogonal contact elements that are tangent to the
 corresponding circle of the congruence.

To do so, we fix one initial contact element $f_0$ of the
Dupin cyclide $f$ and consider the two (circular) coordinate
lines of the underlying point sphere map of $f$ that pass 
through the point sphere $p_0\in f_0$:
each is contained in precisely one
(unoriented, M\"obius geometric) sphere
that is tangent to the circle $\Gamma_0$
or, equivalently,
that intersects the Dupin cyclide $f$ orthogonally
along the given curvature line.
By construction, these two spheres are orthogonal,
and choosing an orientation for each of them yields
suitable contact elements.

Further observe that the sphere constructed from one
(circular) curvature line of $f$ is invariant under
the corresponding M-Lie sphere transformations $\sigma_{mn}$
along that curvature line.
Consequently, propagating the just constructed contact
elements at $p_0$ yields a discrete cyclic system that
consists of Dupin cyclides in the Lie geometric sense.
\end{proof}

\noindent We remark that this construction can be generalized
to Ribaucour pairs of two Dupin cyclides, that also lead to
associated cyclic systems with Dupin cyclides as coordinate
surfaces.
Note that this yields a ``totally cyclic system'',
where each coordinate direction provides a cyclic circle
congruence.
Details regarding this construction in the smooth case
can be found in \cite{totallycyclic}.
A suitable sampling then yields a construction
in the discrete case.

\truepar \
\bibliography{mybib}

\truepar \ 

\truepar
\begin{minipage}{7cm}
 \textbf{Udo Hertrich-Jeromin} \\
 \textbf{Gudrun Szewieczek}
 \\ TU Wien, 
 \\ Wiedner Hauptstra\ss e 8-10/104,
 \\ 1040 Vienna, Austria
 \\ udo.hertrich-jeromin@geometrie.tuwien.ac.at
 \\ gudrun.szewieczek@geometrie.tuwien.ac.at
\end{minipage}
\end{document}